\newcommand{\eps}{\varepsilon}
\theoremstyle{plain}
\newtheorem{theorem}{Theorem}[section]
\newtheorem{observation}[theorem]{Observation}
\newtheorem{lemma}[theorem]{Lemma}
\newtheorem{corollary}[theorem]{Corollary}
\newtheorem{definition}[theorem]{Definition}
\newtheorem{question}[theorem]{Question}
\theoremstyle{definition}
\newtheorem{remark}[theorem]{Remark}
\newcommand{\omri}[1]{\textcolor{purple}{Omri: #1}}
\newcommand{\madhu}[1]{\textcolor{blue}{Madhu: #1}}
\newcommand{\elchanan}[1]{\textcolor{red}{Elchanan: #1}}
\newcommand{\none}{{\xspace\bot\xspace}}
\newcommand{\COUNT}{{\textsc{count}\xspace}}
\newcommand{\origin}{{\textsc{origin}\xspace}}
\newcommand{\N}{\mathbb{N}}
\newcommand{\E}{\mathbb{E}}
\newcommand{\cP}{\mathcal{P}}
\newcommand{\run}{\mathrm{run}}
\newcommand{\Bin}{\text{Bin}}
\newcommand{\Geom}{\text{Geom}}
\newcommand{\barg}{\overline{g}}
\newcommand{\barZ}{\overline{Z}}
\newcommand{\barCOUNT}{\overline{\COUNT}} 
\newcommand{\barorigin}{\overline{\origin}} 
\newcommand{\dcpl}{(\barg,\barZ,\barCOUNT,\barorigin)}
\title{Information Spread with Error Correction}
\author{Omri Ben-Eliezer\thanks{Center of Mathematical Sciences and Applications,  Harvard University. Email: \texttt{omribene@cmsa.fas.harvard.edu}} \and Elchanan Mossel\thanks{Massachusetts Institute of Technology. Email: \texttt{elmos@mit.edu} . Supported in part by a Simons 
Investigator Award, Vannevar Bush Faculty Fellowship ONR-N00014-20-1-2826 and NSF awards 
DMS-2031883 and CCF 1918421} 
\and Madhu Sudan\thanks{School of Engineering and Applied Sciences, Harvard University, Cambridge, Massachusetts, USA. Supported in part by a Simons 
Investigator Award. Email: \texttt{madhu@cs.harvard.edu}.}%
}
\date{}
\begin{document}

\maketitle

\begin{abstract}
We study the process of information dispersal in a network with communication {\em errors} and local {\em error-correction}. Specifically we  consider a simple model where a single bit of information initially known to a single source is dispersed through the network, and communication errors lead to differences in the agents' opinions on this information. 
Naturally, such errors can very quickly make the communication completely unreliable, and in this work we study to what extent this unreliability can be mitigated by local {\em error-correction} where nodes periodically correct their opinion based on the opinion of (some subset of) their neighbors. 
We analyze how the error spreads in the ``early stages'' of information dispersal by monitoring the average opinion, i.e., the fraction of agents that have the correct information among all nodes that hold an opinion at a given time. Our main results show that even with significant effort in error-correction, tiny amounts of noise can lead the average opinion to be nearly uncorrelated with the truth in early stages.
We also propose some local methods to help agents gauge when the information they have has stabilized. 
\end{abstract}

\section{Introduction}

Societal knowledge is acquired by a collection of complex intertwined distributed processes involving agents that learn background facts, gather new data, make inferences, and communicate this to other agents. Each one of these steps is susceptible to noise --- so much so that in the absence of error-correction mechanisms much of this knowledge could be totally flawed. Of course some natural error-testing and correcting mechanisms are built into this network of complex processes and this brings in large amounts of reliability to existing knowledge. But much of the design and adoption of the mechanisms is based, at best, on empirical analysis and very little theory captures these processes.

With this broader context in mind, in this paper we explore one corner of this space: The simple spread of information in a society. 
We consider benign errors in this setting where errors creep in when agents communicate a bit of information along. We consider a natural error correction processes where agents simply check their information against that of their neighbors and (with some probability) revise their information. This simple model allows us to ask questions of the form: How much error-correction ``effort'' is necessary/sufficient to ``protect'' from a certain amount of error? How does the structure of the network influence the prevalence of error? How can individual agents aim to gain confidence in the correctness of the information they possess, at any given moment of time?

We turn to our specific model next but as a teaser we describe some of the qualitative results we obtain in our model. We find that expansion in networks can be harmful to correctness: Specifically in expanding networks while information is still spreading, most people that hold any information actually hold information uncorrelated with the truth! And this holds even when the fraction of errors is tiny and the effort in checking/correcting is overwhelming. But on the positive side, the timing information that agents hold --- ``when did they first hear about this information?'' and ``when did they last change their opinion'' --- can be very useful in helping the agents assessing the quality of their information. With these potential results in mind we turn to our specific model.

\subsection{Our Model}\label{ssec:intro-model}

We describe our model formally in Section~\ref{sec:model}. Here we introduce the model less formally but explain some of the choices. Two key principles we follow are: (1) Keep the model simple focusing on the errors and correction; and 
(2) Avoid strategic considerations by assuming agents believe the information they have at each point and transmit it to their neighbors. (I.e., the agents are myopic.) 

These choices may make the model more ``optimistic'' than some natural settings, but note that this makes the negative results even more significant.

Turning to our model, we consider a locally finite graph $G$ ($G$ could be finite or countably infinite) where a root vertex $r$ is given a bit of ``true'' information $f_0(r) \in \{0,1\}$, and the eventual goal is to disseminate this bit to the rest of the graph. All other nodes know they don't know this bit initially, indicated by setting $f_0(v) = \none$ for every $v \ne r$. We refer to such a node as one with no ``opinion'' on the truth. 

While our model allows the communication graph to be arbitrary, actual communication only happens in a {\em directed spanning tree} $T$ whose edge directions are from the root outward. (We note that keeping the communication graph acyclic is crucial to our analysis --- we elaborate further on this choice below.) There are two natural choices for selecting this tree $T$. The tree could be somehow fixed in advance, independent of the actual information spread process. We refer to this as the {\em offline} version, and this is the version we pursue in most of this paper. Another natural choice, which we call the {\em online} version, would be to allow the tree to grow randomly, potentially correlated with the information spread process. We discuss this option briefly in Section~\ref{sec:c_less_1}. 
In our main model, $T$ is a \emph{breadth-first} (BFS) tree\footnote{
A \emph{breadth first (BFS) tree} in a graph $G$ with root $r$ is any tree $T$ satisfying the following: for every non-node root $v$, the distance $d_G(r, v)$ between the root $r$ and $v$ in $G$ is equal to the distance $d_T(r,v)$ between them in $T$.}  given in advance, where the path length from the root to any node $v$ equals to the distance between them in the original graph. (Our analysis can easily extend to the setting of non BFS trees, with the caveat that any quantitive aspects depend on a certain expansion measure of the tree $T$ rather than the graph $G$. These two turn out to coincide when the tree is a BFS tree.) 
Communication happens at discrete time steps along this tree, and in particular at every time step an unopinionated node $v$ checks whether its parent in $T$
has an opinion; if it does, $v$ forms an opinion based on the opinion of the parent.

Now we turn to errors. {\em Errors} enter only when a previously unopinionated node is forming an opinion, by talking to its parent in the rooted tree. The probability of this error is one of the basic parameters of our model. The second parameter focuses on the {\em error-correction} where in each step with some probability, an opinionated node checks its own opinion with that of its parent and updates to the parent's opinion if they disagree. The probability of this error-correction (``fact-checking'') operation is the second basic parameter in our model and together these are the main two parameters we consider. 
We stress that no errors happen during this error correction phase. Obviously, we could have allowed some errors to creep also during a legitimate update step where a node with an opinion differing from its parent updates to the wrong value, but this kind of error can simply be modelled by a lower probability of an error-correction step. But crucially if a parent and child agree on an opinion, the child does {\em not} update to a disagreeing opinion due to communication error. This choice simplifies the underlying process. In particular once all nodes have an opinion equal to that of the root, then no further updates happen. This is the unique fixed-point in the space of configurations and a goal could be to understand when this point is reached, but most of our analysis aims to understand the opinions before this state is reached. 

As mentioned, when designing the model we aimed for simplicity and fast convergence. Below we describe why the model checks these two boxes.
\begin{itemize}
\item \textit{Simplicity.}
Having directed cycles in the information graph can lead to a persistence of errors. This is the phenomenon that lies at the heart of many studies in ``opinion dynamics'' and forces agents to start having beliefs in the truth.
By picking a model which only communicates information in the same direction -- away from the root -- we shy away from this more-studied phenomenon which would add substantial complexity to our model. We could have picked DAGs rather than trees, or other sophisticated error correction mechanisms, but these again lead to more technical challenges and do not seem to provide a substantially different picture on a conceptual level. 
%
Similarly the choice of working with the offline version is based on the desire for simplicity. In an online version where the tree is allowed to grow randomly, the structure and parameters of the tree itself can be complex to analyze. We elaborate more on the challenges in Section~\ref{sec:c_less_1}. Staying with the offline model takes another potential source of complexity out of our model.

\item \textit{Fast convergence.} As we demonstrate (see Section~\ref{sec:slower}), slight modifications in the process -- for example, allowing a node to listen to all of its neighbors in the graph or in the tree during error correction, rather than to a single fixed parent in the tree -- may lead the convergence rate of the process to be exponentially slower, which is arguably less realistic. 
\end{itemize}

\subsection{Our analysis}\label{ssec:intro-analysis}

We describe our results more formally in Section~\ref{sec:results}. Here we give an informal description of the nature of questions we are able ask and answer in our model. 

In our simple model, information propagates in two ``waves''. The first wave is the {\em rumor frontier} where nodes become aware of the existence of a new fact. The second wave is the {\em truth frontier} which is the connected subtree including the root where all nodes agree in opinion with the root. Both frontiers make monotone progress. The former proceeds deterministically (due to our model choice) whereas the latter is probabilistic and the main analytic task is to determine how fast this frontier moves. Furthermore to study this speed it suffices to look at just the directed path and we will focus on this case here (as well as in most of the paper). 

It is easy to make a back of the envelope calculation that takes the two parameters of the model and heuristically estimates the speed of the truth frontier. This would suggest a speed that depends only on the second parameter (error-correction effort) and only needs the other parameter (error probability) to be positive. But these heuristics are just that and a careful examination reveals that in order to convert this to a formal analysis one needs to understand the pattern of opinions in the nodes  between the two frontiers. In particular it is possible for the truth frontier to make massive jumps forward in a single step by swallowing a brief run of incorrect opinions sandwiched between the truth frontier and another long run of opinions which accidentally happens to agree with the truth. (Indeed our analysis shows that such events do happen in almost every run.) 

In order to convert the heuristic analysis to a formal one, we introduce a more detailed model that we call the ``decoupled model'' which tracks not only the current opinion of a node but also where this opinion originated. (The decoupled model is similar to the coalescing random walk representation of the ``voter'' model studied in \cite{HolleyLiggett:75, Liggett:85} and we elaborate on the relationship in Section~\ref{ssec:related}.)  Speeds of the frontier movement in this model are easy to analyze but we still need to analyze the ``runs of opinions'' in this model to get information about our original model. By doing so we roughly observe the following phenomenon: The gap between the truth frontier and the rumor frontier grows linearly with time (with exactly the same constant as predicted by the heuristic calculation!). At time $t$ there are roughly $\sqrt{t}$ alternations of opinions along the path from the truth frontier to the rumor frontier. Furthermore a node at distance $t'$ from the rumor frontier is expected to be in a run of length roughly $\sqrt{t'}$ --- so the runs get shorter (or alternations become more frequent) as one approaches the rumor frontier. 

Converting the information on run lengths and speed of the truth frontier, especially from our decoupled process, immediately translates to information about the opinions of nodes and their correlation with the truth. Roughly each run outside the truth frontier has a random opinion independent of the truth and of other runs. Since in expanding graphs most of the nodes are close to the frontier, most of the nodes thus are part of small sets whose information is independent of that of others. Thus with high probability, there is very little correlation between the truth and the opinions of most nodes.

Finally, note that while nodes inside the rumor frontier are aware of this fact (by virtue of the fact that their opinion is not $\none$) nodes within the truth frontier are not! So how can a node {\em know} if its value is correct? Our analysis suggests a simple method: A node should count time since it formed an opinion, and measure how recently it last changed its opinion. If the former is $t_0$ and the latter is $t_1$ and $t_1 > \tilde\omega(\sqrt{t_0})$, then its opinion is likely to be the truth, and if not the opinion is likely to be {\em uncorrelated} with the truth. Thus there is a {\em sharp transition} in the confidence of a node about its opinion.

We now briefly give pointers to where the results alluded to may be found in the paper. In Section~\ref{sec:model} we introduce our model formally. In Section~\ref{sec:paths_and_trees} we describe the heuristic bound on the speed of the truth frontier and prove it is a lower bound on the actual speed. In Section~\ref{sec:analysis_path} we introduce the ``decoupled model'', show the equivalence of this model to our original one, and start analyzing this model. In Section~\ref{sec:truth_frontier} we formally prove the upper bound on the speed of the truth frontier. In this section we also describe how agents can estimate their confidence in their opinion. All the analysis upto this point is on the path, though it also applies to branches in the more general tree setting. In Section~\ref{sec:unreliability} we relate the results to general graphs and in particular to the fraction of agents with the correct opinion. In Section \ref{sec:slower} we show how slightly different choices of model would lead to much slower convergence time.

\subsection{Related Work}\label{ssec:related}

Without error correction, the model presented here on trees is identical to the well studied {\em broadcast model on the tree}.
For this model, without correction, it is well known that asymptotically half of the nodes will take each of the two values~\cite{KestenStigum:67}. Finer result regarding the lower order fluctuations 
of the fraction of nodes taking each value were studied in~\cite{KestenStigum:66}. 
More recently questions of root reconstruction on trees deals with {\em global} estimators that can estimate the root value better than random, given the information at all nodes sufficiently far 
away~\cite{BlRuZa:95,Ioffe:96a,Ioffe:96b,EvKePeSc:00,BCMR:06}. The phase transitions proven for the existence of such estimators do not play any role in our results. 

Work on noisy-computation considered how can one compute the value of $f(x)$ where $f$ is given by a Boolean circuit, when one can only use  noisy gates~\cite{vonNeumann:56,EvansSchulman:99}. 
The main result of the area is that for small enough error rates this can be achieved for circuit of polynomial size with a logarithmic blowup in the depth.  
The problem is similar to the problem we consider in that the goal is to have no errors in computing $f$.  
 However, the error correction performed is stronger as one uses (the majority of) a number of bits at each gate. 
A related broadcast model where 
nodes compute their values by computing the majority of a number of nodes in previous generations was studied in~\cite{MaMoPo:20}. 

There is a huge body of work studying opinion exchange dynamics and learning on networks. Some of the main models in the area include the Voter model, the DeGroot model, SIR, SIS and other epidemics models, and Bayesian and non-Bayesian learning models, 
see e.g.~\cite{Jackson:08,MontanariSaberi:10,MosselTamuz:17}.   

The epidemics models, the Voter and the DeGroot model all assume that each agent has a state/belief at the beginning of the process and opinion exchange results in agents updating their opinion iteratively. This is different than the model studied here, where there is a new topic, where initially most agents have no opinion. Furthermore, the basic interaction in most such models is supposed to model persuasions, and infections while our model just focuses on information sharing and error correction. 

Among all the models in the preceding paragraph, the {\em voter model}~\cite{HolleyLiggett:75} is the closest to the model we study in this paper. In fact, if we consider a variant of our model where error correction is performed with respect to a randomly chosen neighbor having an opinion, then after the initial broadcast process, and excluding the root, the model behaves very much like the voter model. One consequence of this fact is that the convergence time for simple graphs such as the binary tree, is exponential in the radius of the graph, versus linear in our model, see Section~\ref{sec:slower}. 

We note further that our decoupled model and the coalescing random walk representation of voter model are quite similar as representing they track the information backward using random walks, see e.g.~\cite{Liggett:85}.

Some of the work on Bayesian learning with myopic agents in economics is also concentrated on information aspects on networks. However, much of this work requires nodes to be computationally unbounded, and has no realistic bounds on the time of convergence of the learning 
process~\cite{HJMR:19b}

The more computationally efficient models in the literature, those studying {\em herding}, allow nodes to take action only once, and thus have no error-correction. Many of them also require each node to observe all nodes preceding it in a fixed order~\cite{BiHiWe:98,Banerjee:92,SmithSorensen:00}. 

There is also a huge body of work on misinformation on social networks, including popular 
books, see for example~\cite{Wylie:19,ScheufeleKrause:19}. 
These works study topics such as large scale targeting, identifying influencers,
assimilation bias, the effect of surprise in spread and others. 
To the best of our knowledge none of this work studies the effect of neutral communication errors and error correction on misinformation.

\section{Model Definition}
\label{sec:model}


We now turn to the formal definition of the model. Recall that the model is specified by three structural parameters $G$, $r$ and $T$ and two numerical parameters $a$ and $b$. 

\begin{definition}[Information spread with error correction]
\label{def:model_general_Graphs}
Fix two parameters $a,b \in [0,1]$.
Let $G = (V,E)$ be an undirected graph, let $r \in V$ denote some \emph{root} vertex, and let $T$ be any BFS-tree of $G$ rooted at $r$. Consider the following process, proceeding in rounds. 

\begin{itemize}
\item At any given round $t \geq 0$, each vertex $v \in V$ holds a label $f_t(v)$, which is one of three values: $1$ (holds ``correct'' opinion), $-1$ (``incorrect'' opinion), or $\none$ (no opinion).
\item Initially, only the root holds an opinion (the correct one): $f_0(r) = 1$, and $f_0(v) = \none$ for $v \neq r$. We also define a \emph{parent} function $p \colon V \to V$, where $p(v)$ is the unique neighbor of $v$ that lies on the path from $r$ to $v$ in the tree $T$, and $p(r) = r$.
\item Given the values of $f_{t-1}(v)$ for all $v \in V$, the update rule defining $f_{t}(\cdot)$ is as follows. 
\begin{itemize}
    \item \textbf{Before opinion formation.} If $f_{t-1}(v) = \none$ and the set $N_{t-1}(v) := \{u \in V: (u, v) \in E, f_{t-1}(u) \neq \none \}$ is empty, then $f_t(v) = \none$.
    \item \textbf{Initial opinion.} If $f_{t-1}(v) = \none$ and $N_{t-1}(v)$ is non-empty, we 
    set
     \[
        f_t(v) = \begin{cases}
        f_{t-1}(p(v))  &\text{w.p. } 1-a,
        \\
        -f_{t-1}(p(v)) &\text{w.p. } a.
        \end{cases}
     \]
    \item \textbf{Opinion updates.} If $f_{t-1}(v) \neq \none$ then
        \[
        f_t(v) = \begin{cases}
        f_{t-1}(p(v))  &\text{w.p. } b,
        \\
        f_{t-1}(v)) &\text{w.p. } 1-b.
        \end{cases}
        \]
\end{itemize}
\end{itemize}


Formally, thus, the {\em Information Spread} model, or simply ``Spread Model'', is given by the random variables $\{f_t(v)\}_{t \in \N, v \in V}$, where $f_t(v) \in \{-1,\bot,1\}$ for every $t,v$ are generated as above. The model is parameterized by $(G, r, T, a, b)$ where $G=(V,E)$ is a  graph on a finite or countably infinite set $V$, $r \in V$ is the root vertex, $T$ is a breadth-first search tree in $G$ rooted at $r$, 
and $a,b \in [0,1]$ are two real numbers. 
\end{definition}

We remark again that the process above is the ``offline version'' where a communication tree is given in advance. The model can be adapted to the setting where the tree is chosen in an online fashion --- we discuss this model in Section~\ref{sec:c_less_1}.

Here, ``errors'' (or disagreements with the parent) are induced with probability $a$ when a node first forms an opinion, and correction (or being convinced by the parent) refers to the process where a node may change its opinion to the parent's opinion with some probability $b$ in any given round. For any $b$ bounded away from zero, it is not hard to show (see, e.g., Lemmas \ref{lem:lb_truth_frontier} and \ref{lem:run_branch_path}) that all but an arbitrary small constant fraction of the nodes converge to the correct opinion after a number of rounds linear in the diameter of the graph. 

\begin{remark}[Fast convergence and alternative models]
The model we propose is a simple example where opinions converge quickly and with probability one to a consensus -- in a number of rounds at most linear in the diameter of the graph -- despite having errors (or disagreements) in the communication process. Such fast convergence is desirable as it arguably better describes opinion formation and distribution in real world graphs, compared to processes where the convergence rate is polynomial in the graph size. 

We note that similar qualitative results hold for various extensions of the above model as well as other fast converging models with errors; we describe a couple of these in more detail.
\begin{itemize}
    \item In our model, each node $v$ is assigned a fixed parent $p(v)$ from which it subsequently receives all incoming information. 
    Another possible model with fast convergence properties would be for $v$ to continuously listen to all its neighbors that formed an initial opinion before $v$ itself; in our case, these are precisely all neighbors of $v$ whose distance to $r$ is smaller than $v$'s distance. We note that the main phenomena arising in this paper, including the fact that rumors spread much faster than reliable information, as well as their proofs, all extend to this variant as long as appropriate conditions on the volume growth of the graph hold.
    \item Our model is synchronous, in that it proceeds in discrete rounds; however, essentially the same behavior can be observed in analogous asynchronous models.
\end{itemize}
\end{remark}

\begin{remark}[A slower variant]
One natural variant that leads to exponentially slower convergence, is a model where at each round error correction is performed with respect to a randomly chosen opinionated neighbor (instead of a fixed parent, as in our model). 
In Section~\ref{sec:slower} we show that in this slightly modified model, for simple graphs such as the complete graph on $n$ vertices or the binary tree on $n$ vertices the convergence time of the modified model is $\Theta(n)$, while our process converges in time 
$\Theta(\log n)$. 
\end{remark}

\section{Results}\label{sec:results}
All of the results we present in this section are for the model defined above (Definition \ref{def:model_general_Graphs}) with parameters $0 < a < 1/2$ and $0 < b < 1$, which we perceive as constants (that do not depend on the graph $G$). Perhaps the most interesting regime is when $a$ (the error probability) is a small constant and $b$ (the correction probability) is much larger. In particular, all of our results hold for this regime.

Our analysis begins with the simplest tree, that is the (directed) path $\cP_n$ with $n$ being finite or $\infty$, where we prove that the truth frontier moves at speed $b+o(1)$. Recall that the truth frontier (see Definition \ref{def:rumor_truth_frontier}) refers here to the furthest point from the root whose opinion is (i) correct, and (ii) guaranteed to be correct for the rest of the process.

\begin{theorem}[Location of the Truth Frontier in the path; see 
Lemma~\ref{lem:lb_truth_frontier} and Theorem \ref{thm:ub_truth_frontier}]
Consider our process with fixed parameters $0 < a < 1/2$ and $0 < b < 1$ on the (directed) path $\cP_n$.
The distance of the truth frontier from the root after round $t$ is $\min\{(b \pm o(1))t, n\}$ with probability $1-e^{-t^{\Theta(1)}}$. The $o(1)$ term tends to zero at a polynomial rate as $t \to \infty$.
\end{theorem}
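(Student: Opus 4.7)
The theorem combines matching lower and upper bounds on the truth frontier $F(t)$, and my plan is to attack them separately. For the lower bound $F(t) \geq (b-o(1))t$, I would proceed via stochastic domination. At any round $t$ with $F(t-1) = k < t-1$, the node at position $k+1$ is opinionated and necessarily holds opinion $-1$; it independently copies from its (correct) parent with probability exactly $b$, and if this happens the frontier advances by at least one. In the boundary case $F(t-1) = t-1$ the newborn node $t$ forms its initial opinion correctly with probability $1-a$. A coupling argument then shows that the increments $F(t) - F(t-1)$ stochastically dominate i.i.d.\ $\mathrm{Bernoulli}(\min\{b, 1-a\})$ variables; in the narrow regime $b > 1-a$ one recovers the tight $\mathrm{Bernoulli}(b)$ bound by observing that the boundary case is rare after a short initial transient. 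A standard Chernoff inequality then delivers $F(t) \geq (b - o(1))t$ with failure probability $\exp(-t^{\Theta(1)})$.

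For the upper bound $F(t) \leq (b + o(1))t$, which is more delicate because the frontier can occasionally leap forward by swallowing brief runs of accidentally-correct opinions, I would invoke the decoupled representation developed in Section~\ref{sec:analysis_path}. In that representation the opinion at any pair $(k, t)$ equals the root's opinion multiplied by a product of independent $\pm 1$ birth sign-flips (each with bias $1 - 2a$) collected along a backward random walk starting from $(k, t)$: the walk moves leftward by one unit with probability $b$ each round going back in time, until it either reaches the root at position $0$ (contributing no sign-flips) or first hits the birth diagonal $\{i = s\}$ at some height $i > 0$ (after which it slides down the diagonal, accruing $i$ further sign-flips). A race-between-binomials estimate shows that for $k = (b + \varepsilon)t$ the walk reaches the diagonal with probability $1 - \exp(-\Omega(\varepsilon^2 t))$, at height concentrated around $\varepsilon t/(1-b)$. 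The event $F(t) \geq k$ requires all of $f_t(0), \ldots, f_t(k)$ to equal $+1$ simultaneously; since the coalescing backward walks from these positions jointly contribute $\Omega(\sqrt{t})$ distinct birth-diagonal hits, i.e.\ $\Omega(\sqrt{t})$ independent sign-flip bits in the joint distribution, the probability that they all align to $+1$ is at most $\exp(-\Omega(\sqrt{t}))$. A union bound over the at most $t$ possible thresholds $k$ closes the argument, and the $\min\{\cdot, n\}$ truncation is handled trivially by noting that $F(t)$ cannot exceed $n$.

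The principal obstacle lies in the upper bound, specifically in rigorously controlling the number of \emph{distinct} (uncoalesced) backward walks among those launched from $(0, t), (1, t), \ldots, (k, t)$, since each independent sign-flip in the joint opinion distribution corresponds to one such distinct walk. To handle this I would lean on the run-length picture sketched in the introduction: the segment between the truth and rumor frontiers contains $\Theta(\sqrt{t})$ alternating runs of opinion, each run corresponding to a coalesced bundle of walks sharing a common birth ancestor and contributing exactly one $\pm 1$ bit. Making this precise in the decoupled model, and combining it with the race-of-binomials estimate while carefully tracking correlations between the event ``the frontier stops at $k$'' and the sign-flips on the relevant walks, is the core technical content of the proof; both the polynomial decay of the $o(1)$ term and the $\exp(-t^{\Theta(1)})$ concentration rate trace back to the $\sqrt{t}$ scaling of the number of independent sign-flip contributions.
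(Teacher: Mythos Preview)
Your lower bound is essentially the paper's argument (Lemma~\ref{lem:lb_truth_frontier}): the truth frontier advances by at least one with probability $b$ each round, so it dominates a binomial, and Chernoff closes.

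For the upper bound your backward coalescing-walk framework is the dual of the paper's decoupled model: two positions lie in the same decoupled run at time $t$ precisely when their backward walks coalesce before reaching the birth diagonal, so your ``distinct diagonal hits'' are exactly the paper's distinct decoupled runs, and once you have $m$ of them the all-$+1$ event has probability at most $(1-a)^m$ (or $2^{-m}$ in the decoupled parametrization). You have correctly isolated the crux, namely establishing $m = t^{\Omega(1)}$ with high probability, but there is a genuine gap at this step. Your proposal to ``lean on the run-length picture sketched in the introduction'' is circular: that picture is a heuristic, and making it rigorous is precisely the content of the upper bound. Analyzing the coalescing system of backward walks directly is delicate because adjacent walks are strongly correlated.

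The paper sidesteps this by a \emph{forward-time} argument on a sparse set of frontiers (Lemmas~\ref{lem:progress_of_k_frontiers} and~\ref{lem:subsequent_intermediate_frontiers}, assembled in Lemma~\ref{lem:progress_of_all_frontiers}). One picks times $s_0 < s_1 < \ldots$ spaced by roughly $t^{0.5+c}$, lets $k_i$ be the index of the rumor frontier at time $s_i$, and observes that once $k_i$ is no longer the rumor frontier, $\sigma_{k_i}(t) - s_i$ is exactly $\mathrm{Bin}(t-s_i,\beta)$, independently of everything else. A Chernoff bound on each $\sigma_{k_i}(t)$ separately, plus a union bound, shows all these frontiers land at $\beta t + (1-\beta)s_i \pm O(t^{0.5+c})$ and hence remain pairwise disjoint; this yields $\Omega(t^{0.5-c})$ distinct runs inside the window $[\beta t, \beta t + O(t^{1-c'})]$, each carrying an independent sign. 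The key point you are missing is this reduction to \emph{individually} tracking well-separated frontiers, each with a clean binomial law, rather than attempting to control the full coalescing ensemble. Note also that the paper obtains $t^{0.5-c}$ runs and a $\exp(-t^{\Theta(1)})$ failure probability, not the $\sqrt{t}$ and $\exp(-\Omega(\sqrt{t}))$ you state; the latter is the heuristic scale, not what is actually proved.
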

The lower bound is easy; the upper bound requires a much more intricate analysis of the typical behavior of nodes ``just beyond'' the truth frontier. In general graphs, the behavior of our process along any particular branch is equivalent to a process with the same parameters over the path, see Lemma \ref{lem:run_branch_path}. Therefore, as a corollary of the above theorem, we obtain tight bounds on the typical progress of the truth frontier along any fixed branch of the tree, see Corollary \ref{coro:truth_frontier}.

We then proceed with a much finer analysis of runs of consecutive values ($+1$ or $-1$) between the truth frontier and the rumor frontier in paths. This analysis allows us to show that nodes between the two frontiers change their opinion quite often.
\begin{theorem}[length of runs; see Theorem \ref{thm:changes_of_opinion_individual}]
Consider our model with fixed parameters $0 < a < 1/2$ and $0 < b < 1$ over a graph $G$ with root $r$ and let $v \neq r$ be a vertex in the graph. For any $\eps > 0$, there exists a constant $C = C(a,b,\eps)$, which does not depend on $G, r, v$, so that with probability at least $1-\eps$, the following holds. In the first $q$ rounds after $v$ forms an initial (non-$\none$) opinion, if the truth frontier has not reached $v$ during these $q$ rounds, then the longest consecutive streak of rounds in which $v$ does not change its opinion is of length at most 
$$C_{a,b,\eps} \cdot q^{1/2} \cdot (\log q)^{3/2}.$$
\end{theorem}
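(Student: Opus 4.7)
Plan. By Lemma~\ref{lem:run_branch_path}, the opinion sequence along the directed path from $r$ to $v$ in $T$ is distributed as the process on the directed path $\cP_\infty$; so, setting $d := d_T(r,v)$, it suffices to bound the longest constant streak of $\sigma_s := f_{d+s}(v)$ for $s \in \{0, 1, \ldots, q\}$ on the path. I would work entirely in the decoupled-model framework of Section~\ref{sec:analysis_path}, where each opinion can be written as $\sigma_s = \prod_{k=1}^{J(s)} \eta_k$: here $(\eta_k)_{k \geq 1}$ are i.i.d.\ $\pm 1$ variables with $\Pr[\eta_k = -1] = a$ (the initial-opinion error flips), and $J(s) \in \{0, \ldots, d\}$ is the \emph{origin position} of a backward space-time walk started at $(v, d+s)$, which at each step moves one unit back in time and, independently with probability $b$, one step toward the root. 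The walk terminates once its position equals its time coordinate, and the hypothesis that the truth frontier has not reached $v$ means $J(s) \geq 1$ for every $s \in [0, q]$.

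The key structural observation is that $(J(s))_s$ is non-increasing, and that $J(s)$ changes only at rounds where $v$ performs a correction and, moreover, the two backward walks from $(v, d+s)$ and $(p(v), d+s)$ fail to coalesce before terminating. Under the natural coupling that uses independent coin flips when the walks are at different positions and shared coins when they coincide, the gap between these two walks evolves as a lazy symmetric random walk on the non-negative integers started at $1$ and absorbed at $0$. Since the walks' joint lifetime is of order $s$, the non-coalescence probability is $\Theta(1/\sqrt{s})$. Hence, for any interval $[s^*, s^* + L] \subseteq [0, q]$, the number $M$ of origin-changes concentrates around $\Theta(bL/\sqrt{q})$ up to logarithmic factors. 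Each origin-change flips the sign $\prod_k \eta_k$ with a probability that depends on its gap $\Delta$: conditional on $(J(s))$, the ranges of $\eta$-indices involved in different changes are disjoint (by monotonicity of $J$), so the preservation events are independent, and for a change of gap $\Delta$ the preservation probability equals $(1 + (1-2a)^\Delta)/2$. When $\Delta$ is large this is essentially $1/2$, so the probability of a constant streak of length $L$ starting at a fixed $s^*$ is at most $\exp(-\Omega(L/\sqrt{q}))$. A union bound over the $O(q)$ possible starting positions $s^*$ then yields the theorem with $L$ of order $\sqrt{q}\,(\log q)^{3/2}$.

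The main obstacle is making the origin-change analysis quantitative, and in particular handling small-gap origin changes: when two consecutive origin positions are only a constant apart, the sign-preservation probability is as large as $1-a$ rather than $1/2$, so one must argue that such events are sufficiently rare not to dominate the product. This requires estimating the tail of the terminal gap between the coupled walks conditional on non-coalescence, and showing that the typical origin-change has gap $\Omega(\sqrt{s}/\mathrm{polylog}(q))$. The extra $(\log q)^{3/2}$ factor in the bound absorbs the loss from the concentration of $M$ and from these block-size estimates, which together with the $1/\sqrt{q}$-rate of non-coalescence produce the claimed bound. Since the argument only uses local path-level information along the branch from $r$ to $v$, the constant $C_{a,b,\eps}$ depends only on $a$, $b$, $\eps$, independent of $G$, $r$, or $v$.
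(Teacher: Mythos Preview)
Your approach is sound in spirit but takes a genuinely different route from the paper, and as written has two soft spots that the paper's argument sidesteps.

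\textbf{What the paper does.} The paper works in the decoupled model of Section~\ref{sec:analysis_path}, where the value of run $\ell$ is an \emph{independent uniform} $\pm 1$ coin $Z_\ell$, regardless of the run's length or position. It then lays down a mesh of intermediate frontiers (times $s_0 > s_1 > \cdots$ with spacing $\Theta(\sqrt{q\log q})$) and uses Lemma~\ref{lem:subsequent_intermediate_frontiers} to show they all stay separated when they cross $v$; this directly bounds the time $v$ spends in any single decoupled run by $O(\sqrt{q\log q})$. Since the $Z_\ell$ are i.i.d.\ uniform, the probability that $O(\log q)$ consecutive runs share a sign is $2^{-O(\log q)}$, and a union bound finishes. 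No gap-size analysis is needed.

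\textbf{What you do differently.} You stay in the \emph{original} model's backward-walk picture, writing $\sigma_s=\prod_{k\le J(s)}\eta_k$ with $\Pr[\eta_k=-1]=a$. This forces you to confront two issues the paper avoids. First, your claim that the number $M$ of origin changes in $[s^*,s^*+L]$ concentrates around $\Theta(bL/\sqrt{q})$ is not obvious: the non-coalescence events at different rounds $s$ share the same underlying correction coins over heavily overlapping space-time regions, so they are far from independent, and you give no mechanism for concentration. Second, the small-gap problem you correctly flag is real work: you need tail bounds on the terminal gap of a random walk conditioned to stay positive, uniformly over all origin changes.

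\textbf{How to close the gaps.} Both difficulties disappear if you replace ``count all origin changes'' by ``look at checkpoints.'' Fix $L_1=\Theta(\sqrt{q\log q})$ and compare $J$ only at times $s^*,\,s^*+L_1,\,s^*+2L_1,\ldots$ Each increment $J(s^*+iL_1)-J(s^*+(i+1)L_1)$ concentrates around $\Theta(L_1)$ by a direct binomial estimate (no coalescence needed), and the corresponding $\eta$-blocks are disjoint, so the sign-preservation events across checkpoints are independent with probability $\tfrac12+o(1)$ each. A union bound over $s^*\in[0,q]$ then gives $L=O(\sqrt{q}(\log q)^{3/2})$. This is exactly the intermediate-frontier argument in disguise, and it shows that the detour through per-step non-coalescence probabilities and gap-tail estimates is unnecessary. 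Alternatively, switching to the paper's decoupled model (uniform $Z_\ell$ instead of biased $\eta_k$) removes the small-gap issue outright.
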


In fact, the proof of Theorem~\ref{thm:changes_of_opinion_individual} allows nodes who keep a record of all opinions that they held, to estimate quickly and accurately if they have converged to the correct value without knowing anything about the graph or the opinion of other nodes: 

\begin{corollary}[When to stop; see Corollary \ref{coro:when_to_stop}]
Consider a one-player game in the same setting as in Theorem \ref{thm:changes_of_opinion_individual}, where an agent is located at vertex $v$, knows $a,b$ in advance, and can only observe the opinions of $v$ throughout the process;
The agent does not have any information about the graph $G$ or the root $r$. The agent has the ability to end the game after any round of the process. Its goal is to end the game as soon as possible \emph{after} the truth frontier has reached $v$. For any $\eps > 0$, there exists a strategy for the agent with success probability $1-\eps$ to end the game no more than $C_{a,b,\eps} \sqrt{t} (\log t)^{3/2}$ rounds after the truth frontier has reached $v$, where $t$ is the total number of rounds that $v$ has been holding a non-$\none$ opinion so far.
\end{corollary}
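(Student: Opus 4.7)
The plan is to give an explicit stopping rule based only on the agent's local observations, and then verify two complementary facts: the rule almost never fires before the truth frontier arrives at $v$, and it fires quickly once it does. Let $C_0 = C_{a,b,\eps}$ denote the constant produced by Theorem~\ref{thm:changes_of_opinion_individual}, and fix $K = 2C_0$. The agent maintains two counters: $t_0$, the number of rounds elapsed since $v$ first acquired a non-$\none$ opinion, and $s$, the length of the current maximal run of unchanged opinion at $v$ (so $s = t_0$ if the opinion has never changed). The agent stops at the first round for which $s \geq K \sqrt{t_0}\,(\log(t_0 + 2))^{3/2}$; the shift by $2$ is just a harmless convention to avoid degenerate behaviour at very small $t_0$.

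For the \emph{no premature stopping} direction, let $q^\ast$ be the (random) round in which the truth frontier first reaches $v$. Theorem~\ref{thm:changes_of_opinion_individual} provides an event $E$ of probability at least $1-\eps$ on which, for every $q < q^\ast$, every maximal run of unchanged opinion within the first $q$ rounds has length at most $C_0 \sqrt{q}(\log(q+2))^{3/2}$. Specializing to $q = t_0$ yields $s(t_0) \leq C_0 \sqrt{t_0}(\log(t_0+2))^{3/2} < K \sqrt{t_0}(\log(t_0+2))^{3/2}$ for every $t_0 < q^\ast$, so on $E$ the stopping rule is never triggered before $q^\ast$.

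For the \emph{fast stopping after $q^\ast$} direction, the definition of the truth frontier yields $f_t(v) = f_0(r)$ for every $t \geq q^\ast$, so no opinion change at $v$ occurs after round $q^\ast$ and $s$ grows by exactly one each subsequent round. Writing $t_0 = q^\ast + \Delta$, the stopping rule fires once $\Delta \geq K \sqrt{q^\ast + \Delta}\,(\log(q^\ast + \Delta + 2))^{3/2}$; a routine estimate (plugging in, say, $\Delta = 4K\sqrt{q^\ast}(\log(q^\ast+2))^{3/2}$ and using $q^\ast + \Delta \leq 2q^\ast$ for $q^\ast$ large) shows this holds once $\Delta = O\bigl(\sqrt{q^\ast}(\log q^\ast)^{3/2}\bigr)$. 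Since the stopping time satisfies $t = q^\ast + \Delta \geq q^\ast$, this translates immediately into $\Delta \leq C'_{a,b,\eps}\sqrt{t}(\log t)^{3/2}$, matching the corollary's bound.

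The conceptual content is essentially exhausted above, so the main obstacle is purely technical: the event $E$ must hold \emph{uniformly in $q$}, since the agent checks the threshold at every round rather than at one pre-specified time. I expect the proof of Theorem~\ref{thm:changes_of_opinion_individual} already delivers such a uniform statement; if not, one recovers it by a standard dyadic union bound over $q \in \{2^k : k \geq 1\}$, using the monotonicity observation that the longest run within $[1,t_0]$ is at most the longest run within $[1,2^{\lceil \log_2 t_0\rceil}]$. The resulting logarithmic overhead can be absorbed either into the $(\log q)^{3/2}$ of the bound or into the constant $K$.
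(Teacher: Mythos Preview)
Your proposal is correct and matches the paper's approach: the same stopping rule, and the same recognition that the key technical point is uniformity in $q$ of the run-length bound. As you anticipate in your final paragraph, the paper explicitly notes that the bare statement of Theorem~\ref{thm:changes_of_opinion_individual} does not give this uniformity and instead extracts it directly from that theorem's proof (two uniform-in-$q$ facts about decoupled run lengths and streaks of same-valued runs); your dyadic backup is therefore unnecessary, and in fact would require knowing how $C_{a,b,\eps}$ scales with $\eps$, which the theorem's statement alone does not provide.
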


We next turn to studying the bias of the process. One of the main motivations for studying this model was trying to understand what fraction of the informed nodes hold the correct opinion. 
The \emph{opinion bias} (Definition \ref{def:opinion_bias}) measures
this fraction, where a bias of $1$ indicates that all nodes are correct, and a bias of $0$ corresponds to opinions being \emph{uncorrelated} with the truth.
We first show that in expectation, {\em graph expansion} leads to the latter situation, where opinions are uncorrelated with the truth. 
We let $V_G(x,t)$ denote the volume (number of vertices) of the ball of radius $t$ around $x$. The following result bounds the expected bias in terms of the volume growth:

\begin{theorem}[Expected bias towards correct opinion; see Theorem \ref{thm:unreliability_expectation}]
Consider our model with fixed parameters $0 < a < 1/2$ and $0 < b < 1$ over a graph $G$ with root $r$. The expected bias of the process at time $t$ satisfies
\begin{equation*}
\left(1-e^{-t^{\Theta(1)}}\right) \cdot \frac{ V_G\left(r, bt - \Theta\left(t^{1-C}\right)\right)}{V_G(r, t)}
\leq
\mathbb{E}[\lambda_{G, r}(t)]
\leq 
\frac{ V_G\left(r, bt + \Theta\left(t^{1-C}\right)\right)}{V_G(r, t)} + e^{-t^{\Theta(1)}}
\end{equation*}
where $C > 0$ is an absolute constant. \end{theorem}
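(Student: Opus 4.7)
The plan is to write $\mathbb{E}[\lambda_{G,r}(t)]$ via linearity of expectation, split the sum over vertices according to their distance from the root, and apply the truth-frontier bound branch by branch. Because $c=1$ and communication proceeds along a BFS tree, the set of opinionated vertices at time $t$ is deterministically the closed ball of radius $t$ around $r$ in $G$. In particular, the denominator defining $\lambda_{G,r}(t)$ equals $V_G(r,t)$ almost surely, so
\begin{equation*}
\mathbb{E}[\lambda_{G,r}(t)] \;=\; \frac{1}{V_G(r,t)} \sum_{v\,:\, d_G(r,v) \leq t} f_0(r)\cdot \mathbb{E}[f_t(v)].
\end{equation*}

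Next, I would partition the vertices by comparing $d_G(r,v)$ with the threshold $bt \pm \Theta(t^{1-C})$ supplied by Corollary~\ref{coro:truth_frontier}, applied along the root-to-$v$ branch of $T$ (which by Lemma~\ref{lem:run_branch_path} behaves identically in distribution to the path process). If $d_G(r,v) \leq bt - \Theta(t^{1-C})$, then with probability at least $1-e^{-t^{\Theta(1)}}$ the truth frontier along the branch has already swept past $v$, forcing $f_t(v) = f_0(r)$, which gives $f_0(r)\mathbb{E}[f_t(v)] \geq 1 - 2e^{-t^{\Theta(1)}}$. If instead $d_G(r,v) \geq bt + \Theta(t^{1-C})$, then with probability at least $1-e^{-t^{\Theta(1)}}$ the truth frontier has not yet reached $v$, and conditional on this the decoupled-model analysis of Section~\ref{sec:analysis_path} places $v$ in a run whose source sits past the truth frontier and therefore carries an opinion independent of $f_0(r)$. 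Combining these two facts gives $f_0(r)\mathbb{E}[f_t(v)] \leq e^{-t^{\Theta(1)}}$ in the second regime. The decoupled representation also yields the trivial non-negativity $f_0(r)\mathbb{E}[f_t(v)] \geq 0$ for every vertex, which is convenient for the lower bound.

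Putting the pieces together, for the lower bound I keep only vertices with $d_G(r,v) \leq bt - \Theta(t^{1-C})$, each contributing at least $1 - 2e^{-t^{\Theta(1)}}$, and discard the rest using non-negativity; this yields $\mathbb{E}[\lambda_{G,r}(t)] \geq (1-e^{-t^{\Theta(1)}})\, V_G(r,bt-\Theta(t^{1-C}))/V_G(r,t)$ after absorbing constants. For the upper bound I split at $bt + \Theta(t^{1-C})$: vertices within that ball contribute at most $1$ each, for a total of at most $V_G(r,bt+\Theta(t^{1-C}))$, while the remaining at most $V_G(r,t)$ vertices each contribute at most $e^{-t^{\Theta(1)}}$; dividing by $V_G(r,t)$ gives the stated upper bound.

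The single non-routine step is the per-vertex tail estimate $f_0(r)\mathbb{E}[f_t(v)] \leq e^{-t^{\Theta(1)}}$ for $v$ at distance at least $bt + \Theta(t^{1-C})$ from $r$. Establishing it requires the decoupled-model machinery of Section~\ref{sec:analysis_path}: one has to argue that the origin of $v$'s opinion, which evolves at effective speed $b$, is with overwhelming probability still trapped beyond the truth frontier at time $t$, and that the corresponding run's source is genuinely independent of $f_0(r)$. Everything else is linearity of expectation, summation by distance shell, and a direct appeal to Corollary~\ref{coro:truth_frontier}.
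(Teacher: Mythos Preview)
Your proposal is correct and follows essentially the same route as the paper: linearity of expectation over the deterministic ball $B_{t,G}(r)$, a three-way split of vertices by distance relative to $bt \pm \Theta(t^{1-C})$, the branch-wise truth-frontier bound (Corollary~\ref{coro:truth_frontier}) for the near case, and the decoupled model (Lemma~\ref{lem:equivalence_models}) to show that far vertices have opinion uniform in $\{-1,1\}$. One small caution: the cleanest way to get the far-vertex bound is the unconditional identity $\mathbb{E}[g_t(v)] = \Pr(\origin_t(v)=0)$ together with the $0$-frontier tail from Lemma~\ref{lem:progress_of_k_frontiers}; conditioning on the \emph{original} truth frontier not reaching $v$ can in principle bias the relevant $Z_j$'s (toward $-1$), so your phrasing ``conditional on this \ldots\ independent of $f_0(r)$'' should be replaced by the direct decoupled computation you outline in your final paragraph.
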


Among other examples, Theorem~\ref{thm:unreliability_expectation} implies that
\begin{itemize}
    \item For the path rooted at $0$, the expected bias is $b+o(1)$.
    \item Fix $d > 0$. For the grid graph ${\mathbb Z}^d$, in $d$ dimensions rooted at $0$, the expected bias is $b^d + o(1)$.
    \item For the infinite binary tree, the expected bias is $o(1)$.
\end{itemize}

The following two theorems show that in all of the examples above as well as in other examples, the opinion bias is concentrated around its expected value. 
Interestingly, we have different statements and proofs in the polynomial and super-polynomial growth cases:

\begin{theorem}[Concentration for super-polynomial volume growth; see Theorem \ref{thm:concentration_super_poly}]

Consider our model over a graph $G = (V, E)$ with root $r$ and fixed parameters $0 < a < 1/2$, $0 < b < 1$, and let $T$ be the communication tree of the process, rooted at $r$. Let $t \in \N$ and suppose  
that the following two conditions hold for constants $0 < c \leq C$ and $0 < c' < 1-b$ and a function $\omega \colon \N \to \N$.
\begin{itemize}
    \item $\frac{\omega(x \cdot (b+c'))}{\omega(x)}$ converges to zero as $x \to \infty$.
    
    \item $V_G(v, x) \leq C \cdot x^{\omega(x)}$ for every $v \in B_{t,G}(r)$ and $x \leq t - d_G(r, v)$.
    \item $V_G(r, x) \geq c \cdot x^{\omega(x)}$ for every $x \leq t$. 
\end{itemize}
Then the opinion bias $\lambda_{G,r}(t)$ is $o(1)$ with probability $1-o(1)$, where the $o(1)$ terms tend to zero as $t \to \infty$. Furthermore, if we replace the $C \cdot x^{\omega(x)}$ and $c \cdot x^{\omega(x)}$ upper and lower bounds with expressions of the form $C e^{x^{\Omega(1)}}$ and $c e^{x^{\Omega(1)}}$ respectively (and ignore the first bullet), then the $o(1)$ terms are of the form $e^{-t^{\Theta(1)}}$. 
\end{theorem}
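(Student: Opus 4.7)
The strategy is to combine the mean bound from Theorem~\ref{thm:unreliability_expectation} with a variance bound that exploits the conditional independence of disjoint subtrees of the BFS tree $T$. Writing $X := \sum_{v \in B_t(r)} f_t(v)$ so that $\lambda_{G,r}(t) = X / V_G(r,t)$, it suffices to prove $|X| = o(V_G(r,t))$ with high probability. First, I would apply Corollary~\ref{coro:truth_frontier} along each branch of $T$, together with a union bound over the at most $V_G(r, (b-\delta)t) \leq C \cdot t^{\omega(t)}$ vertices at small depth (for a small constant $\delta>0$ to be chosen), to conclude that except with probability $e^{-t^{\Theta(1)}}$, every vertex $v$ with $d_G(r,v) \leq L := \lfloor (b-\delta)t \rfloor$ satisfies $f_t(v)=+1$. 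On this event $X = V_G(r, L) + Y$, where $Y$ is the partial sum over the shell $L < d_G(r,v) \leq t$. The contribution $V_G(r, L)/V_G(r,t)$ is already $o(1)$ by the first and third bullets (picking $\delta < c'$ and using monotonicity), so it remains to show that $|Y| = o(V_G(r,t))$ with probability $1-o(1)$.

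For the mean, Theorem~\ref{thm:unreliability_expectation} yields $\E[X] \leq V_G(r, bt + o(t)) + o(V_G(r,t))$, and together with $\E[\sum_{v \in B_L(r)} f_t(v)] = V_G(r,L)(1 - e^{-t^{\Theta(1)}})$ from the first step this gives $|\E[Y]| = o(V_G(r,t))$, after invoking the first bullet to control the ratio $V_G(r,(b+c')t)/V_G(r,t)$.

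For the fluctuations of $Y$, let $\mathcal{F}_L$ denote the $\sigma$-algebra generated by $\{f_{t'}(v) : v \in B_L(r),\ 0 \leq t' \leq t\}$, and decompose $Y = \sum_{u} S_u$, where $u$ ranges over the level-$L$ vertices of $T$ and $S_u$ is the sub-sum of $f_t(\cdot)$ over the descendants of $u$ in $B_t(r)\setminus\{u\}$. The key observation is that, conditional on $\mathcal{F}_L$, the family $\{S_u\}_u$ is mutually independent: the noise driving the process inside each subtree rooted at a level-$L$ vertex is independent of the noise in any other such subtree, once the opinion trajectory of $u$ has been fixed. Moreover the second bullet gives $|S_u| \leq V_G(u, t-L) \leq C(t-L)^{\omega(t-L)} =: M$. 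Applying Hoeffding's inequality conditionally on $\mathcal{F}_L$ and then taking expectation yields
\[
\Pr\!\left[|Y - \E[Y \mid \mathcal{F}_L]| > \eps V_G(r,t)\right]
\;\leq\; 2 \exp\!\left(-\frac{\eps^2 \, V_G(r,t)^2}{2\, M^2 \, V_G(r,L)}\right).
\]
Iterating the first bullet shows $\omega(L), \omega(t-L) = o(\omega(t))$, so $M^2 V_G(r,L) \ll V_G(r,t)^2$ and the right-hand side is $o(1)$ in the $x^{\omega(x)}$ regime; in the sharper $e^{x^{\Omega(1)}}$ regime the same ratio decays doubly-exponentially in $t^{\Omega(1)}$, producing a bound of the form $e^{-t^{\Theta(1)}}$.

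The main technical obstacle is passing from the conditional expectation $\E[Y \mid \mathcal{F}_L]$ to the unconditional $\E[Y]$, i.e.\ controlling $\mathrm{Var}[\E[Y \mid \mathcal{F}_L]]$. One layer of conditioning only factors one scale of variance. I would handle this by iterating the same subtree decomposition along an increasing sequence of depths $0 < L_1 < \cdots < L_k = L$ chosen geometrically, so that the differences $\E[Y \mid \mathcal{F}_{L_{i+1}}] - \E[Y \mid \mathcal{F}_{L_i}]$ form a martingale whose conditional second moments are bounded at scale $L_i$ by $C^2(t-L_i)^{2\omega(t-L_i)} V_G(r, L_i)$. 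Super-polynomial volume growth makes these per-scale contributions form a geometric series dominated by the term at $i=k$, so an Azuma step recovers concentration of $\E[Y\mid\mathcal{F}_L]$ around $\E[Y]$. Combined with the earlier steps, this yields $|\lambda_{G,r}(t)| = o(1)$ w.p.\ $1-o(1)$ in the first regime, sharpening to $1 - e^{-t^{\Theta(1)}}$ in the exponential-growth regime.
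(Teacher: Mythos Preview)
Your approach is fundamentally different from the paper's. The paper's proof works entirely through the decoupled model (Definition~\ref{def:rooted_tree_decoupled}) via Lemmas~\ref{lem:prob_same_run} and~\ref{lem:prob_origin_second_moment}: two nodes lying in distinct decoupled runs have conditionally independent values, and the volume hypotheses ensure that a uniformly random pair $(u,v)\in B_t(r)^2$ lies in the same run with probability $o(1)$, because any such pair must sit inside a ball of radius $(b+c'')t$ around their least common ancestor, and such balls have volume $o(V_G(r,t))$ by the first bullet. A single Chebyshev step then gives the concentration. Your argument avoids the decoupled model entirely and instead exploits conditional independence of the subtrees of $T$ cut at a fixed depth $L$, followed by Hoeffding and a multi-level martingale. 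This is a plausible alternative in spirit, but two of your steps do not go through as written.

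First, the claim ``iterating the first bullet shows $\omega(t-L)=o(\omega(t))$'' is not justified. With $L=(b-\delta)t$ you have $t-L=(1-b+\delta)t$, and the first bullet only controls $\omega$ at the scale $(b+c')t$. When $b$ is small the factor $1-b+\delta$ exceeds $b+c'$, and nothing in the hypotheses forces $\omega((1-b+\delta)t)/\omega(t)$ to be small (or even bounded away from $1$). You can partially rescue the Hoeffding step by replacing $\sum_u M_u^2 \le M^2\,V_G(r,L)$ with the sharper $\sum_u M_u^2 \le M\cdot V_G(r,t)$, since then $M/V_G(r,t)\le (C/c)(1-b+\delta)^{\omega((1-b+\delta)t)}\to 0$ merely from $\omega\to\infty$; but the inequality and justification you actually wrote are incorrect.

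Second, and more seriously, the multi-level martingale does not control the base increment $\E[Y\mid\mathcal{F}_{L_1}]-\E[Y]$. At the shallowest level there is essentially one ``subtree'' (the whole tree), so there is no sum of conditionally independent pieces to which Hoeffding or Azuma can be applied. Concretely, $\E[Y\mid\mathcal{F}_{L_1}]$ depends on when each depth-$L_1$ vertex first stabilizes to $+1$, and those stabilization times are correlated through shared ancestors; your outline offers no mechanism to bound the fluctuations of this term, and without it the telescoping sum does not close. The paper's decoupled-run argument sidesteps this entirely, since the pairwise independence there is exact.
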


Examples of graphs with super-polynomial volume growth include the aforementioned binary tree, as well as various small-world models aiming to capture real world social interactions.

We note that for the phrasing of the theorem to make sense, the graph $G$ has to be infinite: the third bullet requires that $V_{G}(r,t) \to \infty$ as $t \to \infty$. For example, one can think of a small world model that expands with time $t$ (and satisfies the well behaved volume growth conditions), and argue that as $t \to \infty$, the opinion bias at time $t$ converges to zero. 

Real world graphs are finite, however; what would be the interpretation in this case?
Suppose that $G$ is a finite graph that (intuitively) exhibits super polynomial growth up to some radius $t_0$ around the root $r$, where we think of $t_0$ as large compared to the (fixed) parameters $a,b$. 
The theorem implies that in most rounds up to time $t_0$, the opinion bias will typically be close to zero.

The next theorem concerns graphs with polynomial growth, with the prime example being the $d$-dimensional grid (for any fixed dimensionality $d$).
\begin{theorem}[Concentration for polynomial volume growth; see Theorem \ref{thm:concentration_poly}]
Consider our model with parameters $a \in (0,1/2)$ and $b \in (0,1)$ over a graph $G$ with root $r$ at some time $t > 0$. Let $c > 0$ be an arbitrary small constant. Suppose that for every $s < s' \leq t$ with $s' - s \leq t^{0.5 + c}$, it holds that $V_G(r, s') - V_G(r,s) = o( V_G(r,t))$, where the $o(\cdot)$ term tends to zero as $t \to \infty$. Then the opinion bias at time $t$, $\lambda_{G,r}(t)$, is within $\pm o(1)$ of its expectation with probability $1-o(1)$.
\end{theorem}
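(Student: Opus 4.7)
The plan is to prove concentration via Chebyshev's inequality, which reduces the problem to showing $\operatorname{Var}(\lambda_{G,r}(t)) = o(1)$. Since every vertex in $B_{t,G}(r)$ is opinionated by time $t$, we may write
\[
\operatorname{Var}(\lambda_{G,r}(t)) = \frac{1}{V_G(r,t)^2} \sum_{u,v \in B_{t,G}(r)} \operatorname{Cov}(f_t(u), f_t(v)),
\]
so it suffices to bound this covariance sum by $o(V_G(r,t)^2)$. The main structural input is the run-length bound of Theorem~\ref{thm:changes_of_opinion_individual}: with probability $1-o(1)$, every vertex lies in an opinion-run of length at most $\rho(t) := C_{a,b,\eps}\,\sqrt{t}(\log t)^{3/2}$ along its branch in $T$, and crucially $\rho(t) \ll \Delta := t^{0.5+c}$. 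I would partition $B_{t,G}(r)$ into annular shells $S_i := B_{i\Delta, G}(r) \setminus B_{(i-1)\Delta, G}(r)$; the polynomial growth hypothesis yields $|S_i| = o(V_G(r,t))$ uniformly. I would also isolate the ``core'' $A := B_{bt-\rho(t), G}(r)$, inside which $f_t(v) = +1$ with probability $1 - e^{-t^{\Theta(1)}}$ by Theorem~\ref{thm:ub_truth_frontier}. This identifies $\mathbb{E}[\lambda_{G,r}(t)]$ with $V_G(r,bt)/V_G(r,t) + o(1)$, in agreement with Theorem~\ref{thm:unreliability_expectation}, and ensures that covariance contributions from pairs with at least one endpoint in $A$ sum to $o(V_G(r,t)^2)$.

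For pairs $(u,v)$ with both endpoints in $B_{t,G}(r) \setminus A$, I would control the covariance through the decoupled model: by construction, opinion correlations between two vertices are driven by coalescence of their backward origin processes $\barorigin_t(\cdot)$ (the offline analog of the coalescing random walk representation of the voter model), and distinct origins produce independent symmetric initial opinions. The run-length bound forces each origin to have depth at least $d_G(r, \cdot) - \rho(t)$, whereas a common origin of $u$ and $v$ must lie on the shared root-prefix of $u$ and $v$ in $T$ and therefore have depth at most $\min(d_G(r,u), d_G(r,v))$. Combining these constraints forces $|d_G(r,u) - d_G(r,v)| \leq 2\rho(t) \ll \Delta$. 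Consequently, for any cross-pair lying in shells $S_i, S_j$ with $|i-j| \geq 2$, the covariance is bounded by $e^{-t^{\Theta(1)}}$ (the run-length failure probability), so these pairs contribute at most $o(V_G(r,t)^2)$ in total.

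The remaining within-shell and adjacent-shell pairs are handled using only the trivial bound $|\operatorname{Cov}| \leq 1$:
\[
\sum_i (|S_i| + |S_{i+1}|)^2 \;\leq\; 4 \bigl(\max_i |S_i|\bigr) \sum_i |S_i| \;=\; o(V_G(r,t)) \cdot V_G(r,t) \;=\; o(V_G(r,t)^2).
\]
Summing the three contributions yields $\operatorname{Var}(\lambda_{G,r}(t)) = o(1)$, and Chebyshev's inequality finishes the proof. The main technical obstacle I foresee is formalizing the origin-coincidence argument across different branches of the communication tree $T$: one needs to track $\barorigin_t(\cdot)$ through both the initial-opinion-formation and error-correction steps, and then apply a union bound over $O(V_G(r,t)^2)$ candidate pairs. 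Fortunately, the exponentially small failure probability inherent in the run-length bound absorbs this union bound without damage.
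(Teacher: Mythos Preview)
Your overall architecture matches the paper's: reduce to a variance bound via Chebyshev, express the variance as a normalized covariance sum, and control covariances through the probability that two vertices share a decoupled origin (this is exactly Lemma~\ref{lem:prob_origin_second_moment}), then use the polynomial-growth hypothesis to show that for a uniform pair $(u,v)$ one typically has $|d_G(r,u)-d_G(r,v)|>t^{0.5+c}$. Your explicit isolation of the core $A=B_{bt-\rho(t),G}(r)$ is a correct and useful refinement: Lemma~\ref{lem:prob_origin_second_moment} as stated needs the \emph{total} same-origin probability to be $o(1)$, which fails here since a constant fraction of vertices sit inside the truth frontier with common origin $0$; peeling off $A$ (where each $f_t(v)$ has variance $e^{-t^{\Theta(1)}}$, so Cauchy--Schwarz kills those covariances) resolves this cleanly.

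The gap is in your appeal to Theorem~\ref{thm:changes_of_opinion_individual}. That theorem is a \emph{temporal} bound --- it limits how many \emph{rounds} a fixed vertex keeps the same opinion --- and says nothing about the depth of the creation node of a decoupled run. Your intermediate claim, that if $u$ has origin $k$ then the creation node $x_k$ lies at depth at least $d_G(r,u)-\rho(t)$, is false: a vertex just past the truth frontier (depth $\approx bt$) can have origin $k$ with creation depth $t_k=O(1)$, so $d_G(r,u)-t_k$ is of order $t$, not $\tilde O(\sqrt t)$. The conclusion you are after (same origin forces $|d_G(r,u)-d_G(r,v)|\ll t^{0.5+c}$) is true, but the correct tool is the frontier-concentration estimate of Lemma~\ref{lem:progress_of_k_frontiers}: on each branch, $\sigma_k(t)=t_k+\beta(t-t_k)\pm O(t^{0.5+c'})$ with failure probability $e^{-t^{\Theta(1)}}$, and since $u,v$ both lie in run $k$ on their respective branches, subtracting the two estimates bounds $|d_G(r,u)-d_G(r,v)|$. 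This is the lemma the paper invokes. Note also that Theorem~\ref{thm:changes_of_opinion_individual} only delivers failure probability $\varepsilon$ with a constant $C_{a,b,\varepsilon}$ that depends on $\varepsilon$, so it cannot survive the union bound over $V_G(r,t)^2$ pairs that you propose; the $\exp(-\delta^2/3)$ tail in Lemma~\ref{lem:progress_of_k_frontiers} can.
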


\paragraph{Other biases}
The results above characterize scenarios where the opinion bias is close to $0$ 
or some other fixed value. 
We note that for finite graphs, growth of balls always stops. 
Thus for finite graph Theorem~\ref{thm:unreliability_expectation} provides only partial information about the bias of the process. 

For example, hypothetically it is possible, even in a graph with an initial exponential volume growth around the root, that at the time when the {\em last node} 
in the graph receives an initial piece of information, almost all nodes already have the correct opinion. 

To understand why this might be true, consider a binary tree on $n$ nodes rooted at $r$ and fix $v$ to be one of the nodes of the leaves. Attach a path of length $g(n)$ to $v$ and call the resulting tree $T$, where $g(n) \gg \log n$. Note that at the time when the last node obtains information in this process, all the nodes in the original binary tree have the correct information. Thus if 
$g(n) \ll n$, at this time the bias is $1-o(1)$. Similarly if we choose 
$g(n) = c n$, then by the time $n + \log^2(n)$ nodes receive information, almost all of them will have the correct information. 

It is not clear if such a pathological tree can be the BFS tree of finite bounded degree expanders.

\section{Paths and Trees: Model and Basic Observations}
\label{sec:paths_and_trees}

As discussed in Section~\ref{sec:model}, our model for general graphs $G$ (with root $r$) can be reduced to an information spread model over a \emph{directed tree}, where the tree is picked among all BFS trees of $G$ rooted at $r$. 
We now shift our focus toward these directed rooted trees. Consider an arbitrary fixed rooted tree $T$ whose edges are directed from the root $r$ outward. Thus, each node $v \neq r$ has a unique parent $p(v)$. 
For convenience, we now redefine (and later proceed to analyze) the part in Definition \ref{def:model_general_Graphs} concerning trees. 



\begin{definition}[Information spread on rooted tree]
\label{def:rooted_tree}
Fix parameters $a, b \in [0,1]$ and a tree $T$ with root $r$.
Each node $v$ has a label $f_t(v)$ at any time $t$, which is one of three values: $+1$ (holds "correct" opinion), $-1$ (holds ``incorrect'' opinion), or $\none$ (no opinion). The root is always marked by $1$, that is, $f_t(r)=1$ for all $t$, and any other node $v$ is initially marked by $\none$, that is, $f_0(v) = \none$.

The process proceeds in rounds, where in each round, each node $v \neq r$ may listen to its parent in the tree $p(v)$ and change its opinion accordingly. 
The transition rule for round $t > 0$ is as follows.
If $v$ does not have an opinion but its parent does, that is, $f_{t-1}(v) = \none \neq f_{t-1}(p(v))$, then 
\[
        f_t(v) = \begin{cases}
        f_{t-1}(p(v))  &\text{w.p. } 1-a,
        \\
        -f_{t-1}(p(v)) &\text{w.p. } a.
        \end{cases}
\]
If $v$ already has an opinion, that is, $f_{t-1}(v) \neq \none$, we set
 \[
        f_t(v) = \begin{cases}
        f_{t-1}(p(v))  &\text{w.p. } b,
        \\
        f_{t-1}(v)) &\text{w.p. } 1-b.
        \end{cases}
\]
Otherwise (if $f_{t-1}(v) = f_{t-1}(p(v)) = \none$), we have $f_t(v) = \none$.
\end{definition}
Note that it is impossible for a node with an opinion to return to a ``no opinion'' state, but that a node can generally flip its opinion multiple times.
Now, one may ask what is the typical dynamic of opinions in such a model. First, 
does the process eventually converge to the correct opinion? for example, given any fixed $v$, does the probability that $f_t(v) = 1$ tend to one as $t \to \infty$?
The answer to the latter question is positive. This follows using the next observations, whose proof is immediate from the definition of the process.
\begin{observation}

Let $r = v_0 \to v_1 \to \ldots \to v_i = v$ be the unique path from the root to a node $v$, and suppose that $f_t(v_j) = 1$ for all $0 \leq j \leq i$. Then $f_{t'}(v_j) = 1$ for all $t' \geq t$. 
\end{observation}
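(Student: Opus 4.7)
The plan is to prove this by straightforward induction on $t' \geq t$. The base case $t' = t$ is precisely the hypothesis of the observation, so there is nothing to do. For the inductive step, I would assume $f_{t'}(v_j) = 1$ for every $0 \leq j \leq i$ and deduce the same statement at time $t'+1$.

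For $j = 0$, the root satisfies $f_{t'+1}(v_0) = f_{t'+1}(r) = 1$ directly from Definition~\ref{def:rooted_tree}. For $j \geq 1$, the inductive hypothesis gives $f_{t'}(v_j) = 1 \neq \none$, so the relevant transition rule at step $t'+1$ is the \textbf{Opinion updates} clause. Under that clause,
\[
f_{t'+1}(v_j) \in \{\, f_{t'}(p(v_j)),\ f_{t'}(v_j) \,\} = \{\, f_{t'}(v_{j-1}),\ f_{t'}(v_j) \,\},
\]
and by the inductive hypothesis both values equal $1$. Hence $f_{t'+1}(v_j) = 1$ deterministically, closing the induction.

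Conceptually, what makes the observation valid---even though its proof is an immediate unpacking of the update rule---is that the opinion-update transition is \emph{noise-free}: its only two outcomes are ``copy the parent'' or ``keep the current value,'' and when these two coincide both outcomes equal that common value. The noisy branch governed by $a$ never fires once a node already holds an opinion, so there is no mechanism for an opinionated node either to flip away from a matching parent or to revert to $\none$. I do not anticipate any substantive obstacle here; the observation is really a sanity check confirming that once a consistent correct prefix has formed along the path from $r$ to $v$, it is frozen in place forever, which is what justifies defining the truth frontier as a monotonically advancing object.
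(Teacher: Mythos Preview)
Your proof is correct and matches the paper's approach: the paper simply states that the observation's ``proof is immediate from the definition of the process'' and gives no further argument, so your induction on $t'$ is exactly the unpacking the paper leaves implicit.
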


\begin{observation}
Let $r = v_0 \to v_1 \to \ldots \to v_i = v$ be the unique path from the root to a node $v$, and suppose that $f_t(v_j) \in \{-1,1\}$ for all $0 \leq j \leq i$. Then $f_{t'}(v_j) \in \{-1,1\}$ for all $t' \geq t$. 
\end{observation}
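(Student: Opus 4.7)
The plan is to prove this by a straightforward induction on $t' \geq t$, using the explicit transition rules of Definition \ref{def:rooted_tree} together with the fact that, along the path $v_0 \to v_1 \to \cdots \to v_i$, the parent of each $v_j$ (for $j \geq 1$) is exactly $v_{j-1}$, which also lies on the path.

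The base case $t' = t$ is given by hypothesis. For the inductive step, I would assume $f_{t'}(v_j) \in \{-1,+1\}$ for every $0 \leq j \leq i$ and show the same for $t'+1$. For $j = 0$ there is nothing to check, since $f_{t'+1}(r) = 1$ by definition. For $j \geq 1$, the inductive hypothesis tells us $f_{t'}(v_j) \neq \none$, which places us in the ``already has an opinion'' branch of the transition rule; thus $f_{t'+1}(v_j)$ equals either $f_{t'}(v_j)$ or $f_{t'}(p(v_j)) = f_{t'}(v_{j-1})$. Both of these are in $\{-1,+1\}$ by the inductive hypothesis (applied to $v_j$ and to $v_{j-1}$, respectively), which completes the step.

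The only subtlety worth spelling out is the observation that the update rule for a node whose current opinion is already in $\{-1,+1\}$ can only copy its own opinion or its parent's opinion, and in particular \emph{cannot} reset the label back to $\none$. This is immediate from Definition \ref{def:rooted_tree}, but it is the key structural fact: once a node has a $\pm 1$ label, it keeps one forever. Since this applies uniformly to each $v_j$ with $j \geq 1$, and since $v_0 = r$ is pinned to $1$, the induction goes through. There is no real obstacle here; the statement is essentially a bookkeeping consequence of the transition rules, in direct parallel with the preceding observation (which handled the analogous statement for the label being $+1$ rather than just being in $\{-1,+1\}$).
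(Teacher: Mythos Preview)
Your induction is correct and matches the paper's approach: the paper states that the proof is ``immediate from the definition of the process'' and gives no further details, and your argument is exactly the routine verification that makes this immediate. There is nothing to add.
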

Equipped with the above observations, we can define two types of frontiers: the \emph{rumor frontier}, which marks the collection of furthest nodes from the root that currently hold a (non-$\none$) opinion; and the \emph{truth frontier}, which marks the collection of furthest nodes that hold the correct opinion and will continue to hold it for the rest of the process. Formally, the definition is as follows.
\begin{definition}[Rumor and Truth Frontiers]\label{def:rumor_truth_frontier}
Consider the rooted information spread process on a tree $T = (V,E)$ with root $r$ and let $t > 0$. The \emph{rumor frontier} after round $t$ is the collection of all nodes $v \in V$ with $f_t(v) \neq \none$ and that either (i) have a (child) node $u \in V$ with $p(u) = v$, which satisfies $f_t(u) = \none$, or (ii) do not have children. 
Formally, the {\em Rumor Frontier} at time $t$, denoted $R_t$, of the spread process $\{f_t(v)\}$, is given by 
$$ R_t = \{v \in V\ |\  f_t(v) \in \{-1,1\} \  \text{and}\  \exists u \in V \text{ s.t. } f_t(u) = \none \ \text{and}\ v = p(u)\}.$$
The \emph{truth frontier} after round $t$ is the collection of all nodes $v$ satisfying the following: the shortest path $r = v_0 \to v_1 \to \ldots \to v_i = v$ from $r$ to $v$ satisfies $f_t(v_0) = f_t(v_1) = \ldots = f_t(v_i) = 1$, whereas either (i) there exists a child node $u \in V$ with $p(u) = v$ and $f_t(u) \neq 1$, or (ii) $v$ has no children. 
Formally, the {\em Truth Frontier} at time $t$, denoted $F_t$, of the spread process $\{f_t(v)\}$, is the unique set which satisfies the condition:
$$F_t = \{r \} \cup \{v \in V \setminus \{r\} \ |\  f_t(v) = 1 \ \text{and}\  p(v) \in F_t \ \text{and}\  \exists u \in V \text{ s.t. } f_t(u) \ne 1  \text{ and } v = p(u)\}.$$
\end{definition}
Observe that both the rumor frontier and the truth frontier always progress away from the root. Clearly, the rumor frontier after round $i$ contains all nodes at distance exactly $i$ from the root, but at what rate does the truth frontier move? is the rumor frontier substantially faster than the truth frontier? how do opinions typically behave between these two frontiers?

\paragraph{The rooted path}
A special case of particular importance is that of a rooted path, which may either refer to the \emph{finite} path
$$
\cP_n = u_0 \to u_1 \to \ldots \to u_n
$$
where the root is $r=u_0$, or the \emph{infinite path} with the same root:
$$
\cP_\infty = u_0 \to u_1 \to \ldots \to u_n \to \ldots
$$
In general, the results for paths in this paper apply both to the finite or the infinite path, where in the finite case we typically think of $n$ as a large parameter, whereas the model parameters $\alpha$ and $\beta$ are thought of as fixed (independent of $n$).
The following simple lemma provides a lower bound on the speed of the truth frontier for the rooted path $\cP_n$, where $n$ can be finite or $\infty$. Note that in this case, both the truth frontier and the rumor frontier at any given time are of the form $\{v_i\}$ for some $i$ (possibly different between the two frontiers), and with slight abuse of notation we sometimes write that the node $v_i$ is the frontier. 

\begin{lemma}[Lower bound for truth frontier]\label{lem:lb_truth_frontier}
Fix $a,b \in (0,1)$. Consider the spread process $\{f_t(v)\}$ associated with parameters $(\cP_n, r, \cP_n, a, b)$, where $n$ may be finite or infinite.
For every $t$, the truth frontier $F_t$ is a singleton, and $F_t = \{v_i\}$ where the distribution of $i$ dominates $\min\{\Bin(t-1,b),n\}$.\footnote{For two real valued random variables $X$ and $Y$, we say that $X$ dominates $Y$ if for every $v$ it holds that $\Pr[X \geq v] \geq \Pr[Y \geq v]$.} In particular,
$i \geq \min\{(b - o(1))t, n\}$ with probability $1 - e^{-t^{\Theta(1)}}$. The $o(1)$ term tends to zero at a polynomial rate as $t \to \infty$.
\end{lemma}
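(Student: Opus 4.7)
The plan is as follows. First I would observe that on the path the truth frontier $F_t$ is always a singleton: by the definition, $v_i \in F_t$ iff $f_t(v_0) = \cdots = f_t(v_i) = 1$ together with $f_t(v_{i+1}) \ne 1$ (or $i = n$), which pins down a unique index $i = i_t$. Moreover, by the first Observation preceding the lemma, once the prefix $v_0, \ldots, v_i$ is all $+1$ it remains so forever, so $i_t$ is monotone non-decreasing. Define the first-passage times $T_k := \min\{t \geq 0 : i_t \geq k\}$, with $T_0 = 0$. To prove the stochastic dominance I plan to construct a coupling (on a possibly augmented probability space) between the process and a sequence $G_1, G_2, \ldots$ of i.i.d.\ $\Geom(b)$ variables (supported on $\{1,2,\ldots\}$) satisfying $T_k - 1 \leq \tilde T_k := G_1 + \cdots + G_k$ almost surely for every $k \leq n$. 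This gives $i_t \geq \max\{k : \tilde T_k \leq t-1\} \wedge n$, whose distribution is $\Bin(t-1, b) \wedge n$, yielding the desired dominance.

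To set up the coupling I would induct on $k$, using the key fact that the rumor frontier advances deterministically by one per round, so $v_k$ first forms an opinion exactly at round $k$ and each subsequent update uses a fresh $\text{Bernoulli}(b)$ coin. Given the history up to the stopping time $T_{k-1}$, the value $f_{T_{k-1}}(v_k)$ falls into three cases. (I) $+1$: then $T_k = T_{k-1}$ automatically. (II) $-1$: by the strong Markov property applied to $v_k$'s update coins beyond round $T_{k-1}$, the waiting time until $v_k$ is first updated to $+1$ (its parent being permanently $+1$ from time $T_{k-1}$ onward) is $\Geom(b)$, so $T_k - T_{k-1} \sim \Geom(b)$. (III) $\none$: this forces $T_{k-1} = k-1$, since $T_{k-1} \geq k-1$ (the rumor frontier at time $T_{k-1}$ sits at $v_{T_{k-1}}$ and dominates the truth frontier) while $f_{T_{k-1}}(v_k) = \none$ requires $T_{k-1} < k$; at round $k$, $v_k$ then forms its initial opinion correctly with probability $1-a$, giving $T_k - T_{k-1} = 1$, or incorrectly with probability $a$, giving $T_k - T_{k-1} = 1 + J$ for $J \sim \Geom(b)$. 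I would set $G_k$ equal to the in-process $\Geom(b)$ waiting time whenever one appears (cases (II) and the incorrect subcase of (III)) and otherwise sample $G_k$ independently from augmenting randomness; disjointness of the update coins across different $k$ together with the strong Markov property ensure the resulting $G_k$'s are i.i.d.\ $\Geom(b)$.

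The inductive step reduces to checking $T_k - T_{k-1} \leq G_k + \delta_{k-1}$, where $\delta_{k-1} := \tilde T_{k-1} - (T_{k-1} - 1) \geq 0$ by the induction hypothesis. Cases (I) and (II) give $T_k - T_{k-1} \in \{0, G_k\}$, both trivially at most $G_k + \delta_{k-1}$. The main subtlety is the incorrect-formation subcase of (III), where $T_k - T_{k-1} = 1 + G_k$ and we need $\delta_{k-1} \geq 1$. Here one crucially uses the structural fact that case (III) forces $T_{k-1} = k-1$: combined with $\tilde T_{k-1} = G_1 + \cdots + G_{k-1} \geq k-1$ (since each $G_j \geq 1$), this yields $\delta_{k-1} \geq 1$, exactly closing the gap. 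I expect this precise book-keeping --- ensuring $\delta_{k-1} \geq 1$ exactly in the subcase that requires it --- to be the most delicate point, with the joint independence of the $G_k$'s across in-process/fresh ``modes'' being the other spot warranting care.

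Finally, once the coupling is in place, a standard Chernoff bound $\Pr[\Bin(t-1,b) < (b-\varepsilon)t] \leq e^{-\Omega(\varepsilon^2 t)}$ together with the choice $\varepsilon = t^{-\gamma}$ for a small constant $\gamma > 0$ yields $i_t \geq (b - o(1))t$ with an $o(1)$ term decaying polynomially, and with failure probability at most $e^{-t^{1-2\gamma}} = e^{-t^{\Theta(1)}}$, completing the proof.
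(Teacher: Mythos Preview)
Your proof is correct and follows the same idea as the paper: compare the truth-frontier index to a $\Bin(t-1,b)$ auxiliary process via a coupling. The paper's own proof is a two-line sketch that defines the auxiliary $\{1,\none\}$-valued process with $i(t)\sim\Bin(t,b)$ and simply asserts ``it is also clear that the truth frontier in the process $f_t$ dominates $i(t-1)$''; your first-passage-time coupling with the $G_k\sim\Geom(b)$ variables, and especially the careful handling of case~(III) (truth frontier coinciding with rumor frontier, where you use $T_{k-1}=k-1$ and $\tilde T_{k-1}\ge k-1$ to squeeze out the needed $\delta_{k-1}\ge 1$), is exactly what is required to make that assertion rigorous.
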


For completeness we provide the proof of this easy lemma below. 
\begin{proof}
Consider a different process on the path with values $1$ and $\none$. 
At time $0$, $v_0$ has the value $1$ and all other nodes the value 
$\none$. At step $t$ of the process, let $i(t)$ be the maximal value where 
$v_i = 1$. Then let $v_{i+1} = 1$ with probability $b$ and $v_{i+1} = \none$ with probability $1-b$. All other nodes do not change. 
It is clear that $i(t) \sim Bin(t,b)$. It is also clear that the truth frontier in the process $f_t$ dominates $i(t-1)$. The second statement of the lemma follows from Chernoff bound. 
\end{proof}

The above arguments cannot easily yield an upper bound on the truth frontier, as there might exist additional mechanisms which accelerate the progress of this frontier in our process. Indeed, it may be the case that the current truth frontier meets and ``swallows'' consecutive runs of nodes holding a $+1$ value, thereby substantially pushing the truth frontier forward. 
In fact, it is not clear a priori whether the rumor and truth frontier are separated at all. In the next subsection, we conduct a thorough analysis of the process. We show, among other results, that the lower bound on the truth frontier speed is in fact tight (up to lower order terms); along the way, we obtain interesting characterizations of the behavior of nodes that lie between the truth frontier and the rumor frontier at a given time.

\section{The Decoupled Framework and its Analysis}
\label{sec:analysis_path}


At the core of our analysis is an object called a \emph{run}. In the directed path, this is a collection of consecutive nodes along the path that all have the same opinion at a given time. The more general definition in trees involves a connected component of nodes having the same value.

\begin{definition}[Connectivity-based run]\label{def:connectivity_run}
Consider the spread process $\{f_t(v)\}_{t,v}$ with parameters $(G,r,T,a,b)$. The \emph{run} containing $v$ at time $t$, denoted $\run_t(v)$, is defined as the maximal connected component of $T$ satisfying $f_t(v) = f_t(u)$ for every $u \in \run_t(v)$. 
The {\em value} of $\run_t(v)$ is defined as $f_t(v)$. 
\end{definition}

Clearly, the truth frontier does not equal the rumor frontier at a given time in the process if and only if a $(-1)$-valued run exists at this point. 

How do runs form and evolve throughout the process? How likely are they to survive and what is their typical structure? 
addressing such questions is essential to understand our spread process. Unfortunately, the behavior and structure of each particular run in our process is heavily dependent on the behavior of other runs, and the dependencies seem difficult to analyze. Among other challenges, it is not immediately clear how to analyze the fact that runs can ``swallow'' one another along the process.

In order to mitigate these issues and simplify the analysis, we establish results showing that (i) our spread process is equivalent to a ``decoupled'' model with its own version of the truth frontier, and a modified notion of runs which can evolve with time, but do not merge; and (ii) while the behavior of any particular run is heavily dependent upon other events throughout the process, this can be mitigated by considering intermediate frontiers, that lie between the truth and the rumor frontier. As we show, frontiers whose starting points are separated away are unlikely to merge quickly, which implies that at least one (decoupled) run survives between them.

\subsection{The decoupled model}
\label{sec:decoupled_model}
One feature of runs in our spread process over a tree (Definition \ref{def:rooted_tree}) is that they can merge with time, a feature that is seemingly difficult to analyze. Consider for instance the case where $T$ is the rooted path $r = v_0 \to \ldots \to v_n$ and observe some point in the process where there are three consecutive runs of lengths 100, 1, 100 (note that their corresponding values are either $+1, -1, +1$ or the symmetric $-1, +1, -1$). If the single node $u$ in the central run changes its opinion in the current round to the parent's opinion (with probability $b$), while its child does not change its opinion (with probability $1-b$), then the two long runs merge into a longer run of length at least $200$. 
To overcome this, we propose and analyze another tree model in which runs do not merge and the value associated with a run is independent from those of other runs (for example, two consecutive runs in the path may have the same value, but are still considered separate runs). In the modified model, instead of updating a node's own opinion to its parent's with some probability, the node may decide (with a different probability) to pick its updated value as a result of an independent coin flip.

\begin{definition}[Information spread process over tree with explicit decoupled runs]
\label{def:rooted_tree_decoupled}
Fix parameters $\alpha, \beta \in [0,1]$ and a tree $T$. As in Definition \ref{def:rooted_tree}, each (non-root) node $v$ has a fixed parent $p(v)$. Additionally, every node holds an opinion $g_t(v) \in \{+1, -1, \none\}$ at time $t$. As before, the root always has opinion $+1$, and at time $t=0$, all other nodes have opinion $\none$ initially. 
In addition, we maintain the following information:
\begin{itemize}
    \item A collection $\bar Z = Z_0, Z_1, Z_2, \ldots$ where $Z_0 = 1$ and the other $Z_i$ values are picked independently and uniformly at random from $\{-1, 1\}$. Intuitively, $Z_i$ is the value of run number $i$, where smaller $i$ corresponds to an older run, and the first run always represents the correct value. For convenience, we set $Z_\infty = \none$.
    \item A counter, $\COUNT_t$, initialized to $\COUNT_0 = 1$; the value of $\COUNT_t$ at any given time $t$ is the amount of values from the sequence $\bar Z$ observed so far, that is, the number of runs we have ever had. 
    \item For each node $v$ at time $t$, we define a function $\origin_t(v)$, which maps $v$ to an index $j$ such that $g_t(v) = Z_j$. Initialize $\origin_0(r) = 0$ and $\origin_0(v) = \infty$ for all $v \neq r$. This is a mapping from a node to the run it currently belongs to.
\end{itemize}
The update rule for any $t \geq 0$ is defined as follows: 
\begin{enumerate}
    \item For any node $v$ with $g_t(v) = g_t(p(v)) = \none$, we set $g_{t+1}(v) = \none$ and $\origin_{t+1}(v) = \infty$.
    \item Fix an arbitrary ordering over all nodes $v$ with $g_t(v) = \none$ and $g_t(p(v)) \neq \none$. For any such $v$ (sequentially according to the ordering): 
    \begin{itemize}
        \item For $j = \COUNT_t$, we set
            \[
            g_{t+1}(v) = \begin{cases}
            g_{t}(p(v))  &\text{w.p. } 1-\alpha,
            \\
            Z_{j} &\text{w.p. } \alpha.
            \end{cases}
            \]
        \item In the first case, we set $\origin_{t+1}(v) = \origin_t(p(v))$, meaning that by taking the last value of $p(v)$, node $v$ joins the same run that $p(v)$ has most recently been in. We also set $\COUNT_{t+1} = \COUNT_t$.
        \item The second case corresponds to the creation of a new run; we subsequently set $\origin_{t+1}(v) = \COUNT_{t}$ and $\COUNT_{t+1} = \COUNT_t+1$.
    \end{itemize}
    \item For any node $v$ already having an opinion, i.e., $g_t(v) \neq \none$,
    \[
g_{t+1}(v) = \begin{cases}
g_{t}(p(v))  &\text{w.p. } \beta,
\\
g_{t}(v) &\text{w.p. } 1-\beta,
\end{cases}
\]
where in the first case we set $\origin_{t+1}(v) = \origin_t(p(v))$ (that is, $v$ joins the parent's run) and in the second case we set $\origin_{t+1}(v) = \origin_t(v)$ (meaning that $v$ remains in the same run).
\end{enumerate}
Formally thus a {\em decoupled process} has parameters $(G,r,T,\alpha,\beta)$ and is given by $(\barg=\{g_t(v)\}_{t,v}, \barZ= \{Z_\ell\}_\ell,  \barCOUNT = \{\COUNT_t\}_t, \barorigin = \{\origin_t(v)\}_{t,v})$ with $t,\ell \in \N$, $v \in V$ and with $g_t(v) \in \{-1,\bot,1\}$, $Z_\ell \in \{-1,1\}$, $\COUNT_t \in \N$ and $\origin_t(v)\in \N$ for every $\ell,t,v$. 
\end{definition}

\begin{definition}[Decoupled run]
For a decoupled process $\dcpl$ as given in 
 Definition \ref{def:rooted_tree_decoupled}
x and $\ell,t \in \N$, the $\ell$-th \emph{run} at time $t$, denoted $R_\ell(t)$, is defined as the collection of all nodes $v$ with $\origin_t(v)=\ell$. The \emph{creation time} of $R_{\ell}$ is the minimum $t$ for which $R_{\ell}(t)$ is non-empty.
\end{definition}

In the next observation we state a few basic properties of decoupled runs.
To this end, let us recall some standard notions on trees. The branch leading to a node $u$ is the unique simple path in the tree from the root $r$ to $u$. Any node $v \neq u$ on this path is said to be an ancestor of $u$, and correspondingly, $u$ is a descendent of $v$.   


\begin{lemma}[Properties of decoupled runs]
For every sequence of parameters $(G,r,T,\alpha,\beta)$ and decoupled process $\dcpl$ with these parameters, and every $t\in\N$ and $v \in V$, the following hold:
\begin{itemize}
    \item $g_{t}(v) = Z_\ell$ where $\ell = {\origin_t(v)}$. 
    \item For every ancestor $u$ of $v$, it holds that $\origin_t(u) \leq \origin_t(v)$.
    \item For every $t' > t$, we have $\origin_{t'}(v) \leq \origin_t(v)$.
    \item The induced subgraph of the tree $T$ on $R_{\ell}(t)$ is not necessarily connected. However, the intersection of $R_{\ell}(t)$ with any branch of $T$ is connected. 
\end{itemize} 
\end{lemma}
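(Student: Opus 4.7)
The plan is to prove the four properties by a single induction on the time step $t$, verifying the invariants after each application of the update rules in Definition~\ref{def:rooted_tree_decoupled}. The base case $t=0$ is immediate from the initialization: $g_0(r)=1=Z_0$ and $g_0(v)=\none=Z_\infty$ for $v\neq r$, so property~(i) holds; property~(ii) holds since either $u=r$ (with $\origin_0(r)=0$) or both $u,v$ are non-root with infinite origin; property~(iii) is vacuous; and property~(iv) is trivial since every run is either $R_0(0)=\{r\}$ or empty.

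For the inductive step, I would address properties (i)--(iii) by a direct case analysis on which branch of the update rule applies to $v$. Property~(i) follows in each case from how $\origin_{t+1}(v)$ is set in tandem with $g_{t+1}(v)$: the ``join parent'' branches give $g_{t+1}(v)=g_t(p(v))=Z_{\origin_t(p(v))}=Z_{\origin_{t+1}(v)}$ by the inductive hypothesis, the ``new run'' branch gives $g_{t+1}(v)=Z_{\COUNT_t}$ with $\origin_{t+1}(v)=\COUNT_t$ by definition, and the ``stay'' branches preserve both. Property~(iii) reduces to checking that $\origin_{t+1}(v)$ equals $\origin_t(v)$, or $\origin_t(p(v))$, or (only when $\origin_t(v)=\infty$) the finite value $\COUNT_t$; in the middle case I invoke property~(ii) inductively on $p(v)$ as an ancestor of $v$.

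Properties~(ii) and~(iii) are the heart of the argument and I would prove them jointly. Given an ancestor $u$ of $v$ and any rule applied to $v$, I combine the inductive form of~(ii) for $u$ relative to $p(v)$ with~(iii) applied to $u$ between steps $t$ and $t+1$: the first gives $\origin_t(u)\leq\origin_t(p(v))$, and the second gives $\origin_{t+1}(u)\leq\origin_t(u)$, whence $\origin_{t+1}(u)\leq\origin_{t+1}(v)$ whenever $v$ either joins $p(v)$'s run or stays. When $v$ opens a new run with $\origin_{t+1}(v)=\COUNT_t$, the bound is automatic because every pre-existing origin is strictly less than $\COUNT_t$. The main obstacle here is the sequential processing inside rule~2, during which $\COUNT$ increments while other unopinionated siblings are still being handled; I would address this by running a finer induction within a single time step, processing the unopinionated nodes one at a time in the fixed ordering and treating each as its own update, so that the monotonicity statements (ii)--(iii) remain well defined at every intermediate state.

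For property~(iv), the branch-connectivity claim is a direct corollary of~(ii): if $v,v'\in R_\ell(t)$ with $v$ an ancestor of $v'$, then for any $u$ on the path from $v$ to $v'$, applying~(ii) twice gives $\ell=\origin_t(v)\leq\origin_t(u)\leq\origin_t(v')=\ell$, so $u\in R_\ell(t)$. For the ``not necessarily connected'' part I would exhibit a small example: let $b$ be a node with two children $c_1,c_2$, have $b$ open a new run $\ell$ at time $t_0$ via the $\alpha$-flip, then at time $t_0+1$ let both $c_1,c_2$ join this run via the $(1-\alpha)$ case, and at time $t_0+2$ let $b$ successfully update to its own parent's (strictly smaller) origin via the $\beta$-rule. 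Then $R_\ell(t_0+2)=\{c_1,c_2\}$, which is disconnected in $T$ since the only vertex of $T$ adjacent to both $c_1$ and $c_2$ is $b$, and $b\notin R_\ell(t_0+2)$.
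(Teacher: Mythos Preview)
Your proof is correct and follows the same approach the paper takes: the paper states that properties (i)--(iii) and the branch-connectivity in (iv) ``follow in a straightforward manner from the model definition'' (i.e., exactly the induction/case analysis you spell out), and for the disconnectedness in (iv) gives the same two-children example you construct. Your only addition is the explicit finer induction within a time step to handle the sequential processing in rule~2, which is harmless but in fact unnecessary here, since any node processed under rule~2 has all of its ancestors already opinionated and hence outside rule~2.
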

The proof of most statements follows in a straightforward manner from the model definition and is thus omitted. To see the first statement of the last part, observe for example that it may be possible for two nodes $u, v$ with $p(u) = p(v) = w$ to both have the same origin at time $t$ as that of their parent $w$, whereas $w$ may at some time $t' \geq t$ take a different origin; it follows that at time $t'$, these nodes $u,v$ will be in a run not containing $w$.


Note that in the new model, runs are created only at the rumor frontier, whereas each node $v$ moves with time to runs that are older and older (until eventually joining the first run, which happens once the truth frontier passes by $v$). Unlike connectivity-based runs, neighboring decoupled runs may hold the same value.
While the new model is quite a bit more complicated to define than the original one, we show that it is (a) equivalent to the original model, in the sense that the corresponding probabilistic processes can be coupled; and that (b) its description is sufficiently expressive for us to be able to carefully track the behaviour of runs throughout the process.
The next lemma formally states the equivalence as a coupling between the processes.

\begin{lemma}[Equivalence between models]
\label{lem:equivalence_models}
For every parameter setting $(G,r,T,a,b)$ with $0 \leq a \leq 1/2$ there is a coupling between the spread process $\{f_t(v)\}_{t,v}$ with parameters $(G,r,T,a,b)$ and the decoupled process
$\dcpl$ with parameters $(G,r,T,\alpha=2a,\beta=b)$ satisfying $g_t(v) = f_t(v)$ for all $t$ and $v$.
\end{lemma}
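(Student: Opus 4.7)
The plan is to prove by induction on $t$ that there exists a coupling under which $f_t(v) = g_t(v)$ for every $v$. The base case $t = 0$ is immediate since both processes deterministically set the root to $+1$ and every other vertex to $\none$. For the inductive step I would construct the coupling transition by transition, handling the three update rules of \Cref{def:rooted_tree} and \Cref{def:rooted_tree_decoupled} (pre-opinion, initial opinion, and opinion update) separately. Of these, the interesting case is the initial-opinion step, where a single biased coin of the original process must be synchronized with both the ``join parent's run vs.\ spawn a new run'' decision and the sign $Z_j$ attached to any new run in the decoupled process.

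For each initial-opinion event at $(t,v)$ I would draw a single shared uniform $U_{t,v} \in [0,1]$ and interpret it as follows. In the original process, $f_{t+1}(v) = f_t(p(v))$ iff $U_{t,v} \le 1-a$, and $f_{t+1}(v) = -f_t(p(v))$ otherwise. In the decoupled process with $\alpha = 2a$, the same $U_{t,v}$ triggers: joining the parent's run (so $g_{t+1}(v) = g_t(p(v))$) if $U_{t,v} \le 1 - 2a$; creating a new run with $Z_{\COUNT_t} = g_t(p(v))$ if $U_{t,v} \in (1-2a,\, 1-a]$; and creating a new run with $Z_{\COUNT_t} = -g_t(p(v))$ if $U_{t,v} > 1-a$. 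The three region probabilities are $1-2a$, $a$, and $a$, so under this coupling the decoupled process sees probability $1-\alpha$ of joining the parent's run and probability $\alpha$ of spawning a new one, and, conditional on spawning, the new $Z_j$ is an unbiased $\pm 1$ independent of everything else. Consequently, the induced sequence $Z_1, Z_2, \ldots$ really is i.i.d.\ uniform on $\{\pm 1\}$ as the definition requires. The hypothesis $a \le 1/2$ is precisely what ensures $1 - 2a \ge 0$, making the first sub-interval non-empty and the coupling well defined. Assuming $g_t = f_t$, a case check through the three sub-intervals yields $g_{t+1}(v) = f_{t+1}(v)$ in each case.

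For the opinion-update step I would share a single $\mathrm{Bernoulli}(b)$ coin at each $(t,v)$ with $f_t(v) \neq \none$: both processes copy the parent exactly when the coin is $1$, and since $\beta = b$ and $f_t = g_t$ by induction, the coupled outcomes match. The pre-opinion case is deterministic and trivially agrees. Combining the three cases closes the induction. The main subtlety is the initial-opinion coupling, which rests on the identity $(1-\alpha) + \alpha/2 = 1-a$ at $\alpha = 2a$ together with the observation that the $Z_j$'s may be generated lazily when their runs are created; once one describes the decoupled process this way (rather than with the $Z_j$'s prescribed in advance), a single uniform can carry all the randomness at $(t,v)$ simultaneously and the coupling falls out cleanly.
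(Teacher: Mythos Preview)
Your proposal is correct and follows essentially the same approach as the paper: both hinge on the identity $(1-\alpha)+\alpha/2=1-a$ at $\alpha=2a$ for the initial-opinion step, with the remaining cases being straightforward. Your version is somewhat more explicit in that you actually construct the coupling via a shared uniform (and lazily generate the $Z_j$'s), whereas the paper simply checks that the one-step transition probabilities coincide and invokes the existence of a coupling.
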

\begin{proof}
In both models, the value of the root always equals one; furthermore, in both models, if $v$ and $p(v)$ have the same value at time $t$, then $v$ will still hold that value in time $t+1$. 

By induction, it remains to consider two situations: one where $p(v)$ has an opinion and $v$ has no opinion at time $t$; and another where they hold contrasting (non-$\none$) opinions at time $t$.
In the first situation, the probability in the original model for $f_{t+1}(v) = f_t(p(v))$ is $1-a$. In the second model, we have $g_{t+1}(v) = g_t(p(v))$ either if (i) $v$ picks its parent's value with probability $1-\alpha = 1-2a$, or (ii) $v$ picks a new value, $Z_{\COUNT}$, with probability $2a$, and the said value equals $g_t(p(v))$ with probability $1/2$ (independently of the rest of the process). Therefore, the probability for  $g_{t+1}(v) = g_t(p(v))$ is $1-2a + 2a \cdot 1/2 = 1-a$.

In the situation where $f_{t}(v) = -f_t(p(v)) \neq \none$, the first model sets $f_{t+1}(v) = f_t(p(v))$ with probability $b$, while the second model does so with probability $\beta = b$.
\end{proof}

Note that the truth frontier of the decoupled process (which consists at time $t$ of every node $v$ that satisfies $\origin_t(v) = 0$ and $\origin_t(u) > 0$ for some neighbor $u$ of $v$) is dominated by the truth frontier of the original process. The proof is similar to that of Lemma \ref{lem:lb_truth_frontier}.

\subsection{Runs and Their Characteristics}
We now turn to address the aforementioned fundamental questions about (decoupled) runs: what is their probability to survive? what is their typical structure along the process?   

The following lemma clarifies the importance of the rooted path example, not just as an illustrative special case, but as a way to analyze the behavior along any specific branch in a (possibly complicated) tree $T$. The lemma asserts that, when considering a general tree $T$, the behavior of runs along any particular branch $B$ of the tree is in a sense isomorphic to their behavior over the rooted path $\cP_{|B|}$ (where $|B|$ is the number of edges in the branch).
\begin{lemma}
\label{lem:run_branch_path}
Let $r \to v_{i_1} \to v_{i_2} \to \ldots \to v_{i_m}$ be a branch of a rooted tree $T$ on which we run the decoupled model (Definition \ref{def:rooted_tree_decoupled}) with parameters $\alpha$ and $\beta$. Consider another instantiation of the same model (with parameters $\alpha, \beta$) when run on the rooted path $\cP_{m}$ with vertices $u_0 \to u_1 \to \ldots \to u_m$, and let $t > 0$. For every pair $0 \leq j < j' \leq m$, the probability that $v_{i_j}$ and $v_{i_{j'}}$ are in the same (decoupled) run after round $t$ in the branch setting is equal to the probability that $u_j$ and $u_{j'}$ are in the same (decoupled) run after round $t$ in the path setting. The results also hold when $m=\infty$, i.e., when the branch and path are both infinite.
\end{lemma}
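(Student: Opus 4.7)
The plan is to couple the decoupled process on the tree, restricted to the branch $B = (r, v_{i_1}, \ldots, v_{i_m})$, with the decoupled process on the path $\cP_m = (u_0, \ldots, u_m)$, using the same coin flips for corresponding nodes. Specifically, for each $j \in \{1, \ldots, m\}$, I couple the $\alpha$-biased coin that controls the initial-opinion event at $v_{i_j}$ (which fires at time $j$, the depth of $v_{i_j}$ in $T$) with the analogous coin at $u_j$, and likewise couple the $\beta$-biased update coin at $v_{i_j}$ at every later time $s > j$ with the one at $u_j$.

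The key structural observation is that the update rule for $\origin_{s+1}(v)$ depends only on the coin flip at $v$ itself and on $\origin_s(p(v))$; nothing about siblings or other subtrees enters. Therefore, once we fix an intrinsic label for each ``new origin'' event (say, the pair $(\text{node}, \text{creation time})$ at which it is generated), the family of intrinsic labels $\origin_s(v_{i_k})$ for $0 \le k \le m$ and $s \le t$ is a deterministic function of the branch coin flips alone. The global counter $\COUNT$ assigns each intrinsic label an integer value that may depend on the rest of $T$, but this assignment is injective, so two nodes on $B$ share the same $\origin_t$ iff they share the same intrinsic label.

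The cleanest way to characterize ``same intrinsic label'' is via a backward trace. Starting from $(v_{i_j}, t)$, at each step with current position $(v_{i_k}, s)$ and $s > k$, move to $(v_{i_{k-1}}, s-1)$ if the $\beta$-coin at $v_{i_k}$ at time $s$ copied the parent, otherwise to $(v_{i_k}, s-1)$; at $s = k$, either terminate with the label ``new origin at $(v_{i_k}, k)$'' when the $\alpha$-coin fires, or move to $(v_{i_{k-1}}, k-1)$; and reaching the root terminates at the root label. Two nodes on $B$ share a run at time $t$ iff their traces terminate at the same labeled endpoint. The identical trace description applies to $\cP_m$, and under the coupling the traces starting from $(v_{i_j}, t)$ and $(u_j, t)$ coincide step by step and terminate at matching endpoints.

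The equality of probabilities claimed in the lemma is then immediate: under the coupling, the indicator ``$v_{i_j}$ and $v_{i_{j'}}$ share a run at time $t$'' equals the corresponding indicator for $u_j$ and $u_{j'}$ almost surely. The argument is insensitive to whether $m$ is finite or infinite, because each backward trace from $(v_{i_j}, t)$ involves at most $t+1$ transitions. The main obstacle in a careful write-up is separating the genuinely global counter $\COUNT$ from the ``which birth event am I inheriting from'' information that is intrinsically local to the branch; the intrinsic-label device sketched above is precisely what makes this separation transparent.
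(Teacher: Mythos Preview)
Your proposal is correct. The paper's own proof and yours both construct a coupling in which the $\alpha$- and $\beta$-coins at $v_{i_j}$ are identified with those at $u_j$, but the bookkeeping is organized differently.

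The paper argues by forward induction on $t$: it maintains that the ordered partition of the branch into runs (which consecutive nodes share an origin) coincides with the ordered partition of the path into runs, and then checks case by case that the one-step transitions preserve this. It never explicitly names the ``intrinsic label'' device; the global integer values of $\origin$ and $\COUNT$ are quietly ignored in favor of the partition structure.

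Your argument instead uses the dual, backward-trace viewpoint (the coalescing-walk representation the paper alludes to in Section~\ref{ssec:related}): each $(v_{i_j},t)$ is traced back through parent-copying and self-copying coins until it hits either the root or a ``new origin'' event at some $(v_{i_k},k)$, and two nodes share a run iff their traces terminate at the same endpoint. This makes the separation between the global counter $\COUNT$ (which depends on the rest of $T$) and the branch-local question ``which birth event did I inherit?'' completely explicit, and it dispenses with the induction. The paper's forward argument is a touch more elementary; your backward argument is cleaner about why the off-branch randomness is irrelevant, and the observation that each trace has length at most $t+1$ gives you the $m=\infty$ case for free.
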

The lemma is quite trivial. We include a formal proof for completeness sake. 
\begin{proof}
We prove by induction that there is a coupling between the two settings, where each state of the branch (first setting)
is coupled to a state of the path (second setting) so that the following holds: $v_{i_j}$ and $v_{i_{j'}}$ belong to same run in $T$ after round time $t$ if and only if $u_j$ and $u_{j'}$ belong to the same run in $P_m$ after round $t$. 
The base case is trivial, as at time $t=0$, in both settings the root is in run $0$ and all other nodes are in run $\infty$ (recall that this means that their $\origin$ is $\infty$, and their value is $\none$).

Now suppose the validity of the statement after $t$ rounds and fix such a coupled pair of a branch state and a path state. In particular, both the branch and the path contain at this point the same number $\ell = \ell_t$ of runs with the same lengths. Denote these runs in the branch, ordered from the root outward, by $R_B^1, \ldots, R_B^\ell$, and the corresponding runs in the path by $R_P^1, \ldots, R_P^\ell$. For every $0 \leq j \leq m$, observe that $v_{i_j}$ has an opinion if and only if $u_j$ has one. Consider all possible cases:
\begin{itemize}
    \item If $v_{i_j}$ is in the same run as $v_{i_{j-1}}$ (and so also $u_j$ and $u_{j-1}$ satisfy this), then $v_{i_j}$ and $u_j$ will remain in their current run at time $t+1$.
    \item If $v_{i_j}$ is the first node in its run $R_B^k$, where $k < \infty$ (that is, $v_{i_j}$ already has an opinion) then its probability to move to the previous run is $R_B^{k-1}$ is $\beta$, and otherwise $v_{i_j}$ remains in the same run $R_{B_k}$. The same is true for $u_{i_j}$ -- its current run is $R_P^k$, and it moves backward to $R_P^{k-1}$ with probability $\beta$, and otherwise stays.
    Finally, if $v_{i_j}$ is the first node in the $\infty$ run (which means that so is $u_j$), then $v_{i_j}$ joins run $R_B^{k-1}$ with probability $1-\alpha$, and otherwise forms a new run. The same is true for $u_j$.
\end{itemize}
Therefore, the distribution over run structure after round $t+1$ is identical for both settings, provided that at time $t$ the runs had the same structure. This completes the inductive proof.
\end{proof}

We next wish to analyze the typical structure of runs over a rooted path $P_n$. With the help of Lemma \ref{lem:run_branch_path}, this will later on allow us to argue that opinions of vertices which are far away from each other in the tree are nearly uncorrelated.

\paragraph{Decoupled runs as random walks}
Consider a typical run $R_{\ell}$ in our decoupled model over the path $\cP_n = v_0 \to v_1 \to \ldots \to v_n$ (or $\cP_\infty$) with root $v_0$. How likely is it to survive? how large can it become?
Note first that a new run $R_\ell$ is created at the rumor frontier with probability $\alpha$ in each round $t \leq n$ (independently of the model behavior so far). In the first phase after having been created, the far end of the run is the rumor frontier of the process; thus, it adds an element each round with probability $1-\alpha$. On the near end of the rumor frontier, the first element of the run has probability $\beta$ in each round to take its parent opinion and leave the current run. 

Now, consider the first phase for $R_\ell$ as completed in the first time (after this particular run's formation) that a no-opinion node at the rumor frontier forms a new run; this is a probability $\alpha$ event in each round. It is not hard to see that the probability for the run to survive this phase is $1 - \alpha \beta$. Condition on this event. In the second phase of the run, in each round the first node of it can join the previous run (thus leaving $R_\ell$) with probability $\beta$, while the first node \emph{after} the run joins $R_\ell$ with probability $\beta$. This means, effectively, that the run length assuming it reaches the second phase is distributed according to a \emph{uniform random walk} over the line (with nonzero probability, specifically $1-2\beta(1-\beta)$ to remain at the same location). The initial location of the walk is some positive value (which with large constant probability is bounded by $C/\alpha$ for an absolute constant $C$), and we consider the run as \emph{killed} as soon as the walk has reached the origin, i.e., the value $0$, during the process.

The following lemma summarizes the above discussion on lengths of runs as random walks.
\begin{lemma}[Run length as random walk]
Fix $\alpha, \beta \in (0,1)$. Consider the decoupled process $\dcpl$ with parameters $(\cP_n,r,\cP_n,\alpha,\beta)$, where $n$ may be finite or $\infty$. For $\ell > 0$, suppose that the run $R_\ell$ has creation time $t_0$. Then all of the following hold.
\begin{itemize}
    \item Write $z_\ell = |R_\ell(t_1)|$, where $t_1$ is the creation time of the next run, $R_{\ell + 1}$. Then $\E[z_\ell] = O(1)$. Moreover, $z_\ell > 0$ with probability bounded away from zero. Finally, $z_\ell < n^{0.1}$ with probability $1-\exp\left( -n^{\Theta(1)}\right)$. 
    \item Conditioning on the event that $z_\ell > 0$, the distribution of $|R_\ell(t)|$ for $t_1 \leq t \leq n$ is equal to that of an unbiased random walk on $\N \cup \{0\}$ with the following properties: (i) the walk starts at $z_{\ell}$ at time $t_1$; (ii) in any given time, it either increases by one with probability $c'$, decreases by one with the same probability $c'$, or stays put; and (iii) if the walk reaches $0$ at some point, it terminates.
\end{itemize}
\end{lemma}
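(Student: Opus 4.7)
First, I would verify by induction on $t$ that $\origin_t(v_i)$ is non-decreasing in $i$ along the path, so every run $R_\ell$ occupies a contiguous interval with well-defined leftmost and rightmost vertices. During rounds $t_0$ through $t_1$, the run $R_\ell$ is the most recently created run, so its rightmost vertex sits at the rumor frontier. In each round, the newly opinionated rumor-frontier vertex either joins $R_\ell$ (probability $1-\alpha$) or creates $R_{\ell+1}$ (probability $\alpha$), so $T_1 := t_1 - t_0 \sim \Geom(\alpha)$. Independently, the leftmost vertex of $R_\ell$ updates to the preceding run with probability $\beta$ per round, shrinking $R_\ell$ by one; interior vertices cannot leave $R_\ell$ since their parents share their origin. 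Because $|R_\ell|$ begins at one and grows by at most one per round, $z_\ell \leq T_1$, giving $\E[z_\ell] \leq \E[T_1] = 1/\alpha = O(1)$ and $\Pr[z_\ell \geq n^{0.1}] \leq \Pr[T_1 \geq n^{0.1}] = (1-\alpha)^{n^{0.1}-1} = \exp(-\Theta(n^{0.1}))$. On the event that $T_1 = 1$ and no left-shrinkage occurs in the sole round of Phase 1 (probability $\alpha(1-\beta) > 0$), we have $z_\ell = 1$, so $\Pr[z_\ell > 0] \geq \alpha(1-\beta)$, bounded away from zero.

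\textbf{Phase 2 (rounds $t \geq t_1$).} The size $|R_\ell(t)|$ changes only at the two boundaries of the run. The leftmost vertex of $R_\ell$ updates to the preceding run with probability $\beta$, decreasing $|R_\ell|$ by one. Independently, if $v_b$ denotes the rightmost vertex of $R_\ell$, the neighbor $v_{b+1}$ updates to its parent's origin $\ell$ with probability $\beta$, increasing $|R_\ell|$ by one. Combining these two independent Bernoulli$(\beta)$ events yields the claimed random walk: $|R_\ell(t+1)| - |R_\ell(t)|$ equals $+1$ with probability $\beta(1-\beta)$, $-1$ with probability $\beta(1-\beta)$, and $0$ otherwise, so $c' = \beta(1-\beta)$. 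Once $|R_\ell|$ hits $0$ the run is permanently dead, since a vertex can only acquire origin $\ell$ by inheritance from a vertex that already carries $\ell$, and no such vertex remains.

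The main subtlety is in Phase 2: one might worry that, as intervening runs $R_{\ell+1}, R_{\ell+2}, \ldots$ appear and disappear, the growth probability of $R_\ell$ depends on which run $v_{b+1}$ currently belongs to. The resolution is that the update rule depends only on the parent vertex (always $v_b$, fixed by the path structure) and on its current origin (always $\ell$ whenever $R_\ell$ is nonempty), so neither the identity nor the status of intervening runs affects the growth probability, which remains exactly $\beta$. With this observation in place, the random-walk claim follows from the Markov property of the configuration, and the three tail and expectation bounds for $z_\ell$ reduce to routine geometric-distribution estimates.
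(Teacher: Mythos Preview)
Your proposal is correct and follows essentially the same two-phase decomposition as the paper, which does not give a separate formal proof but presents the lemma as a summary of the informal discussion immediately preceding it. Your treatment is in fact more careful: you explicitly identify and resolve the Phase-2 subtlety (that the growth probability of $R_\ell$ is exactly $\beta$ regardless of which later run currently occupies $v_{b+1}$, because the update rule looks only at the parent's time-$t$ origin), a point the paper's discussion glosses over.
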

Unbiased or uniform random walks over the line are among the most well studied objects in the probability theory literature. It is well known, see e.g.~\cite{Ritter:81}, that the probability of a uniform random walk starting (say) at location $O(1)$ to reach $0$ in $m$ rounds is $1-\Theta(\sqrt{m})$, and that conditioning on this not happening, the largest value that the walk will reach is typically of order $\Theta(\sqrt{m})$. This is easy to see with logarithmic correction since with high probability the walk will never reach $C \sqrt{m} \log m$ and then one can use Gamblers Ruin. 

For fixed $\alpha, \beta$, our information spread process over $P_n$ runs for $O(n)$ rounds, and since a new run is created typically every constant number of rounds, the above reasoning suggests that in intermediate steps of the process (after $\Theta(n)$ steps have passed and with $\Theta(n)$ steps left to go), typically $\tilde \Theta(\sqrt{n})$ runs of length up to $\tilde O(\sqrt{n})$ should exist. 

However, this naive reasoning does not take into account the dependencies between the behaviors of different runs; an actual analysis of these dependencies seems non-trivial. For example, if $R_{\ell}$ is known to survive after a certain amount of steps, how does it effect the structure and behavior of its neighboring run $R_{\ell+1}$?

\subsection{Intermediate frontiers}
With this in mind, we circumvent the need to analyze dependencies between individual runs by considering large enough unions of consecutive runs. Unlike individual runs, these unions have the useful property that they are very likely to survive and roughly maintain the same length throughout the process. In what follows we provide tools to bound the rates at which the starting points of each of these runs (which we call \emph{intermediate frontiers}) progress with time.

The analysis of intermediate frontiers we conduct here serves a central role in the proofs of essentially all main results in this paper.
Specifically, we prove two lemmas characterizing the behavior of intermediate frontiers and the interaction between them. In Lemma \ref{lem:progress_of_k_frontiers} we show that these intermediate frontiers progress at a rate that is highly concentrated around its mean (specifically, the progress rate over $\tau$ rounds is typically $\beta \tau \pm O(\sqrt{\tau})$). In Lemma \ref{lem:subsequent_intermediate_frontiers}, we use this to show that frontiers whose creation times differ by $\tau$ are unlikely to meet within roughly the first $\approx \tau^2$ rounds after the creation of either of them.


\begin{definition}
[$k$-frontier of a path]\label{def:k_frontier}
Let $\dcpl$ be the decoupled process with parameters
$(\cP_n,r,\cP_n,\alpha,\beta)$ where $n$ is either finite or $\infty$. For $t,k \in \N$ we define  the \emph{$k$-frontier} of the process, denoted $\sigma_k(t)$, to be:
$$\sigma_k(t) = \max \{ i \in \N \ |\ i \leq n,\  \exists j \in \{0,\ldots,k\} \text{ s.t. } v_i \in R_j(t)\}.$$
 Finally, for $0 \leq t < t'$ we define the {\em $k$-frontier growth in $[t,t']$} to be the quantity $\Delta_k(t,t') = \sigma_k(t') - \sigma_k(t)$.
\end{definition}

Equivalently, the $k$-frontier is the largest index of a node in $R_j$, for the largest $j \leq k$ for which $R_j$ is non-empty at time $t$.

Thus, the $0$-frontier in the decoupled tree model (Definition \ref{def:rooted_tree_decoupled}) is a lower bound for the truth frontier in the original tree model (Definition \ref{def:rooted_tree}). The rumor frontier at time $t$ corresponds to the $\ell$-frontier for the maximum $\ell$ for which $\sigma_\ell(t) \neq \sigma_{\ell-1}(t)$, and we say in this case that the \emph{index} of the rumor frontier is $\ell$. We consider the $k$-frontiers for $0 < k < \ell$ as \emph{intermediate frontiers} and use concentration bound on the progress of certain frontiers to prove the main results.

The next technical lemma asserts that the progress rate of frontiers is highly concentrated.
For simplicity, the statement is given for the infinite path $\cP_\infty$.

\begin{lemma}[Progress rate of $k$-frontiers]
\label{lem:progress_of_k_frontiers}

For every decoupled process $\dcpl$ with parameters $(\cP_\infty,r,\cP_\infty,\alpha,\beta)$ and for every $k,s,t \in \N$ with $0\leq s < t$ we have $\mathbb{E}[\Delta_k(s,t)] = \mu + O\left(\frac{1}{\alpha}\right)$ where $\mu := \beta(t-s)$. 
Moreover 
for every $0 < \delta \leq \mu$ we have
$$
\Pr\left(
\bigg| \Delta_k(s,t) - \mu \bigg| >  \delta \sqrt\mu + \frac{\delta^2}{3\alpha} + 1 \right) < 2\exp\left(-\frac{\delta^2}{3}\right).  
$$
Furthermore if the index of the rumor frontier at time $s$ is larger than $k$ 
then $\mathbb{E}[\Delta_k(s,t)] = \mu$ and the above bound improves to
$$
\Pr\left(
\bigg| \Delta_k(s,t) - \mu \bigg| >  \delta \sqrt\mu \right) < \exp\left(-\frac{\delta^2}{3}\right).
$$
\end{lemma}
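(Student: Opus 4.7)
The plan is to reduce $\Delta_k(s,t)$ to a sum of independent Bernoulli random variables by distinguishing two regimes: the \emph{bulk} regime where the $k$-frontier lies strictly behind the rumor frontier, and the \emph{edge} regime where they coincide. The first step is two monotonicity observations that make this decomposition clean. First, the index of the rumor frontier is non-decreasing in time: at each round the rumor frontier advances from $v_t$ to $v_{t+1}$, and the update rule sets $\origin_{t+1}(v_{t+1})$ to be either $\origin_t(v_t)$ (``join parent's run'', probability $1-\alpha$) or $\COUNT_t \geq \origin_t(v_t)+1$ (``new run'', probability $\alpha$). Second, along the path the origin function is non-decreasing in position (the ``ancestor'' bullet in the basic properties of decoupled runs). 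Combined, these imply that the gap between the rumor frontier and $\sigma_k(t)$ is non-decreasing whenever positive, so the transition from edge to bulk is one-way: the process spends at most one contiguous initial interval in edge before (possibly) settling in bulk for the rest of $[s,t]$.

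The per-round dynamics can then be read off. In the bulk regime, if $\sigma_k(t)=m$ then $v_{m+1}$ has an opinion with origin $>k$, and by the update rule it joins its parent's run (origin $\leq k$) with probability $\beta$, advancing $\sigma_k$ by exactly one, independently across rounds. No position beyond $m+1$ can cross the $k$-threshold in the same step because at time $t+1$ every $v_{m+1+i}$ for $i\geq 1$ inherits either its own origin $>k$ or that of $v_{m+i}$, which at time $t$ was also $>k$. In the edge regime, per round, with probability $1-\alpha$ the unopinionated node at $\sigma_k(t)+1$ joins its parent's run and $\sigma_k$ advances by one (staying in edge), while with probability $\alpha$ a new run of fresh index $>k$ is opened and the process transitions to bulk. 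Thus the number $N_E$ of edge-regime rounds starting at time $s$ is stochastically dominated by $\Geom(\alpha)$, giving $\Pr[N_E > x] \leq (1-\alpha)^x \leq \exp(-\alpha x)$ and $\mathbb{E}[N_E] \leq 1/\alpha$. The ``furthermore'' clause now follows immediately: its hypothesis forces $N_E=0$, so $\Delta_k(s,t) \sim \Bin(t-s,\beta)$, giving $\mathbb{E}[\Delta_k(s,t)] = \mu$ exactly, and the concentration $\Pr[|\Delta_k(s,t)-\mu| > \delta\sqrt\mu] < \exp(-\delta^2/3)$ is a standard multiplicative Chernoff bound.

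For the general case, I would write $\Delta_k(s,t) = A_E + A_B$, where $A_E$ counts edge-regime advances and $A_B$ counts bulk-regime advances; from the dynamics $A_E \in \{N_E-1, N_E\}$ (every edge round is an advance except possibly the terminating transition round), and conditional on $N_E$ the variable $A_B$ is $\Bin(t-s-N_E,\beta)$ using fresh coins independent of the edge-regime coins. Direct computation then yields $\mathbb{E}[\Delta_k(s,t)] = \mu + (1-\beta)\mathbb{E}[N_E] - \Pr[\text{transition occurs within }t-s\text{ rounds}] = \mu \pm O(1/\alpha)$. For the upper tail I would couple $A_B$ to an independent $X \sim \Bin(t-s,\beta)$ (padding the skipped rounds with fresh $\beta$-coins), obtaining $\Delta_k(s,t) \leq N_E + X$; then $\Pr[N_E > \delta^2/(3\alpha)] \leq \exp(-\delta^2/3)$ together with a one-sided Chernoff $\Pr[X-\mu > \delta\sqrt\mu] \leq \exp(-\delta^2/3)$ and a union bound yield the claimed $2\exp(-\delta^2/3)$. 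For the lower tail, use $\Delta_k(s,t) \geq (N_E-1)+A_B \geq A_B - 1$ and condition on $N_E$: the Chernoff lower tail for $\Bin(t-s-N_E,\beta)$ (whose mean lies between $\mu - \beta N_E$ and $\mu$) gives $\Pr[\Delta_k(s,t) - \mu < -\delta\sqrt\mu - 1] \leq \exp(-\delta^2/2)$, which is also within the claimed bound. The main subtleties I expect to navigate are the random number of trials in $A_B$ (handled by the coupling to unconditional $\Bin(t-s,\beta)$ for the upper tail and by conditioning for the lower tail) and the off-by-one $A_E \in \{N_E-1,N_E\}$ due to the non-advancing transition round, which is exactly what generates the additive ``$+1$'' in the stated deviation threshold.
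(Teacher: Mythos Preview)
Your proposal is correct and follows essentially the same approach as the paper's proof: both split into the case where the $k$-frontier already sits behind the rumor frontier at time $s$ (pure $\Bin(t-s,\beta)$ plus Chernoff) and the case where it initially coincides with the rumor frontier, handling the latter by bounding the duration of the ``edge'' phase via a $\Geom(\alpha)$ variable and then applying the binomial analysis to the remaining rounds. Your bookkeeping is slightly more explicit (the coupling/padding for the upper tail, the $A_E\in\{N_E-1,N_E\}$ distinction), but the decomposition and the union-bound combination are the same; just note that in your lower-tail line you must use the full bound $\Delta_k(s,t)\geq (N_E-1)+A_B$ rather than the weaker $\geq A_B-1$ when conditioning on $N_E$, since the $(1-\beta)N_E$ term is what absorbs the shift in the binomial mean.
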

The above lemma implies, in particular, that in $\cP_\infty$, that 
a frontier that was in location $l$ at time $t_0$ will typically be in location $$l + \beta \tau \pm O(\sqrt \tau)$$ at time $t_0 + \tau$. In the finite case, i.e., when considering the finite path $\cP_n$, the behavior is similar but capped at location $n$: specifically, as long as 
$l + \beta \tau \leq n - \Theta(\sqrt{\tau})$, the same result holds for $\cP_n$; once this is not true anymore, the location of the frontier is typically between $n - \Theta(\sqrt{\tau})$ and $n$.


\begin{proof}[Proof of Lemma~\ref{lem:progress_of_k_frontiers}]
Consider first the case where index of the truth frontier at time $s$ is not $k$, that is, that the $k$-frontier is not the rumor frontier. In this case, the probability for the $k$-frontier to increase by one in any given round is $\beta$, and so $\Delta_k(s,t) \sim \Bin(t-s, \beta)$, with mean $\mu = \beta(t-s)$ 
(note that the $O(1/\alpha)$ term in the statement of the lemma is simply zero here). From a standard multiplicative Chernoff bound, we have
$$
\Pr(|\Delta_k(s,t) - \mu| > \delta \sqrt{\mu}) < \exp\left(-\frac{\delta^2}{3}\right),
$$
which completes the proof for the first case.

In the case that $R_k$ is the rumor frontier, denote by $s' > s$ the first round in which a subsequent non-$\infty$ run, $R_{k+1}$, is formed. Then $s' - s$ is distributed according to the geometric distribution $\Geom(\alpha)$, and so 
$$
\Pr(s'-s > \zeta) = (1-\alpha)^\zeta \leq \exp(-\alpha\zeta).
$$
Setting $\zeta = \delta^2/3\alpha$ we get a probability bound of $\exp(-\delta^2 / 3)$. Condition on the last event not holding, that is, on $s'-s \leq \zeta$. In this case, the $k$-frontier starts by increasing by one in each of the first $s'-s-1$ rounds; then, it stays at the same location for one round; and in the remaining $t-s'$ rounds, the increase is $\Delta_k(s', t) \sim \Bin(t-s', \beta)$. The initial $s'-s$ rounds contribute at most $\zeta + 1$ to the deviation, and the analysis of the subsequent $t-s'$ rounds is similar to the first case.
\end{proof}

\paragraph{Keeping frontiers disjoint.}
How many disjoint intermediate frontiers will we see at time $t$? Here, the  $k$-frontier and $k'$-frontier are disjoint at time $t$ if $\sigma_k(t) \neq \sigma_{k'}(t)$. Lemma \ref{lem:progress_of_k_frontiers} indicates that the $k$-frontier for each $k$ is with high probability in location within $\pm \tilde O (\sqrt{t})$ of its expected location as was measured when the run $R_k$ was first created.

On the other hand, a new run is created typically every roughly $\approx 1/\alpha$ rounds of the process, and in particular, such intermediate frontiers are densely created at different points throughout the process. 
Intuitively, these facts may suggest that there are roughly, perhaps up to lower order terms, $\approx \sqrt{t}$ disjoint intermediate frontiers at time $t$, of distance up to $\approx \sqrt{t}$ from each other.\footnote{For frontiers that were created more recently -- at  round $s = t - o(t)$, where $t$ is the current round number -- the distance between neighboring disjoint frontiers that we can hope to obtain is of order $\approx \sqrt{t-s}$.}

The next lemma suggests that this picture is indeed largely accurate. For ease of reading, the function $\omega(m)$ in the statement of the lemma can be thought of as, say, $\text{polylog}(m)$ or $m^{0.1}$. We did not try to optimize the dependence in $\alpha$ or $\beta$ in the proof.


\begin{lemma}[Condition for frontiers to remain disjoint]
\label{lem:subsequent_intermediate_frontiers}
Fix $\alpha, \beta \in (0,1)$ and a non-decreasing function $\omega \colon \N \to \N$ that satisfies $\omega(m) \leq O(m^{0.5-c})$ for an absolute constant $c > 0$. There exist $C = C_{\alpha, \beta} > 0$ and $C' = C'_\beta > 0$ which satisfy the following for the decoupled process $\dcpl$ with parameters $(\cP_\infty,r,\cP_\infty,\alpha,\beta)$. Let $0 \leq s < s' \leq t-C_{\alpha, \beta}$, and let $k, k'$ be the indices of the rumor frontier after rounds $s, s'$ respectively. If 
\begin{equation}
\label{eq:distance_between_frontiers_if}
s'-s > C'_\beta \cdot \omega(t-s) \cdot \sqrt{t-s}
\end{equation}
then with probability at least $1-4e^{-(\omega(t-s))^2}$, it holds that
$$
\sigma_k(t) = s + \beta(t-s) \pm 4 \omega(t-s) \sqrt{\beta(t-s)} \quad \text{and} \quad 
\sigma_{k'}(t) = s' + \beta(t-s') \pm 4 \omega(t-s) \sqrt{\beta(t-s)},
$$
and
$$\sigma_{k'}(t) - \sigma_k(t) > \omega(t-s) \cdot \sqrt{t-s} > 0.$$ 
\end{lemma}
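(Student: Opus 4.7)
The plan is to apply Lemma~\ref{lem:progress_of_k_frontiers} twice---once to $\Delta_k(s,t)$ and once to $\Delta_{k'}(s',t)$---take a union bound, and then subtract the two estimates. The key elementary observation that enables this is that in the decoupled model on $\cP_\infty$ the rumor frontier advances deterministically by exactly one position per round (an unopinionated child of an opinionated parent necessarily forms an opinion in the next round, regardless of how the $\alpha$-coin lands), so $\sigma_k(s)=s$ and $\sigma_{k'}(s')=s'$ hold with probability one. This removes all probabilistic uncertainty in the starting positions and reduces the problem to controlling the drift of each frontier during its remaining time horizon.

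For the $k$-frontier over $[s,t]$, since $k$ is itself the index of the rumor frontier at time $s$, we are in the general (non-``Furthermore'') case of Lemma~\ref{lem:progress_of_k_frontiers}. Taking the deviation parameter $\delta := \sqrt{3}\,\omega(t-s)$ gives
\[
  \bigl|\Delta_k(s,t) - \beta(t-s)\bigr| \;\leq\; \sqrt{3}\,\omega(t-s)\sqrt{\beta(t-s)} \;+\; \tfrac{\omega(t-s)^2}{\alpha} \;+\; 1
\]
with failure probability at most $2e^{-\omega(t-s)^2}$. Under the hypothesis $\omega(m)=O(m^{1/2-c})$ the ratio $\omega(m)/\sqrt{m}$ tends to zero, so by choosing the threshold $C_{\alpha,\beta}$ large enough, the $\omega^2/\alpha$ and additive-$1$ terms are each dominated by $\omega(t-s)\sqrt{\beta(t-s)}$, collapsing the error to $4\,\omega(t-s)\sqrt{\beta(t-s)}$. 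Applying the same lemma to $\Delta_{k'}(s',t)$ with the same choice $\delta=\sqrt{3}\,\omega(t-s)$ (deliberately \emph{not} $\omega(t-s')$, which would produce a weaker tail) yields an analogous estimate with deviation $\leq 4\omega(t-s)\sqrt{\beta(t-s')} \leq 4\omega(t-s)\sqrt{\beta(t-s)}$ and failure probability at most $2e^{-\omega(t-s)^2}$. A union bound gives the claimed $1-4e^{-\omega(t-s)^2}$ success rate and the first two displayed estimates in the lemma.

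Subtracting the two estimates, and using the deterministic starting positions, we get
\[
  \sigma_{k'}(t) - \sigma_k(t) \;=\; (s'-s) - \beta(s'-s) \;\pm\; 8\,\omega(t-s)\sqrt{\beta(t-s)} \;=\; (1-\beta)(s'-s) \;\pm\; 8\,\omega(t-s)\sqrt{\beta(t-s)}.
\]
Invoking the hypothesis $s'-s > C'_\beta\,\omega(t-s)\sqrt{t-s}$, the right-hand side is at least $\bigl[(1-\beta)C'_\beta - 8\sqrt{\beta}\bigr]\,\omega(t-s)\sqrt{t-s}$; choosing $C'_\beta > (1+8\sqrt{\beta})/(1-\beta)$ forces the bracket to exceed $1$ and delivers the desired separation. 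The only genuine obstacle is the bookkeeping around the absorption of the lower-order $\omega^2/\alpha$ and additive-$1$ terms into the main $\omega\sqrt{t-s}$ term; this is precisely where the hypotheses $\omega(m)=O(m^{1/2-c})$ and $s' \leq t - C_{\alpha,\beta}$ enter. No new probabilistic argument is required beyond two invocations of the preceding lemma.
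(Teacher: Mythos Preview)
Your proof is correct and follows essentially the same route as the paper's: two applications of Lemma~\ref{lem:progress_of_k_frontiers} with $\delta=\sqrt{3}\,\omega(t-s)$, a union bound, and subtraction, with $C_{\alpha,\beta}$ and $C'_\beta$ chosen to absorb the lower-order terms. Your write-up is in fact slightly more explicit than the paper's about why $\sigma_k(s)=s$, $\sigma_{k'}(s')=s'$, and about precisely where the hypotheses $\omega(m)=O(m^{1/2-c})$ and $s'\le t-C_{\alpha,\beta}$ are used to swallow the $\omega^2/\alpha$ and $+1$ terms.
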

Again, a similar result holds for the finite path $\cP_n$ (for large $n$ and fixed $\alpha, \beta$) as long as the expected location of either frontier is not very close to $n$.

\begin{proof}
Apply Lemma \ref{lem:progress_of_k_frontiers} twice, once for each of the pairs $(s,t)$ and $(s', t)$. Take $\delta = \sqrt{3} \cdot \omega(t-s)$. It follows that 
with probability at least $1 - 4\exp(-\delta^2/3) = 1-4 \exp\left(-(\omega(t-s))^2\right)$, the following two statements hold as long as $t-s'$ is large enough (as a function of $\alpha$ and $\beta$):
$$
\sigma_k(t) = s + \beta(t-s) \pm 2 \delta \sqrt{\beta(t-s)} \quad \text{and} \quad 
\sigma_{k'}(t) = s' + \beta(t-s') \pm 2 \delta \sqrt{\beta(t-s)}.
$$
Subtracting the former from the latter, we conclude that 
$$
\sigma_{k'}(t) - \sigma_k(t) \geq (1-\beta)(s' - s) - 4\delta\sqrt{\beta(t-s)} > 0
$$
where the last inequality holds if
$$
s' - s > \frac{4\sqrt{\beta}}{1-\beta} \cdot \delta \cdot \sqrt{t-s}.
$$
Setting $C' = \frac{4\sqrt{3 \beta}}{1-\beta} + \sqrt{3}$ completes the proof.
\end{proof}

Showing that large collections of  disjoint intermediate frontiers exist at all times throughout the process is a central argument in our proofs. Concretely, if there exist at least $k$ different frontiers (including the first one, i.e., the decoupled truth frontier) at some time $t$, then there are $k-1$ runs between the decoupled truth frontier and the rumor frontier. By bounding $k$ and analyzing the characteristics of these runs we can conclude, among other results, that the actual truth frontier is not far from the decoupled one. This result is proved next. 

\section{The Truth Frontier and When it Arrives}\label{sec:truth_frontier} 

\subsection{Upper Bound for Truth Frontier}
\label{sec:upper_bound_truth_frontier}

Consider the original process over a tree $T$ (Definition \ref{def:rooted_tree}) at time $t$ and recall our lower bound of $(1-o(1))bt$ for the truth frontier at time $t$, Lemma \ref{lem:lb_truth_frontier}. We now prove a matching upper bound, by leveraging the equivalence between the original model and the decoupled one (Lemma \ref{lem:equivalence_models}), picking a collection of sufficiently far intermediate frontiers that are nonetheless relatively close to the truth frontier, and showing that with high probability a $(-1)$-run survived at time $t$ between at least one couple of neighboring frontiers.

\begin{lemma}
\label{lem:progress_of_all_frontiers}
Fix $\alpha, \beta \in (0,1)$ and $0 < c < 1/2$. The following holds for the decoupled process $\dcpl$ with parameters $(\cP_\infty, r, \cP_\infty, \alpha, \beta)$. With probability at least $1-e^{-(1+o(1))t^{2c}}$, there are $\Omega_{\alpha, \beta}(t^{0.5-c})$ disjoint frontiers that at time $t$ are in locations 
$$\beta t \; + \;  C_\beta \cdot \omega(t) \sqrt t \cdot i \;\pm\; O(\omega(t) \sqrt t),$$ 
for a constant $C_\beta$ depending on $\beta$, and $i=0,1,\ldots,\Theta_{\alpha,\beta}(t^{0.5-c})$.
\end{lemma}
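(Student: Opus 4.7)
My approach is to build an explicit arithmetic progression of sample times, take at each sample time the (then-)rumor frontier as an ``anchor'' frontier, apply Lemma~\ref{lem:progress_of_k_frontiers} to locate each anchor at time~$t$, and then read off disjointness deterministically from the resulting placements. The parameter $\omega(t)$ appearing in the statement will be instantiated as $\omega(t) := t^c$.

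Concretely, set the spacing $T := C'_\beta\, t^c \sqrt t$, where $C'_\beta>0$ is a constant depending only on $\beta$ that I will choose large enough below, and define sample times $s_i := iT$ for $i = 0, 1, \ldots, K$ with $K := \lfloor (t - C_{\alpha,\beta})/T \rfloor = \Theta_{\alpha,\beta}(t^{0.5-c})$. Let $k_i$ be the index of the rumor frontier at time $s_i$. Because the decoupled process is run on $\cP_\infty$ and each round advances the rumor frontier by exactly one node, we have $\sigma_{k_i}(s_i) = s_i$, so we know the $k_i$-frontier exactly at time $s_i$ and only need to track its subsequent motion.

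I then apply Lemma~\ref{lem:progress_of_k_frontiers} to each pair $(s_i, t)$ with deviation parameter $\delta := \sqrt 3\, t^c$. For each $i$ this gives, with failure probability at most $2\exp(-t^{2c})$,
\[
\sigma_{k_i}(t) \;=\; s_i + \beta(t-s_i) \;\pm\; \Bigl(\sqrt 3\, t^c \sqrt{\beta(t-s_i)} + \tfrac{t^{2c}}{\alpha} + 1 \Bigr) \;=\; \beta t + (1-\beta)\, iT \;\pm\; O_{\alpha,\beta}\!\bigl(t^c \sqrt t\bigr),
\]
where the second equality uses $c < 1/2$ to absorb the $t^{2c}/\alpha$ term into $O(t^c\sqrt t)$. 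A union bound over the $K+1 = \Theta_{\alpha,\beta}(t^{0.5-c})$ indices multiplies the failure probability by only a polynomial factor in $t$, yielding an overall failure bound of $e^{-(1-o(1)) t^{2c}}$, as claimed. On this joint success event, consecutive anchors are separated by $(1-\beta)T \pm O(t^c \sqrt t)$; choosing $C'_\beta$ large enough relative to the hidden constant makes this positive and in fact $\Omega(t^c \sqrt t)$. Hence $k_0 < k_1 < \cdots < k_K$ are pairwise distinct and the locations take the claimed form with $C_\beta := (1-\beta)C'_\beta$.

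The only delicate point I expect is the $t^{2c}/\alpha$ summand inside the error term of Lemma~\ref{lem:progress_of_k_frontiers}: to match the desired $O(\omega(t)\sqrt t)$ width it must be dominated by $t^c\sqrt t$, which is exactly why the hypothesis $c < 1/2$ is used, and it also forces the implicit constants to depend on $\alpha$. Everything else is a direct Chernoff-style union bound over the $\Theta_{\alpha,\beta}(t^{0.5-c})$ anchors.
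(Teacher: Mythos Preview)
Your argument is correct and mirrors the paper's: both choose sample times $s_i$ in arithmetic progression of step $\Theta(t^c\sqrt t)$, record the rumor-frontier index $k_i$ at each, locate $\sigma_{k_i}(t)$ via concentration (you invoke Lemma~\ref{lem:progress_of_k_frontiers} directly; the paper routes through its packaging, Lemma~\ref{lem:subsequent_intermediate_frontiers}), and union-bound over $\Theta(t^{0.5-c})$ indices. One small fix is needed: with $K=\lfloor (t-C_{\alpha,\beta})/T\rfloor$ the residual $t-s_K$ can be as small as the constant $C_{\alpha,\beta}$, so $\mu=\beta(t-s_K)$ stays bounded while $\delta=\sqrt 3\,t^c$ grows, violating the hypothesis $\delta\le\mu$ of Lemma~\ref{lem:progress_of_k_frontiers} at $i=K$; the paper avoids this by capping at $s_l\le t/2$, and you can do the same (or simply drop the last index) without affecting the $\Omega(t^{0.5-c})$ count.
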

\begin{proof}
The proof follows from Lemma \ref{lem:subsequent_intermediate_frontiers}. by taking $\omega(m) = m^{c}$ and picking a sequence $s_0, s_1, \ldots, s_l$ 
as follows: 
$$s_i = i \cdot C'_\beta \cdot \omega(t) \sqrt{t} + 1,$$ where $C'_\beta$ is as in the aforementioned lemma, and $l$ is the maximum possible for which $s_l \leq t/2$. Note that $l = \Omega_{\alpha, \beta}(t^{0.5-c})$. Now applying Lemma \ref{lem:subsequent_intermediate_frontiers} with parameters $s = s_i, s' = s_{i+1}$ and $t$ for $i=0,1,\ldots,l$, and taking a union bound over all applications of the lemma, we get that the frontiers $k_{0}, \ldots, k_{l}$ which constitute the rumor frontier at times $s_0, \ldots, s_l$ are all disjoint at time $t$; Furthermore, the location of every frontier $k_i$ at time $t$ is
$$
\sigma_{k_i} = s_i \;+\; \beta(t-s_i) \;\pm\; O(\omega(t) \sqrt{t}) \;=\; \beta t \;+\; (1-\beta) s_i \;\pm\xspace\; O(\omega(t) \sqrt{t}).
$$
Note that the union bound is over a number of events polynomial in $t$, which goes into the $o(1)$ term in the probability expression.
\end{proof}

We are now ready to prove the matching upper bound on the truth frontier for the original tree spread process (Definition \ref{def:rooted_tree}). 
\begin{theorem}[Upper bound for truth frontier]
\label{thm:ub_truth_frontier}
Fix $a \in (0,1/2)$ and $b \in (0,1)$. Consider the original spread process with parameters $(\cP_n, r, \cP_n, a, b)$, with $n$ being finite or $\infty$. The truth frontier after round $t$ is the singleton $\{v_i\}$, where $i \leq \min\{(b+o(1))t, n\}$ with probability $1-e^{-t^{\Theta(1)}}$. The $o(1)$ term tends to zero at a polynomial rate as $t \to \infty$.
\end{theorem}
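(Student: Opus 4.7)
My plan is to pass to the decoupled model via the coupling of Lemma~\ref{lem:equivalence_models} (with $\alpha = 2a$, $\beta = b$) and, with probability $1 - e^{-t^{\Theta(1)}}$, exhibit a node at position at most $(b+o(1))t$ whose decoupled value is $-1$. The coupling then yields the same value in the original process, forcing the truth frontier to lie strictly before that position and proving the theorem; together with Lemma~\ref{lem:lb_truth_frontier} this pins down the frontier at $(b\pm o(1))t$.

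The first step is to invoke Lemma~\ref{lem:progress_of_all_frontiers} with $\omega(t) = t^{c_1}$ for a small constant $c_1 \in (0,\tfrac{1}{2})$. This produces, with the desired probability, a sequence of pairwise disjoint frontiers indexed by run numbers $\kappa_0 < \kappa_1 < \ldots < \kappa_L$ whose locations at time $t$ satisfy $\sigma_{\kappa_i}(t) \leq \beta t + O(\omega(t)\sqrt{t}\cdot i)$ for $i = 0, 1, \ldots, L$, with $L = \Omega(t^{0.5 - c_1})$. I will only keep the first $l = \lfloor t^{c_2} \rfloor$ of them, for any fixed $c_2 \in (0, \tfrac{1}{2} - c_1)$.

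The crucial observation is that disjointness gives $\sigma_{\kappa_0}(t) < \sigma_{\kappa_1}(t) < \ldots < \sigma_{\kappa_l}(t)$, so for each $i \in \{1,\ldots,l\}$ the endpoint node $v_{\sigma_{\kappa_i}(t)}$ must belong to some run $R_{j_i}$ with index $j_i \in (\kappa_{i-1}, \kappa_i]$ (otherwise it would already have been counted by $\sigma_{\kappa_{i-1}}(t)$). These $j_i$'s are distinct and all at least $1$, and by the construction of the decoupled process the associated values $Z_{j_1}, \ldots, Z_{j_l}$ are i.i.d.\ uniform $\pm 1$ and independent of the structural variables $\barCOUNT$ and $\barorigin$ that determine the $\kappa_i$'s and their locations. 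Hence the conditional probability that all of $Z_{j_1},\ldots,Z_{j_l}$ equal $+1$ is exactly $2^{-l} = e^{-t^{\Theta(1)}}$; complementarily, some $Z_{j_{i^\star}} = -1$, and the corresponding node sits at position at most
\[
\sigma_{\kappa_l}(t) \leq \beta t + O(\omega(t)\sqrt{t}\cdot l) = \beta t + O(t^{c_1 + 0.5 + c_2}) = (b + o(1))\,t,
\]
since $c_1 + c_2 < \tfrac{1}{2}$. The finite case $\cP_n$ reduces to $\cP_\infty$ when $(b+o(1))t < n$ via the obvious truncated coupling (the relevant positions all lie inside the path), and is otherwise vacuous because $i \leq n$ automatically matches the $\min\{(b+o(1))t,n\}$ conclusion.

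The step I expect to require the most care is justifying that the values $(Z_{j_i})_{i=1}^l$ really are independent of the structural information identifying the $\kappa_i$ and $\sigma_{\kappa_i}(t)$. Morally this is immediate because $\barZ$ is drawn once and for all, independently of the dynamics that generate $\barCOUNT$ and $\barorigin$; formally, one conditions on the $\sigma$-algebra generated by the structural variables, verifies that the $(Z_j)_{j \geq 1}$ remain i.i.d.\ uniform under this conditioning, and then applies the union bound. Once this independence is in place, the remainder of the argument is a short calculation combining the frontier-location bound with the $2^{-l}$ tail.
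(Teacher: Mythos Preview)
Your proposal is correct and follows essentially the same approach as the paper's own proof: pass to the decoupled model via Lemma~\ref{lem:equivalence_models}, invoke Lemma~\ref{lem:progress_of_all_frontiers} to produce polynomially many disjoint intermediate frontiers all located within $(b+o(1))t$, observe that between consecutive such frontiers there sits a run whose $Z$-value is an independent fair coin, and conclude that with probability $1 - 2^{-t^{\Theta(1)}}$ at least one of these runs carries the value $-1$, which blocks the truth frontier. Your write-up is in fact slightly more explicit than the paper's in two respects --- you pin down the run index $j_i \in (\kappa_{i-1},\kappa_i]$ at the endpoint node $v_{\sigma_{\kappa_i}(t)}$ rather than speaking loosely of ``a run between the frontiers,'' and you spell out why the $(Z_{j_i})$ remain i.i.d.\ uniform after conditioning on the structural variables $\barCOUNT, \barorigin$ --- but the underlying argument is identical, including the handling of the finite-$n$ case.
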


\begin{proof}
Consider the case that $n = \infty$. The case of finite $n$ requires an easy modification to be described later.

We apply Lemma \ref{lem:equivalence_models} and consider the decoupled model with $\alpha = 2a$ and $\beta = b$ at time $t$. Now apply Lemma \ref{lem:progress_of_all_frontiers} with (say) $\omega(m) = m^{0.1}$. 
With probability at least $1 - e^{-t^{\Theta(1)}}$, there are $\Omega(t^{0.4})$ disjoint frontiers at time $t$, out of which at least $t^{0.2}$ are in location at most $\beta t + O(t^{0.8})$.
Consider any of the runs between the $(i-1)$-frontier and the $i$-frontier for $i \leq t^{0.2}$. By definition of the decoupled model, the value associated with the run is $-1$ with probability $1/2$, independently of any other randomness in the process. Therefore, there exists at least one ($-1$)-run among these with probability at least $1-2^{-t^{0.2}}$. Conditioning on this event, the truth frontier of the coupled original process is bounded by $\beta t + O(t^{0.8})$, which completes the proof.

To handle the case where $n$ is finite, we note that if $n \leq t + O(t^{0.8})$ then the statement is trivial (as the truth frontier cannot be larger than $n$), and otherwise, the above proof works word-for-word by ``imagining'' that the process runs over $P_{\infty}$ and discarding any frontier that crosses $n$.
\end{proof}

The last theorem combined with Lemmas \ref{lem:lb_truth_frontier} and \ref{lem:run_branch_path} immediately yields the following corollary about the behavior of the process along any particular branch in general graphs.

\begin{corollary}[Truth frontier along tree branch]
\label{coro:truth_frontier}
Fix $a \in (0,1/2)$ and $b \in (0,1)$. Consider the original spread process with parameters $(G, r, T, a, b)$, and let $B$ be a branch of $T$ (rooted at $r$) whose length $n$ may be finite or $\infty$.
The intersection of the truth frontier with the branch $B$ after round $t$ is a singleton $\{v\}$, where the distance between $r$ and $v$ satisfies $d_{G}(r,v) = \min\{(b \pm o(1))t, n\}$ with probability $1-e^{-t^{\Theta(1)}}$. The $o(1)$ term tends to zero at a polynomial rate as $t \to \infty$.
\end{corollary}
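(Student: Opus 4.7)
The plan is to reduce the general tree setting directly to the path setting already analyzed. The key observation is that the spread process of Definition~\ref{def:rooted_tree} is \emph{local}: the update rule for a node $v$ at time $t$ depends only on $f_{t-1}(v)$ and $f_{t-1}(p(v))$. Since for any node $v$ on the branch $B$ its parent $p(v)$ also lies on $B$ (the branch, by definition, is a downward path from $r$), the joint distribution of $\{f_t(v)\}_{v \in B, \, t \geq 0}$ is identical to the joint distribution of $\{f_t(u_j)\}_{j=0,\ldots,n,\, t \geq 0}$ produced by the spread process with parameters $(\cP_n, r, \cP_n, a, b)$ for $n = |B|$. This reduction can either be phrased as a direct coupling of the two processes, or derived by composing Lemma~\ref{lem:equivalence_models} (passing to the decoupled model on the tree and on the path) with Lemma~\ref{lem:run_branch_path} (equivalence of run structure on a branch with run structure on a path); the direct coupling is simpler here.

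Next, I would verify that the intersection $F_t \cap B$ equals the truth frontier of the induced process on $B$ (viewed as a rooted path). By Definition~\ref{def:rumor_truth_frontier}, a node $v \in B$ lies at or behind the truth frontier at time $t$ iff every node on the shortest path from $r$ to $v$ in $T$ has opinion $+1$ at time $t$ (and will forever after). But that shortest path is an initial segment of $B$, so the condition depends only on opinions of nodes in $B$. Hence $F_t \cap B$ is precisely the singleton truth frontier of the path process, sitting at some index $i$ along $B$.

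With the reduction in hand, the conclusion follows by directly applying Lemma~\ref{lem:lb_truth_frontier} and Theorem~\ref{thm:ub_truth_frontier} to the path process on $\cP_n$: these yield $i = \min\{(b \pm o(1))t,\, n\}$ with probability $1 - e^{-t^{\Theta(1)}}$, where the $o(1)$ decays polynomially in $t$. Translating $i$ back to the tree, we get $d_G(r,v) = d_T(r,v) = i$ for the unique node $v \in F_t \cap B$ (since $B$ lies in a BFS tree and tree-distance along a branch equals the number of steps from $r$).

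There is no real obstacle beyond bookkeeping. The only point that deserves care is the branch-to-path coupling in the case that $B$ is a proper branch rather than a full root-to-leaf path: one must check that the presence of siblings off of $B$ does not influence opinions on $B$, which is immediate from the locality of the update rule. All quantitative bounds (the polynomial $o(1)$ and the stretched-exponential failure probability) come for free from the cited path results.
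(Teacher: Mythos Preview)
Your proposal is correct and follows essentially the same approach as the paper, which simply states that the corollary is immediate from Lemma~\ref{lem:lb_truth_frontier}, Lemma~\ref{lem:run_branch_path}, and Theorem~\ref{thm:ub_truth_frontier}. Your direct locality/coupling argument is in fact a slightly more elementary route than invoking Lemma~\ref{lem:run_branch_path} (which is phrased for the decoupled model), but the substance is identical.
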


\subsection{Stabilization of individual opinion: When am I correct?}
\label{sec:when_am_i_correct}

Consider the spread process over a tree $T$ with root $r$, and let $v \neq r$ be some node. We know that the opinion $f_t(v)$ converges to one as $t \to \infty$, and with very high probability as soon as $t \approx (1+o(1)) b \cdot d_T(r,v)$, where $d_T(r,v)$ is the distance between $r$ and $v$ in the tree. 

However, it may not be entirely realistic to assume that each individual agent knows the structure of the communication network, and in particular its distance from the root. Is there a way for such an agent to infer, as soon as possible, once it has stabilized to the correct opinion in a graph-independent way?

Our next set of results shows that the answer is positive. When a node $v$ that received its initial opinion $t_v$ rounds ago keeps the same opinion for some $\tilde{\Theta}\left(\sqrt{t_v}\right)$ rounds, with high probability this is because the opinion of $v$ has converged to the correct one.

\begin{theorem}
\label{thm:changes_of_opinion_individual}
Fix $a \in (0,1/2)$ and $b \in (0,1)$. Consider the original spread process with paramaters $(G, r, T, a, b)$ and let $v \neq r$ be a vertex in the graph. For any $\eps > 0$, there exists a constant $C = C(a,b,\eps)$, which does not depend on $G, T, r, v$, so that with probability at least $1-\eps$, the following holds. In the first $q$ rounds after $v$ forms an initial (non-$\none$) opinion, if the truth frontier has not reached $v$ during these $q$ rounds, then the longest consecutive streak of rounds in which $v$ does not change its opinion is at most 
$$C_{a,b,\eps} \cdot q^{1/2} \cdot (\log q)^{3/2}.$$
\end{theorem}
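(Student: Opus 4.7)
The plan is to reduce to the directed path, pass to the decoupled model, and then combine frontier concentration (Lemma~\ref{lem:subsequent_intermediate_frontiers}) with the independence of the run labels $\{Z_\ell\}_{\ell\ge 1}$. Via Lemma~\ref{lem:equivalence_models} we work in the decoupled process on $T$, and via Lemma~\ref{lem:run_branch_path} it suffices to analyze the decoupled model on $\cP_\infty$ with $v$ identified with $v_d$, where $d:=d_T(r,v)$. Then the opinion of $v$ at time $\tau$ equals $Z_{\origin_\tau(v)}$, and because the labels $Z_\ell$ are drawn i.i.d.\ uniform on $\{-1,+1\}$ independently of how the origin process unfolds, conditioning on the whole trajectory $\{\origin_\tau(v)\}_\tau$ leaves the $Z_\ell$'s actually visited still i.i.d.\ uniform. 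The hypothesis that the truth frontier has not reached $v$ in the $q$-round window forces $\origin_\tau(v)\ge 1$ throughout (since the original truth frontier dominates the decoupled $0$-frontier), so $Z_0=1$ never intervenes. Hence, if the origin of $v$ takes $m$ distinct values on an interval, the conditional probability that $v$'s opinion stays constant on that interval is at most $2^{-(m-1)}$.

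Fix $\eps>0$, set $L=L(q):=K_{a,b,\eps}\sqrt{q}(\log q)^{3/2}$ for $K_{a,b,\eps}$ to be chosen, and take $\omega(x):=2\sqrt{\log x}$ in Lemma~\ref{lem:subsequent_intermediate_frontiers}. For each interval $[\tau,\tau+L]$ contained in the first $q$ rounds after $v$ is opinionated, pick an arithmetic progression of creation times $s_1<s_2<\cdots<s_M$ with gap $C'_\beta\,\omega(q)\sqrt{q}$, chosen so that each expected location $s_i+\beta(\tau+L-s_i)$ lies in $(d,d+\beta L)$ with a $\Theta(\omega(q)\sqrt{q})$ buffer; let $k_i$ be the rumor-frontier index at time $s_i$, so $k_1<k_2<\cdots<k_M$. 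By Lemma~\ref{lem:subsequent_intermediate_frontiers} applied to each consecutive pair $(s_i,s_{i+1})$, except on an event of probability $\le 4q^{-4}$ per pair, the locations $\sigma_{k_i}(\tau+L)$ are increasing, lie in $(d,d+\beta L)$, and are separated by $(1-\beta)C'_\beta\omega(q)\sqrt{q}\pm O(\omega(q)\sqrt{q})$; likewise $\sigma_{k_i}(\tau)<d$ for each $i$. A simple count then gives
\[
M \;=\; \Theta_{a,b}\!\left(\frac{L}{\omega(q)\sqrt{q}}\right) \;=\; \Theta_{a,b}\!\left(\frac{K_{a,b,\eps}(\log q)^{3/2}}{\sqrt{\log q}}\right) \;=\; \Theta_{a,b}\bigl(K_{a,b,\eps}\log q\bigr).
\]
Since $\sigma_{k_M}(t)>\sigma_{k_{M-1}}(t)>\cdots>\sigma_{k_1}(t)$ for all $t$ in the window, the arrival times $\tau_M^*<\tau_{M-1}^*<\cdots<\tau_1^*$ at which these $M$ frontiers first reach $d$ are distinct and lie in $[\tau,\tau+L]$. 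At time $\tau_i^*$, the origin of $v_d$ equals the smallest $k$ with $\sigma_k(\tau_i^*)\ge d$, which lies in $(k_{i-1},k_i]$ (setting $k_0:=0$); as these intervals are pairwise disjoint, the origin of $v_d$ visits at least $M$ distinct values on $[\tau,\tau+L]$.

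Choosing $K_{a,b,\eps}$ large enough to guarantee $M\ge 10\log_2(q/\eps)$, the conditional probability that $v$'s opinion is constant on $[\tau,\tau+L]$ is at most $2^{-(M-1)}\le (\eps/q)^{10}$. Union-bounding over the at most $q$ admissible choices of $\tau$ and over the $O(\sqrt{q}/\omega(q))$ applications of Lemma~\ref{lem:subsequent_intermediate_frontiers} within each window, the total failure probability is at most $q\cdot(\eps/q)^{10}+q\cdot O(\sqrt{q})\cdot 4q^{-4}=o(\eps)$, giving the bound with $C_{a,b,\eps}:=K_{a,b,\eps}$.

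The main obstacle will be the bookkeeping around the buffers used to ensure $\sigma_{k_i}(\tau)<d\le\sigma_{k_i}(\tau+L)$ simultaneously for all $M$ selected frontiers, and the delicate balance between the two failure-probability contributions---from the frontier-concentration events controlled by Lemma~\ref{lem:subsequent_intermediate_frontiers} and from the ``all visited $Z$'s agree'' event---so that both fit under $\eps$ for a common choice of $K_{a,b,\eps}$. A minor technicality is handling small $q$ and the initial $O(1/a)$ rounds after $v$ is opinionated, during which not enough runs have yet been created for the progression $s_1<\cdots<s_M$ to fit below $\tau$; this corner case is absorbed into $C_{a,b,\eps}$ by taking the constant large enough that the regime is vacuous.
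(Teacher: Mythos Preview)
Your argument is essentially correct and rests on the same ingredients as the paper's proof: reduction to the path, passage to the decoupled model, the frontier-concentration Lemma~\ref{lem:subsequent_intermediate_frontiers}, and the independence of the labels $Z_\ell$. The organization, however, is somewhat different, and it is worth spelling out what each buys.

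The paper builds a \emph{single global} telescoping sequence $s_0>s_1>\cdots>s_l$ (with spacing growing like $\omega(d-s_i)\sqrt{d-s_i}$), applies Lemma~\ref{lem:subsequent_intermediate_frontiers} once per consecutive pair, and deduces that every decoupled run passes through $v$ for at most $O_{\alpha,\beta,\eps}(\sqrt{q\log q})$ rounds within the first $q$ rounds. It then separately bounds (via the i.i.d.\ $Z_\ell$'s and a $1/q^2$ summation) the maximal number of consecutive same-valued runs by $O(\log(1/\eps)+\log q)$, and multiplies. Because the frontier sequence is global and $\omega$ already carries a $\log(1/\eps)$ term, the resulting event has probability $\ge 1-\eps$ \emph{uniformly in $q$}, which is what Corollary~\ref{coro:when_to_stop} actually needs.

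Your per-interval construction (a fixed-gap progression of $s_i$'s tailored to each candidate $[\tau,\tau+L]$, followed by a union bound over $\tau$) is a legitimate alternative for a \emph{fixed} $q$, and is arguably more direct. Two small points: first, Lemma~\ref{lem:subsequent_intermediate_frontiers} needs the gap to exceed $C'_\beta\,\omega(t-s)\sqrt{t-s}$ with $t-s=\tau+L-s_i$, which can be as large as $q/(1-\beta)$, not $q$; this is absorbed into constants but should be said. Second, the number of applications of Lemma~\ref{lem:subsequent_intermediate_frontiers} per window is $M-1=\Theta(K\log q)$, not $O(\sqrt{q}/\omega(q))$; your final bound survives, but the bookkeeping line is off. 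More importantly, as written your failure probability $q\cdot O(\sqrt q)\cdot 4q^{-4}$ is $O(q^{-5/2})$ for a fixed $q$ but is not $\le \eps$ uniformly over all $q$; to match the paper's uniform statement (and to feed Corollary~\ref{coro:when_to_stop}) you would need to fold a $\log(1/\eps)$ into $\omega$, exactly as the paper does, so that the Lemma~\ref{lem:subsequent_intermediate_frontiers} failures become $O(\eps\cdot q^{-4})$ and sum over $q$. Finally, the ``$s_i\ge 0$'' issue you flag is genuine but is resolved by your own observation that on the event in the hypothesis the decoupled $0$-frontier has not reached $v$, whence $\sigma_0(d+q)<d$ forces $q\le d(1-\beta)/\beta+O_\eps(\sqrt{d+q})$ by Lemma~\ref{lem:progress_of_k_frontiers}; this guarantees the needed room.
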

\begin{proof}
The main idea is as follows. We consider the decoupled model and analyze the typical behavior of decoupled runs that go through $v$ as the process evolves, as a function of the number of rounds that $v$ has a non-$\none$ opinion so far. By bounding the size of the maximum decoupled run that has gone through $v$ up until any time of the process, and showing that with high probability no more than $O(\log n)$ consecutive decoupled runs have the same opinion, we can bound with high probability the behavior of runs in the original process as they go through $v$.

Let $d$ denote the distance from $r$ to $v$ and recall that $v$ receives its initial opinion at round $d$. Define $\omega = \omega_{\beta, \eps} \colon N \to N$ by 
$\omega(m) = 2 \sqrt{\log \frac{1} {\eps} + \log \frac{C'_\beta}{\beta} + \log{m}}$. 
We pick round numbers $0 \leq s_l < s_{l-1} < \ldots < s_2 < s_1 < d$ satisfying the following.
\begin{itemize}
    \item $s_0 = d-C_{\alpha, \beta}^{\eqref{lem:subsequent_intermediate_frontiers}}$, where $C_{\alpha, \beta}^{\eqref{lem:subsequent_intermediate_frontiers}}$ is the corresponding constant depending on $\alpha$ and $\beta$ defined in the statement of Lemma \ref{lem:subsequent_intermediate_frontiers}.
    \item $s_i$ for $i > 0$ is chosen as the maximum value for which $s_{i-1} - s_i > C'_\beta \omega(d-s_i) \sqrt{d-s_i}$, where $C'_\beta$ is as defined in Lemma \ref{lem:subsequent_intermediate_frontiers}. Take $l$ to be the first value for which $s_{l+1} < 0$; reset $s_l$ to $0$, delete $s_{l+1}$ and stop the process.
\end{itemize}
For any $1 \leq i \leq l$, let $k_i$ denote the index of the rumor frontier at time $s_i$.

Now for any $i$ pick $t_i$ to be the minimum time $t$ satisfying that $s_i + \beta(t-s_i) - 4\omega(t-s_i) \sqrt{\beta(t-s_i)} \geq d$.
Note that $t_{i} - t_{i-1} = O_{\alpha, \beta, \eps}(s_{i-1} - s_i)$. Similarly pick $t'_i$ as the maximum time $t$ satisfying $s_i + \beta(t-s_i) + 4\omega(t-s_i) \sqrt{\beta(t-s_i)} \leq d$.
Note that both $t_{i} - t'_i$ and $t_i - t_{i-1}$ are of order $\Theta_{\alpha, \beta, \eps}(s_{i-1} - s_i) = \Theta_{\alpha, \beta, \eps}(\sqrt{(d-s_i) \cdot \log(d-s_i)})$.

Apply Lemma \ref{lem:subsequent_intermediate_frontiers} for all tuples $(s_i, s_{i-1}, t_i)$ and $(s_i, s_{i-1}, t'_i)$ and take a union bound over all applications of the lemma, concluding that the statement of the lemma holds for all tuples simultaneously with probability $1-\eps/2$. Conditioning on the last event, the total number of rounds between the time the $k_i$-frontier reaches $v$ and the time $k_{i-1}$-reaches it is positive and bounded by $O_{\alpha, \beta, \eps} \left( \sqrt{(d-s_i) \cdot \log(d-s_i)} \right)$.

So far, we have identified a collection of frontiers that all remain separated when they pass through $v$, and the number of rounds that $v$ spends between each pair of consecutive (decoupled) frontiers is not too large. 
This means that each decoupled run passing through $v$ is not very large by itself. But it is still theoretically possible that streaks of decoupled runs all having the same value ($+1$ or $-1$) will together form a large run in the original process. However, the remainder of the proof shows that such long streaks are unlikely to exist.

Specifically, for any $q$, consider the first $q$ rounds after $v$ forms an initial opinion. In this time frame, at most $q$ frontiers go through $v$. For each pair of such consecutive frontiers $k_{i-1}$ and $k_i$, pick an arbitrary decoupled run $R^{(i)}$ between them. Note that the value associated with this run is uniformly picked from $\{-1,1\}$. Therefore, with probability at least $1-\eps / 4 q^2$, there is no collection of more than 
$$O\left(\log \frac{1}{\eps} + \log q\right)$$ 
consecutive runs all holding the same value. Conditioning on this, and since each individual decoupled run goes through $v$ for at most $O_{\alpha, \beta, \eps} \left( q^{1/2} (\log q)^{1/2} \right)$ rounds, no run in the original process (which may concatenate decoupled runs as long as they have the same opinion) goes through $v$ for more than $O_{\alpha, \beta, \eps} \left( q^{1/2} (\log q)^{3/2} \right)$ rounds during the first $q$ rounds. The probability of the union of all of these events is at least
$$
1 - \frac{\eps}{4} \cdot \sum_{q=1}^{\infty} \frac{1}{q^2} \geq 1 - \frac{\eps}{2},
$$ 
which combined with the first part, completes the proof of the theorem.
\end{proof}

\begin{corollary}
\label{coro:when_to_stop}
Consider a one-player game in the same setting as in Theorem \ref{thm:changes_of_opinion_individual}, where an agent is located at vertex $v$, knows $a,b$ in advance, and can only observe the opinions of $v$ throughout the process;
The agent does not have any information about the graph $G$, tree $T$, or root $r$. The agent has the ability to end the game after any round of the process. Its goal is to end the game as soon as possible \emph{after} the truth frontier has reached $v$. For any $\eps > 0$, there exists a strategy for the agent with success probability $1-\eps$ to end the game no more than $C_{a,b,\eps} \sqrt{t} (\log t)^{3/2}$ rounds after the truth frontier has reached $v$, where $t$ is the total number of rounds that $v$ has been holding a non-$\none$ opinion so far.
\end{corollary}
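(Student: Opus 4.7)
The plan is to convert Theorem~\ref{thm:changes_of_opinion_individual} directly into an explicit stopping rule and then check its two failure modes (stopping too early versus stopping too late). Let $C = C_{a,b,\eps}$ be the constant supplied by Theorem~\ref{thm:changes_of_opinion_individual} for the parameters $(a,b,\eps)$. The strategy I propose is: at every round $s \geq 1$ (measured from the round in which $v$ first forms a non-$\none$ opinion), the agent maintains $\ell(s)$, the length of the current streak during which $v$'s opinion has remained constant, and stops at the first $s$ for which
\[
\ell(s) > C \cdot s^{1/2} (\log s)^{3/2}.
\]
Since the agent observes its own opinion each round and knows $a,b,\eps$ in advance, both $\ell(s)$ and the threshold are computable from its local data only.

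Step one is the ``no premature stop'' bound. Invoke Theorem~\ref{thm:changes_of_opinion_individual}: with probability at least $1-\eps$, for every $q \geq 1$ simultaneously, if the truth frontier has not yet reached $v$ within the first $q$ rounds then the longest constant-opinion streak among those rounds is bounded by $C \cdot q^{1/2}(\log q)^{3/2}$. Specializing $q = s$ shows that, on this $(1-\eps)$-event, the stopping condition cannot fire at any round before the truth frontier arrives at $v$.

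Step two is the ``quick stop once the frontier arrives'' bound. Let $t^*$ be the round at which the truth frontier reaches $v$. By Definition~\ref{def:rumor_truth_frontier} together with the monotonicity observations preceding it, every ancestor of $v$ along the unique root-to-$v$ path has opinion $+1$ for all rounds $\geq t^*$, so $v$'s opinion is locked at $+1$ from round $t^*$ onward. Hence $\ell(t^* + \tau) \geq \tau$ for every $\tau \geq 0$. A short calculation shows that taking $\tau = C_{a,b,\eps}' \sqrt{t^*}(\log t^*)^{3/2}$ for a suitable constant $C'_{a,b,\eps}$ guarantees
\[
\tau > C \cdot (t^* + \tau)^{1/2}\bigl(\log(t^*+\tau)\bigr)^{3/2},
\]
using that $t^* + \tau \leq 2t^*$ in the relevant range. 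Therefore the agent stops by round $t^* + O_{a,b,\eps}(\sqrt{t^*}(\log t^*)^{3/2})$, and rewriting in terms of the stopping time $t = t^* + \tau$ yields the claimed bound $t - t^* \leq C_{a,b,\eps}\sqrt{t}(\log t)^{3/2}$.

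I do not expect any real obstacle here: the corollary is essentially a restatement of Theorem~\ref{thm:changes_of_opinion_individual} from the agent's point of view. The only subtlety to watch is that the $(1-\eps)$-event in that theorem must be \emph{uniform} over all $q$ simultaneously, not just a fixed one; this is already built into its proof through the $\sum 1/q^2$ union bound over round counts, so no additional probabilistic work is needed to piece the corollary together.
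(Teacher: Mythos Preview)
Your proposal is correct and follows essentially the same approach as the paper: define the stopping rule via a $C\,q^{1/2}(\log q)^{3/2}$ threshold on the current constant-opinion streak, argue it cannot fire before the truth frontier arrives (using the uniform-in-$q$ bound), and argue it fires within $O(\sqrt{t}(\log t)^{3/2})$ rounds after arrival since the opinion then freezes. The paper makes the same caveat you flag at the end---one must use the uniformity over all $q$ that is established \emph{inside} the proof of Theorem~\ref{thm:changes_of_opinion_individual} (via the $\sum 1/q^2$ union bound) rather than apply the theorem's statement once per $q$---so your reliance on the proof rather than the bare statement matches the paper's own argument.
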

We note that one cannot directly apply the statement of Theorem \ref{thm:changes_of_opinion_individual} to obtain the corollary, as each application of the theorem would incur a failure probability of $\eps$, and so a union bound based argument would not immediately work here. 
However, the corollary immediately follows from the \emph{proof} of the theorem. 
\begin{proof}[Proof of Corollary \ref{coro:when_to_stop}]
In the proof of the theorem we implicitly show the following two statements. (For simplicity, we omit dependencies in $a,b,\eps$ when not essential for the discussion.)
\begin{itemize}
    \item With probability at least $1-\eps/2$ the following holds for all $q$ simultaneously. Let $q$ denote the number of rounds since $v$ formed an initial opinion. Then the decoupled run that $v$ currently belongs is either the truth frontier, or it will remain at $v$ for at most $O(\sqrt{q \cdot \log q})$ rounds.
    \item Again, with probability at least $1-\eps/2$ the following holds for all $q$ simultaneously, where again $q$ is the amount of time since $v$ formed an opinion. The maximum number of consecutive decoupled runs with the same value passing through $v$ during these $q$ rounds is $O(\log q)$.
\end{itemize}
Let $t_{\text{truth}}$ denote the number of rounds from the time in which $v$ first forms an opinion until the truth frontier reaches $v$. These statements together imply that for any $q=1,2,\ldots,t_{\text{truth}}$, the agent will not see the same value for more than $O(q^{1/2} (\log q)^{3/2})$ consecutive rounds during the first $q$ rounds of the process since $v$ formed an opinion. Thus, the agent's strategy is very simple: suppose that $q \in \N$ rounds have passed since $v$ first formed an opinion, and maintain a counter of the number of rounds since $v$ last changed its opinion. If the value of the counter is more than some $\Theta(q^{1/2} (\log q)^{3/2})$, then end the game. 
\end{proof}

\section{Unreliability of Newly Formed Opinions}
\label{sec:unreliability}
In this section we show that opinions in our process are unreliable, despite the fast convergence and the use of an error correction mechanism. 
Specifically, for graphs in which the neighborhoods around a vertex grow sufficiently quickly in terms of volume, the fraction of nodes holding the correct opinion among all nodes with a non-$\none$ opinion is $1/2 \pm o(1)$. 

\subsection{Proofs in expectation}
\label{sec:unrel_in_expect}
We start by proving the last statement in expectation over the randomness of the process. We state and prove this shortly, but the main idea is as follows. Consider any particular branch in the tree. With high probability, the truth frontier of the decoupled process advances at a rate of $b \pm o(1)$ along this branch with high probability (Lemmas \ref{lem:lb_truth_frontier}, \ref{lem:run_branch_path} and Theorem \ref{thm:ub_truth_frontier}). Conditioned on this event, the opinion of any non-$\none$ node that is further down the branch is uniformly distributed in $\{ 1, -1\}$. Given the growth in volume of the graph, most nodes are close to the far end of their branch, which proves the result.

We next formalize this reasoning, starting with a couple of definitions. The first definition, of the volume growth of a graph (or a tree), refers to growth patterns in the sizes of balls around a node $v$ in the graph (or tree).
\begin{definition}[Volume growth]
Let $G = (V, E)$ be a locally finite graph, possibly directed, and let $v \in V$. The ball of radius $\rho$ around $v$ in $G$ is
$$B_{\rho,G}(v) = \{ u \in V : d_G(v,u) \leq \rho\},$$
where $d_G(v,u)$ is the length of the shortest (directed) path from $v$ to $u$ in $G$.

For a rooted tree $T$ and a node $v$ in the tree, we consider all edges of $T$ as directed outward; the definition of $B_{\rho,T}(v)$ is then given by setting $G = T$ above and only considering directed paths along the tree.

The \emph{volume sequence} $V_G(v)$ of $G$ around $v$ is the sequence of non-negative integers
$$
|B_{0,G}(v)|, \ |B_{1,G}(v)|, \ |B_{2,G}(v)|, \ldots
$$ 
where we write $V_G(v, \rho) = |B_{\rho, G}(v)|$.
\end{definition}

Observe that for an undirected graph $G$, root $r$ and spanning tree $T$ of $G$ rooted at $r$ as generated by our process, it holds that $B_{\rho, G}(r) = B_{\rho, T}(r)$ for any ball of radius $\rho$. In particular, the volume sequence of $G$ and $T$ around $r$ are identical. Note also that $B_{\rho, T}(v) \subseteq B_{\rho, G}(v)$ with equality if and only if $v = r$. 

The second definition, of opinion bias, is closely related to the proportion of nodes holding the correct opinion among all opinionated (non-$\none$) nodes. For example, a bias of $1$ means that all nodes are correct; a bias of $0$ means that the aggregated opinion is uncorrelated with the truth.

\begin{definition}[Opinion bias]
\label{def:opinion_bias}
Consider any of the information spread processes,
with parameters $a$ and $b$ over a graph $G = (V, E)$ with root $r$. Let $f_t(v)$ denote the opinion of node $v$ at time $t$. The \emph{opinion bias} of the process at time $t$ is defined as
$$
\lambda_{G,r}(t) = \frac{|\{v \in V : f_t(v) = 1\} - \{v \in V : f_t(v) = -1\}|}{\{v \in V : f_t(v) \neq \none\}}.
$$
\end{definition}
Note that the opinion bias is equivalently the mean of non-$\none$ opinions across a single instance of the process. We stress that this \emph{not} an expectation over the randomness of the process, but rather a quantity associated with a single instance.

Our main theorem connects the expected value of the opinion bias at a certain time (over the randomness of the process) to the volume growth of the graph with respect to the root.

\begin{theorem}[Expected bias towards correct opinion]
\label{thm:unreliability_expectation}
Fix $a \in (0,1/2)$ and $b \in (0,1)$. The expected bias of the original spread process with parameters $(G, r, T, a, b)$ at time $t$ satisfies
\begin{equation*}
\left(1-e^{-t^{\Theta(1)}}\right) \cdot \frac{ V_G\left(r, bt - \Theta\left(t^{1-C}\right)\right)}{V_G(r, t)}
\leq
\mathbb{E}[\lambda_{G, r}(t)]
\leq 
\frac{ V_G\left(r, bt + \Theta\left(t^{1-C}\right)\right)}{V_G(r, t)} + e^{-t^{\Theta(1)}}
\end{equation*}
where $C > 0$ is an absolute constant. 
\end{theorem}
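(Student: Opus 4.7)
The plan is to exploit the decoupled representation to reduce $\E[f_t(v)]$ to a single probability, and then bound it using the branchwise truth-frontier estimate of Corollary~\ref{coro:truth_frontier}. Two preliminary observations make this clean. First, since information propagates deterministically at rate one along the BFS tree, $\{v : f_t(v) \neq \none\} = B_{t,G}(r)$ at time $t$, so the denominator in $\lambda_{G,r}(t)$ equals $V_G(r,t)$ and linearity of expectation reduces the problem to controlling $\E[f_t(v)]$ per vertex. Second, Lemma~\ref{lem:equivalence_models} couples the original process with the decoupled one with $\alpha=2a$, $\beta=b$, giving $f_t(v) = Z_{\origin_t(v)}$.

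Using the independence of $Z_1, Z_2, \dots$ from $(\barg, \barCOUNT, \barorigin)$, conditioning on $\origin_t(v)$ yields $\E[f_t(v) \mid \origin_t(v)] = \mathbf{1}[\origin_t(v) = 0]$, and hence
\[
\E[f_t(v)] \;=\; \Pr[\origin_t(v) = 0] \;=\; 2\Pr[f_t(v) = 1] - 1,
\]
the second equality using $f_t(v) \in \{-1,+1\}$ for $v \in B_{t,G}(r)$. The first form feeds the upper bound and the second feeds the lower bound.

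For the upper bound, the decoupled truth frontier is dominated by the original one, so $\{\origin_t(v) = 0\}$ implies that the original truth frontier has passed $v$; by Corollary~\ref{coro:truth_frontier} on the unique branch through $v$, this event has probability at most $e^{-t^{\Theta(1)}}$ whenever $d_G(r,v) \geq bt + \Theta(t^{1-C})$. Splitting the sum at distance $bt + \Theta(t^{1-C})$ and bounding the closer vertices trivially by $1$ produces $\E[\lambda_{G,r}(t)] \leq V_G(r, bt + \Theta(t^{1-C}))/V_G(r,t) + e^{-t^{\Theta(1)}}$. For the lower bound, Corollary~\ref{coro:truth_frontier} also guarantees that for $d_G(r,v) \leq bt - \Theta(t^{1-C})$ the original truth frontier has already reached $v$ with probability $\geq 1 - e^{-t^{\Theta(1)}}$, which forces $f_t(v) = 1$ with at least that probability; the identity $\Pr[\origin_t(v) = 0] = 2\Pr[f_t(v) = 1] - 1$ then yields $\E[f_t(v)] \geq 1 - e^{-t^{\Theta(1)}}$ uniformly over such $v$, and restricting the sum to $B_{bt - \Theta(t^{1-C}), G}(r)$ gives the lower bound.

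The main subtlety I anticipate is justifying the per-vertex identity $\E[f_t(v)] = \Pr[\origin_t(v) = 0]$ in a general tree, where the $\origin$-processes along different branches are highly dependent. This is not a real obstacle, however: the identity only uses independence of $(Z_\ell)_{\ell \geq 1}$ from $\barorigin$, which is built directly into Definition~\ref{def:rooted_tree_decoupled} and does not require any branchwise independence. The volume-growth hypotheses then enter only through elementary counting on $B_{t,G}(r)$.
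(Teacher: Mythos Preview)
Your proposal is correct and follows essentially the same approach as the paper: pass to the decoupled model, control $\E[f_t(v)]$ per vertex via the branchwise truth-frontier bound (Corollary~\ref{coro:truth_frontier}), and split the sum according to whether $d_G(r,v)$ lies below $bt - \Theta(t^{1-C})$ or above $bt + \Theta(t^{1-C})$. Your per-vertex identity $\E[f_t(v)] = \Pr[\origin_t(v)=0]$ is a slightly cleaner packaging of what the paper does via its case analysis on $I_v$ (and in particular gives the nonnegativity $\E[f_t(v)]\ge 0$ needed when restricting the sum for the lower bound), but the underlying argument is the same.
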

For convenience, when $G$ or $T$ are clear from context, we denote by $d(r, v)$ the distance from $r$ to $v$ in $G$ or $T$ (recall that these distances are equal as $T$ is a BFS tree of $G$ rooted at $r$).
\begin{proof}
As usual, we consider the decoupled process over $T$ with $\alpha = 2a$ and $\beta = b$ (see Lemma \ref{lem:equivalence_models}).
let $v \in V$ be any non-root node, and consider the branch from $r$ to $v$ in $T$.
By Lemmas \ref{lem:lb_truth_frontier} and \ref{lem:run_branch_path} and Theorem \ref{thm:ub_truth_frontier}, the truth frontier along the branch is with probability $1-e^{-t^{\Theta(1)}}$ in location $\min \{bt \pm O(t^{1-c}), d(r,v)\}$ for some absolute constant $c > 0$. 
Define for any $v \in B_{t,G}(r)$ (i.e., for any node with a non-$\none$ opinion) an indicator random variable $I_v = I_v(t)$ which equals one if the opinion of $v$ at time $t$ is $f_t(v) = 1$, and zero if $f_t(v) = -1$. We consider the different cases: when $d(r,v)$ is substantially smaller than the typical progress of the truth frontier, when $d(r,v)$ is much larger, and the in-between case.
\begin{itemize}
\item If $d(r,v) < bt - \Theta(t^{1-c})$, then the probability that the truth frontier has reached $v$ is very high, at $1 - e^{-t^{\Theta(1)}}$. In this case, 
$$\E[I_v(t)] = \Pr(I_v(t) = 1) \geq 1 \cdot \Pr\left(\text{truth frontier reached $v$}\right) =
1 - e^{-t^{\Theta(1)}}.$$
\item If $d(r, v) > bt + \Theta(t^{1-c})$, then the probability that the truth frontier reached $v$ is very small, at $\tau = e^{-t^{\Theta(1)}}$. When this event does not hold, the value of $v$ is uniformly random in $\{-1, 1\}$ (see Definition \ref{def:rooted_tree_decoupled}). Thus, 
$$
\E[I_v(t)] = 1 \cdot \tau + \frac{1}{2} \cdot (1-\tau) = \frac{1}{2} + e^{-t^{\Theta(1)}} \geq \frac{1}{2}.
$$
\item Otherwise, $d(r, v) \in bt \pm \Theta(t^{1-c})$. Here $1/2 \leq \E[I_v] \leq 1$, where the first inequality holds similarly to above: if the truth frontier has not reached a node $v$, then it is uniformly distributed in $\{-1, 1\}$. 
\end{itemize}
To complete the proof we may use the linearity of expectation, using the fact that
$$\E[\lambda_{G, r}(t)] = \frac{\sum_{v \in B_t(r)} \E[I_v(t)]}{2V_G(r, t)} - \frac{1}{2}.$$
It thus remains to give upper and lower bounds for $\sum_{v \in B_t(r)} \E[I_v(t)]$.

For the lower bound, note that there are $V_G(r, bt-\Theta(t^{1-c}))$ nodes satisfying the condition of the first bullet; they contribute $(1-e^{-t^{\Theta(1)}}) \cdot V_G(r, bt-\Theta(t^{1-c}))$ to the sum. All other nodes in $B_r(t)$ contribute to the sum at least $1/2$, and the proof follows.

For the upper bound, trivially every node satisfying the conditions of either the first or the third bullet contributes at most one to the numerator. There are up to $V_G(r, bt+\Theta(t^{1-c}))$ such nodes. Meanwhile, the contribution of every node satisfying the second condition is bounded by $\frac{1}{2} + e^{-t^{\Theta(1)}}$. This completes the proof.
\end{proof}

\paragraph{Polynomial versus super-polynomial growth.}
Ideally, one would hope that to have a positive bias which is as close to one as possible in all parts of the process, as this would mean that the ``average'' node is more likely than not to hold the correct opinion. But is indeed the case for graphs of interest? our results indicate that the answer is negative.

For the path $P_n$, our prior results show that the opinion bias at time $t$ is $b \pm o(1)$ both with high probability and in expectation. What can we say about other graphs? in particular, for what graphs (and at what times) is the opinion bias typically $o(1)$? when is it typically close $1$ (meaning the opinions are reliable at any time throughout the process)?

Theorem \ref{thm:unreliability_expectation} provides a rather clean characterization of the behavior of the opinion bias \emph{in expectation}. However, 
as we shall see later, the behavior in expectation does not give the full picture; there exist graphs in which the opinion bias may deviate substantially from the expectation.

In any case, we now address the above questions, focusing on the \emph{expected} opinion bias as a function of the graph structure.

\begin{remark}[Volume growth and finite graphs]
When talking about graphs with any sort of volume growth, say polynomial or exponential, it is convenient to think of the graph as an infinite or very large finite object. In finite graphs, the growth rate eventually converges to $1$, and is of order $1+o(1)$ already after a number of rounds that roughly equals the graph diameter divided by $b$ (this is true regardless of the choice of root $r$).

Thus, in finite graphs, the focus is more so on the earlier phases of the process, where very fast volume growth is possible (and in fact common for many theoretical graph models and real world graphs). In later parts of the process, the growth rate decreases, and consequently Theorem \ref{thm:unreliability_expectation}. This is unsurprising, as in later stages, typically all nodes converge to the correct opinion and so the bias converges to one.
\end{remark}

\paragraph{The story in the polynomial case.}
Consider any graph $G$ and root $r$ where $V_G(r, t) = \Theta(t^k)$ for a fixed $k > 0$, i.e., a polynomial growth. The discrete grid in $d$ dimensions is the most illustrative example (with $k=d$). Here, the volume growth satisfies
$$
V_G(r, bt \pm t^{1-\Theta(1)}) = \Theta(b^k) V_G(r, t),
$$
and so the expected opinion bias is $\Theta(b^k)$. In particular, if $V_G(r, t) = (c + o(1)) t^k$ for some absolute constant $c$ (which is the case for most graphs of interest in this regime, including the $d$-dimensional grid), then the expected opinion bias at time $t$ is $(1 \pm o(1)) b^k$. In all cases, the $o(1)$ term tends to zero as $t \to \infty$.

That is, the expected opinion bias for graphs with polynomial growth is bounded away from zero and one, and decays exponentially as a function of $k$.

\paragraph{The super-polynomial case.}
Many classes of graphs of interest, such as expanders and small world models, exhibit a super-polynomial growth. That is, these graphs have a volume growth of $V_G(r,t) = t^{\omega(1)}$, where the $\omega(1)$ goes to infinity as $t \to \infty$. In this case, 
$$
V_G(r, bt \pm t^{1-\Theta(1)}) = b^{\omega(1)} V_G(r, t),
$$
and so by Theorem \ref{thm:unreliability_expectation} the expected bias, of the form $b^{\omega(1)}$, tends to zero as $t \to \infty$. In other words, in graphs with super polynomial volume growth, the opinion of an ``average'' node at a certain time is nearly uncorrelated to the correct opinion.

\subsection{Concentration results and high probability proofs}
While Theorem \ref{thm:unreliability_expectation} holds in expectation for every graph $G$, there are examples of graphs where the opinion bias is not well-concentrated around its expected value. Consider for example a binary tree $T$ of depth $d$, containing $2^{d+1} - 1$ vertices; now add another $2^{d+2}$ vertices and connect them all to the same leaf $w$ in $T$, to form a star of this size. 
Observe the process at time $t=d+1$. The volume growth of the process is exponential, and by the results of the previous subsection, the expected opinion bias is $o(1)$. But is it also $o(1)$ with high probability? the answer is negative.

At time $t=d+1$, the additional star vertices all form their initial opinion, which with probability $1-a$ equals $f_d(w)$ and with probability $a$ is $-f_d(w)$. Therefore, the opinion bias among these nodes is either $2a-1$ or $1-2a$. From concentration results that we shall prove soon (Theorem~\ref{thm:concentration_super_poly} below), the other nodes at this time have with high probability opinion bias of $o(1)$. Therefore, the bias over all nodes is either $a-1/2+o(1)$ or $1/2-a+o(1)$, and in any case bounded away from zero assuming $a \neq 1/2$. 

\paragraph{A second moment approach.}

In order to prove concentration results for the opinion bias we employ a second moment approach on the decoupled process. The idea is that a pair of nodes $u,v$ that are not likely to appear in the same (decoupled) run at time $t$ have nearly uncorrelated opinions at this time, as their opinions are independent from each other conditioned on $u,v$ being in different runs. The following useful lemma provides sufficient conditions for $u,v$ to appear (with very high probability) in different runs at time $t$, thus allowing us to prove that correlations are weak between many pairs $u,v$.
\begin{lemma}
\label{lem:prob_same_run}
Fix $0 < \alpha, \beta < 1$ and $c > 0$. Consider the decoupled tree process $\dcpl$ with parameters $(G,r,T,\alpha, \beta)$ at some time $t > 0$, and let $u,v$ be a pair of nodes in $G$. Furthermore, let $w$ be the least common ancestor of $u,v$, denote by $D$ its distance from $r$ in $T$ and let $\delta = \max\{d_T(w, u), d_T(w, v)\}$. If 
$$\delta > \beta(t - D) + \Omega\left((t-D)^{0.5+c}\right) \ ,$$
then the probability that $u$ and $v$ are in the same run at time $t$ is at most $e^{-(t-D)^{\Theta(1)}}$.
\end{lemma}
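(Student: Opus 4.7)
The plan is to reduce the event to a concentration statement for the progress of an ``early'' frontier past time $D$. First, if $u$ and $v$ belong to the same decoupled run $R_\ell$ at time $t$, its creation node $x$ must be a common ancestor of $u$ and $v$, hence lies on the path from $r$ to $w$. In particular, $x$ lies on the branch from $r$ to $u$ at distance $d_T(r,x)\le D$, so in the branch process the creation time satisfies $s_\ell = d_T(r,x) \le D$ (since along a branch the creation position of a run equals its creation time). Without loss of generality assume $\delta = d_T(w,u)$, and also $\delta \le t-D$ (otherwise $u$ has no opinion at time $t$ and the event has probability zero).

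By Lemma~\ref{lem:run_branch_path}, the run structure along the branch from $r$ to $u$ has the same distribution as the decoupled process on the path, so from here on I reason in the path setting. Let $\ell^*$ denote the largest run index created by time $D$, equivalently the index of the rumor frontier at time $D$; then $\sigma_{\ell^*}(D) = D$. The event above implies that $u$, at position $D+\delta$, lies in a run of index at most $\ell^*$, i.e.\
\[
\Delta_{\ell^*}(D,t) \;=\; \sigma_{\ell^*}(t) - D \;\ge\; \delta \;>\; \beta(t-D) + \Omega\bigl((t-D)^{1/2+c}\bigr).
\]
Conditioning on the process up to time $D$ freezes $\ell^*$ and $\sigma_{\ell^*}(D)$, so Lemma~\ref{lem:progress_of_k_frontiers} applies with $s=D$, $k=\ell^*$, and $\mu = \beta(t-D)$; taking the deviation parameter $\delta' := (t-D)^{c}$ yields
\[
\Delta_{\ell^*}(D,t) \;\le\; \beta(t-D) + O\bigl((t-D)^{1/2+c}\bigr)
\]
with probability at least $1 - 2\exp\bigl(-(t-D)^{2c}/3\bigr)$. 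Combined with the previous display, this bounds the probability that $u$ and $v$ share a run by $e^{-(t-D)^{\Theta(1)}}$.

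The main subtlety is that $\ell^*$ is itself a random index, so Lemma~\ref{lem:progress_of_k_frontiers}, which is stated for a fixed $k$, does not apply off-the-shelf; conditioning on the process up to time $D$ makes $\ell^*$ (and $\sigma_{\ell^*}(D)$) deterministic and resolves this. A minor technical point is that $\ell^*$ is precisely the rumor-frontier index at time $D$, triggering the weaker form of Lemma~\ref{lem:progress_of_k_frontiers} with the $O(1/\alpha)$ mean correction and the additive $+1$ slack in the tail bound; both are easily absorbed into the $O((t-D)^{1/2+c})$ term once $t-D$ is large enough.
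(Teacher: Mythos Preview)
Your proof is correct and follows essentially the same route as the paper: reduce to the branch through $u$, let $k$ (your $\ell^*$) be the rumor-frontier index at time $D$, observe that $u$ and $v$ sharing a run forces $\sigma_k(t)\ge D+\delta$, and bound this via Lemma~\ref{lem:progress_of_k_frontiers}. Your write-up is in fact more careful than the paper's in spelling out the common-ancestor argument, the conditioning needed to handle the randomness of $\ell^*$, and the choice of deviation parameter $\delta'=(t-D)^c$ (the paper's stated exponent $c-0.5$ appears to be a typo).
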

\begin{proof}
Without loss of generality suppose that $\delta = d_T(w, u)$.
Let $k$ denote the index of the rumor frontier at $w$, when it initially formed an opinion at time $D$.
It suffices to prove that under the conditions of the lemma, as the $k$-frontier progresses along the branch containing $w$ and $u$, its probability to actually reach $u$ is very small. This however follows immediately from Lemma \ref{lem:progress_of_k_frontiers} applied with parameters $t^{(\ref{lem:progress_of_k_frontiers})} = t$ and $s^{(\ref{lem:progress_of_k_frontiers})} = D$, and $\delta^{(\ref{lem:progress_of_k_frontiers})} = C_{\alpha, \beta} (t-D)^{c-0.5}$ for a suitable constant $C_{\alpha, \beta} > 0$.
\end{proof}

With the above lemma in hand, we next show that the condition that most pairs of nodes are in different runs at a time $t$ is sufficient for concentration of the opinion bias $\lambda_{G,r}(t)$.
(Recall the notion of $\origin_t(v)$ from Definition \ref{def:rooted_tree_decoupled}, the index of the decoupled run of $v$.) 

\begin{lemma}
\label{lem:prob_origin_second_moment}
Consider the decoupled process $\dcpl$ with parameters $(G,r,T,\alpha, \beta)$ at time $t$. If
\begin{equation}
\label{eqn:origins}
\Pr_{(u,v) \in B_{t, G}(r)^2} \left( \origin_t(u) = \origin_t(v) \right) = o(1)
\end{equation}
where $u,v$ are uniformly and independently picked from the ball $B_t(r)$ of radius $t$ around $r$,
then the opinion bias $\lambda_{G,r}(t)$ is with probability $1-o(1)$ within $\pm o(1)$ of its expectation. Furthermore, the statement still holds if we replace all occurrences of the $o(1)$ term with suitable $e^{-t^{\Theta(1)}}$ terms.
\end{lemma}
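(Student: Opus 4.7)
The plan is a second-moment (Chebyshev) argument on the sum $X := \sum_{v \in B_{t,G}(r)} f_t(v)$, using the collision assumption \eqref{eqn:origins} to control pairwise covariances. The proof splits naturally into three steps.

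First, I would observe that $\lambda_{G,r}(t) = X/N$ where $N := V_G(r,t)$ is a \emph{deterministic} normalization. Because $T$ is a BFS tree and the ``Before/Initial opinion'' rules advance the rumor frontier by exactly one level per round, the set $\{v : f_t(v) \neq \none\}$ is precisely $B_{t,G}(r)$. I would then move to the decoupled process (justified by Lemma~\ref{lem:equivalence_models}) and use the representation $f_t(v) = Z_{\origin_t(v)}$.

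The crux is a clean covariance formula. Inspecting Definition~\ref{def:rooted_tree_decoupled}, the origin variables $\{\origin_t(v)\}$ are determined entirely by the coin flips that select which branch of each update rule fires; the actual labels $Z_i$ enter only \emph{after} an origin has been assigned. Hence $\barorigin$ is independent of $\barZ$. Since $Z_0 = 1$ while $Z_i$ for $i \geq 1$ are i.i.d.\ uniform on $\{-1,+1\}$, conditioning on $\barorigin$ gives
\[
\E[Z_{\origin_t(u)} Z_{\origin_t(v)} \mid \barorigin] = \mathbb{1}[\origin_t(u) = \origin_t(v)], \qquad \E[Z_{\origin_t(u)} \mid \barorigin] = \mathbb{1}[\origin_t(u) = 0],
\]
so $\E[f_t(u) f_t(v)] = \Pr[\origin_t(u) = \origin_t(v)]$ and $\E[f_t(u)] = \Pr[\origin_t(u) = 0]$. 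In particular $\mathrm{Cov}(f_t(u), f_t(v)) \leq \Pr[\origin_t(u) = \origin_t(v)]$, and summing over all ordered pairs in $B_{t,G}(r)^2$ yields $\mathrm{Var}(X) \leq N^2 p$, where $p$ is precisely the quantity in~\eqref{eqn:origins}.

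Finally, Chebyshev's inequality applied to $\lambda_{G,r}(t) = X/N$ gives $\Pr[|\lambda_{G,r}(t) - \E \lambda_{G,r}(t)| \geq \eps] \leq p/\eps^2$. Under the $o(1)$ hypothesis, the choice $\eps = p^{1/3}$ makes both $\eps$ and the failure probability $o(1)$. Under a hypothesis of the form $p = e^{-t^{\Omega(1)}}$, the choice $\eps = e^{-t^{\Omega(1)}/3}$ yields both quantities of the form $e^{-t^{\Theta(1)}}$, proving the ``furthermore'' clause. The only genuinely subtle point is the independence $\barorigin \perp \barZ$, which needs the explicit inspection of the update rule above; the rest is a textbook second-moment calculation and presents no real obstacle.
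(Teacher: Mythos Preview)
Your proof is correct and follows essentially the same second-moment/Chebyshev approach as the paper's own argument. The only cosmetic difference is that the paper works with the indicators $I_t(v) = \mathbb{1}[f_t(v)=1]$ rather than the $\pm 1$-valued $f_t(v)$ directly; your explicit use of the independence $\barorigin \perp \barZ$ to derive $\mathrm{Cov}(f_t(u),f_t(v)) \leq \Pr[\origin_t(u)=\origin_t(v)]$ is in fact a bit cleaner than the paper's conditional-covariance phrasing, but the underlying idea and the final Chebyshev step are identical.
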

\begin{proof}
The variance of $\lambda_{G,r}(t)$ is
$$
\text{Var}\left[\lambda_{G,r}(t)\right] = \frac{4}{(V_G(r, t))^2} \sum_{(u,v) \in B_t(r) \times B_t(r)}  \text{Cov}\left[I_t(u), I_t(v)\right],
$$
where by definition we have 
\begin{equation}
\label{eqn:cov}
\text{Cov}\left[I_t(u), I_t(v)\right] 
= \left(\Pr(I_t(u) = 1 | I_t(v) = 1) - \Pr(I_t(u) = 1)\right) \cdot \Pr(I_t(v) = 1).
\end{equation}
Without loss of generality, by symmetry we may assume that $\origin_t(u) \geq \origin_t(v)$ for $u,v$ in the last expression. Conditioning on
$\origin_t(u) \neq \origin_t(v)$ (which by \eqref{eqn:origins} holds with probability $1-o(1)$) we know that $$\Pr(I_t(u) = 1 | I_t(v) = 1) = \Pr(I_t(u) = 1) = 1/2,$$
and so the covariance in \eqref{eqn:cov} is zero in this case. In any other case, the covariance is bounded by one. Therefore, the variance is bounded by a constant times the probability in \eqref{eqn:origins}, and so it is $o(1)$ (and $e^{-t^{\Theta(1)}}$ if the $o(1)$ is are of this form). The proof follows by Chebyshev inequality.
\end{proof}

We can now apply the above two lemmas to graph classes of interest. The first result is for graphs (and trees) with super-polynomial volume growth. The main idea is that a couple of nodes are likely to be in the same run at time $t$ only if they are both contained in some ball of radius up to $\approx bt$ in the tree $T$ generated in the process. When the volume growth is large enough, every such ball is substantially smaller than the radius-$t$ ball around the root (i.e., the collection of all non-$\none$ opinions).

\begin{theorem}[Concentration for super-polynomial volume growth]
\label{thm:concentration_super_poly}
Fix $a \in (0,1/2)$ and $b \in (0,1)$ and
consider the original process with parameters $(G,r,T,a,b)$. Let $t \in \N$ and suppose  
that the following two conditions hold for constants $0 < c \leq C$ and $0 < c' < 1-b$ and a function $w \colon \N \to \N$.
\begin{itemize}
    \item The function $\zeta(x) := \frac{w(x \cdot (b+c'))}{w(x)}$ converges to zero as $x \to \infty$.
    \item $V_T(v, x) \leq C \cdot x^{w(x)}$ for every $v \in B_{t,G}(r)$ and $x \leq t - d_G(r, v)$.
    \item $V_G(r, x) \geq c \cdot x^{w(x)}$ for every $x \leq t$. 
\end{itemize}
Then the opinion bias $\lambda_{G,r}(t)$ is $o(1)$ with probability $1-o(1)$, where the $o(1)$ terms tend to zero as $t \to \infty$. Furthermore, if we replace the $C \cdot x^{w(x)}$ and $c \cdot x^{w(x)}$ upper and lower bounds with expressions of the form $C e^{x^{\Omega(1)}}$ and $c e^{x^{\Omega(1)}}$ respectively (and ignore the first bullet), then the $o(1)$ terms are of the form $e^{-t^{\Theta(1)}}$. 
\end{theorem}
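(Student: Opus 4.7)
The plan is to combine Theorem~\ref{thm:unreliability_expectation} with the second-moment concentration criterion of Lemma~\ref{lem:prob_origin_second_moment}, working throughout with the equivalent decoupled process (Lemma~\ref{lem:equivalence_models}, $\alpha = 2a$, $\beta = b$). It therefore suffices to prove (a) $\mathbb{E}[\lambda_{G,r}(t)] = o(1)$ and (b) that the origin-collision probability $P := \Pr_{u,v \in B_{t,G}(r)}(\origin_t(u) = \origin_t(v))$ is $o(1)$. Both claims become $e^{-t^{\Theta(1)}}$ bounds in the exponential-growth regime by exactly the same argument.

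For (a), fix $c'>0$ as in the hypothesis and note that $bt + O(t^{1-C}) \leq (b+c')t$ for $t$ sufficiently large. Plugging the second and third bullets into the upper bound of Theorem~\ref{thm:unreliability_expectation} gives
$$
\mathbb{E}[\lambda_{G,r}(t)] \;\leq\; \frac{C}{c} \cdot \frac{\bigl((b+c')t\bigr)^{\omega((b+c')t)}}{t^{\omega(t)}} \;+\; e^{-t^{\Theta(1)}}.
$$
Taking logarithms, the leading term is $\omega((b+c')t)\log((b+c')t) - \omega(t)\log t$. The first bullet forces $\omega((b+c')t)/\omega(t) \to 0$, so $\omega(t)\log t$ dominates (and the remaining contribution $\omega((b+c')t)\log(b+c')$ is moreover negative), driving the ratio to zero. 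In the exponential-growth variant the same computation with $c\exp(x^{\Omega(1)})$ and $C\exp(x^{\Omega(1)})$ bounds yields $\exp\bigl((b+c')^{\Omega(1)}t^{\Omega(1)} - t^{\Omega(1)}\bigr) = e^{-t^{\Theta(1)}}$ directly, without needing the first bullet.

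For (b), fix a pair $u,v \in B_{t,G}(r)$, let $w$ be its LCA in $T$, and set $D = d_T(r,w)$ and $\delta = \max\{d_T(w,u), d_T(w,v)\}$. Lemma~\ref{lem:prob_same_run} shows that whenever $\delta > b(t-D) + \Omega\bigl((t-D)^{0.5+c}\bigr)$ the pair contributes at most $e^{-(t-D)^{\Theta(1)}}$ to $P$; summing dyadically over $t-D$ gives a combined ``far-pair'' contribution of $e^{-t^{\Theta(1)}}$. The remaining ``close'' pairs satisfy $u,v \in B_{\rho_D, T}(w)$ with $\rho_D := b(t-D) + O\bigl((t-D)^{0.5+c}\bigr)$. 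For a fixed $u$, each close $v$ must lie in $B_{\rho_{D_{w'}}, T}(w')$ for one of the at most $t$ ancestors $w'$ of $u$; by the second bullet, each such ball has size at most $C\cdot\rho_{D_{w'}}^{\omega(\rho_{D_{w'}})} \leq C\bigl((b+c')t\bigr)^{\omega((b+c')t)}$. Summing over $u \in B_{t,G}(r)$ and dividing by $V_G(r,t)^2 \geq (c\cdot t^{\omega(t)})^2$ yields
$$
P \;\leq\; \frac{C}{c}\cdot t \cdot \frac{\bigl((b+c')t\bigr)^{\omega((b+c')t)}}{t^{\omega(t)}} \;+\; e^{-t^{\Theta(1)}},
$$
which is precisely the ratio from (a) up to the harmless factor of $t$ coming from the ancestor sum. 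The first bullet absorbs this factor into the super-polynomial separation between $\omega((b+c')t)$ and $\omega(t)$, so $P = o(1)$; in the exponential regime the corresponding separation is already exponential in $t^{\Omega(1)}$, so $P \leq e^{-t^{\Theta(1)}}$. Plugging (b) into Lemma~\ref{lem:prob_origin_second_moment} delivers the required concentration around the expectation.

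The main obstacle I expect is the close-pair counting in step (b): showing that the pairs failing the quantitative hypothesis of Lemma~\ref{lem:prob_same_run} contribute a $o(1)$ fraction of $B_{t,G}(r)^2$. Because the LCA of such a pair can sit anywhere along the branch from $r$ to $u$, the natural bound sums ball volumes $V_T(w', \rho_{D_{w'}})$ over all ancestors $w'$ of a fixed $u$; the second bullet is exactly what makes each such ball small, and the extra factor of $t$ from the ancestor sum has to be swallowed by the super-polynomial (or exponential) gap between $\omega((b+c')t)$ and $\omega(t)$ guaranteed by the first bullet (or its exponential analogue). A secondary technical point is the dyadic summation over $t-D$ for the far pairs, which requires verifying that $e^{-(t-D)^{\Theta(1)}}$ decays fast enough to remain summable once multiplied by the number of ancestors at depth $D$.
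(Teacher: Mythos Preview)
Your approach is the paper's: reduce via Lemma~\ref{lem:prob_origin_second_moment} to bounding the origin-collision probability, then invoke Lemma~\ref{lem:prob_same_run}. You also correctly include the expectation bound (your part (a)) via Theorem~\ref{thm:unreliability_expectation}; the paper's own proof leaves this step implicit, relying on the earlier discussion in Section~\ref{sec:unrel_in_expect}.

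The one place where your argument needs repair is the far-pair step. You split at the $D$-dependent threshold of Lemma~\ref{lem:prob_same_run} and assert that a dyadic sum over $t-D$ yields $e^{-t^{\Theta(1)}}$. But far pairs in your sense can have $t-D$ as small as a constant: the condition $\delta > b(t-D) + \Omega((t-D)^{0.5+c})$ together with $\delta \le t-D$ only forces $t-D$ above some fixed $m_0$, and for such pairs Lemma~\ref{lem:prob_same_run} gives merely an $O(1)$ bound. So the dyadic sum does not collapse to $e^{-t^{\Theta(1)}}$ (indeed, in the general super-polynomial regime it need only be $o(1)$). The paper sidesteps this cleanly by first conditioning on both $u$ and $v$ having depth exceeding $(b+c')t$ --- a $1-o(1)$ event by the volume hypotheses --- and then dichotomizing at a \emph{fixed} radius $\rho = (b+c'')t$ with $c'' = c'/2$. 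Any pair with $\delta > \rho$ then automatically has $t-D \ge \delta > \rho = \Theta(t)$, so Lemma~\ref{lem:prob_same_run} gives $e^{-t^{\Theta(1)}}$ uniformly; and for any pair with $\delta \le \rho$, assuming without loss of generality $d(r,u) \ge d(r,v)$, one checks that $v$ must lie in $B_{\rho,T}(w)$ for the \emph{single} ancestor $w$ of $u$ at distance exactly $\rho$ from $u$ (using $d(w^*,v) \le d(w^*,u)$ for the LCA $w^*$). This eliminates your sum over ancestors as well. Your extra factor of $t$ is harmless given the super-polynomial separation, but the single-ancestor trick is what makes the far-pair issue disappear.
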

Note that the second bullet in the theorem can be trivially replaced with a similar condition about $G$ (thus making the statement independent of the tree $T$). The (stronger) condition is that $V_G(v,x) \leq C \cdot x^{\omega(x)}$ for every $v \in B_{t,G}(r)$ and $x \leq t - d_G(r,v)$.
\begin{proof}[Proof of Theorem \ref{thm:concentration_super_poly}]
Consider the process at time $t$. From Lemma \ref{lem:prob_origin_second_moment}, it suffices to prove \eqref{eqn:origins} for the graph $G$. Pick $u, v$ uniformly at random from $B_{t,G}(r)$. With probability $1-o(1)$, both $u$ and $v$ are of distance more than $(b+c')t$ from $r$. Condition on this event and set $c'' = c'/2$. From Lemma \ref{lem:prob_same_run}, the probability that $u,v$ are in the same run is $o(1)$ unless they both belong to some ball $B = B_{\rho,T}(w)$ where $\rho = (b+c'')t$. We show that the probability for this event to hold is $o(1)$ as well. Indeed, without loss of generality assume that $d_G(r,u) \geq d_G(r,v)$ and take $w$ as the unique node on the path from $r$ to $u$ of distance exactly $\rho$ from $u$. Then the last event holds only if $v \in B_{\rho, T}(w)$. The volume of this ball is bounded by 
$$
C \cdot \rho ^{\omega(\rho)} \leq
C \cdot t^{\omega\left((b + c'') \cdot t\right)} = o(t^{\omega(t)}),
$$
which proves the last claim.
The final statement of the theorem follows by replacing all the $o(1)$ terms in the proof with suitable $e^{-t^{\Theta(1)}}$ terms.
\end{proof}

The second result is for graphs with polynomial growth. The argument is quite different from the super-polynomial case. The main idea is that pairs of nodes $u,v$ that are very different in terms of their distance from the root -- specifically, the difference is much larger than $\approx \sqrt{t}$ -- are unlikely to appear in the same run at time $t$. The amount of nodes with $v$ that \emph{do not} satisfy this with respect to $u$ is sublinear in the total volume.
\begin{theorem}
[Concentration for polynomial volume growth]\label{thm:concentration_poly}
Fix $a \in (0,1/2)$ and $b \in (0,1)$ and
consider the original process with parameters $(G,r,T,a,b)$ at some time $t > 0$. Let $c > 0$ be an arbitrary small constant. Suppose that for every $s < s' \leq t$ with $s' - s \leq t^{0.5 + c}$, it holds that $V_G(r, s') - V_G(r,s) = o( V_G(r,t))$, where the $o(\cdot)$ term tends to zero as $t \to \infty$. Then the opinion bias at time $t$, $\lambda_{G,r}(t)$, is within $\pm o(1)$ of its expectation with probability $1-o(1)$.
\end{theorem}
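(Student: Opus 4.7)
The plan is to apply Lemma~\ref{lem:prob_origin_second_moment}, which reduces the theorem to showing that when $u,v$ are chosen independently and uniformly from $B_{t,G}(r)$, we have $\Pr(\origin_t(u)=\origin_t(v))=o(1)$ in the coupled decoupled process (parameters $\alpha=2a$, $\beta=b$). The key idea is: with high probability, two nodes sharing a decoupled run at time $t$ lie at nearly identical distances from the root, $|d_G(r,u)-d_G(r,v)|=\tilde O(\sqrt t)$, and the polynomial-growth hypothesis then forces an $o(1)$ fraction of random pairs into such a narrow distance window.

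I would first set up a high-probability good event $E$. Fix $\omega(t)=\log^{2}t$, and let $E$ be the intersection of: (a) every gap $\tau_{\ell+1}-\tau_\ell$ between consecutive global run creation times is at most $\omega(t)$, for all $\ell$ with $\tau_\ell\le t$; and (b) for every branch $B$ of $T$ reaching $B_{t,G}(r)$ and every run index $k\le O(t)$, one has both $|\tau_k^{B}-\tau_k|\le\omega(t)$ and $\sigma_k^{B}(t)=bt+(1-b)\tau_k^{B}\pm\omega(t)\sqrt t$, where $\tau_k^{B}$ is the global time at which $R_k$ stopped being the rumor frontier of branch $B$. Since each global inter-arrival gap is dominated by a geometric with parameter $2a$, event (a) fails with probability at most $t(1-2a)^{\omega(t)}=o(1)$. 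Event (b) follows from Lemma~\ref{lem:progress_of_k_frontiers} applied branch-by-branch via the coupling of Lemma~\ref{lem:run_branch_path}, with deviation parameter $\delta=\omega(t)$, together with a union bound over the $O(t^{2})$ relevant $(B,k)$ pairs; each individual failure has probability $\exp(-\Theta(\omega(t)^{2}))=t^{-\Theta(\log t)}$, so the union bound remains $o(1)$.

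Condition on $E$ and suppose $\origin_t(u)=\origin_t(v)=\ell$, with $B_u,B_v$ the branches through $u,v$. Since $R_\ell$ intersected with any branch is a connected interval (properties of decoupled runs), $d_G(r,u)\in[\sigma_{\ell-1}^{B_u}(t)+1,\sigma_\ell^{B_u}(t)]$ and analogously for $v$. Substituting the estimates from (b) and bounding the per-branch offsets $|\tau_k^{B}-\tau_k|$ by $\omega(t)$ gives
\begin{equation*}
|d_G(r,u)-d_G(r,v)|\;\le\;(1-b)(\tau_\ell-\tau_{\ell-1})+O(\omega(t)\sqrt t)\;=\;O(\sqrt t\,\log^{2}t),
\end{equation*}
which is $o(t^{0.5+c})$ for the fixed $c>0$ in the hypothesis.

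The polynomial-growth hypothesis implies that for every $d^{\ast}$, the number of $v\in B_{t,G}(r)$ with $|d_G(r,v)-d^{\ast}|\le t^{0.5+c}$ is $o(V_G(r,t))$, so for any fixed $u$ we have $\Pr_{v}(|d_G(r,u)-d_G(r,v)|\le t^{0.5+c})=o(1)$. Combining,
\begin{equation*}
\Pr(\origin_t(u)=\origin_t(v))\;\le\;\Pr(E^{c})+\Pr_{v}(|d_G(r,u)-d_G(r,v)|\le t^{0.5+c})\;=\;o(1),
\end{equation*}
and Lemma~\ref{lem:prob_origin_second_moment} yields the theorem. The main obstacle is the branch-by-branch bookkeeping: one must simultaneously control both the global creation times $\tau_\ell$ and their branch-specific offsets $\tau_k^{B}$ over every branch meeting $B_{t,G}(r)$. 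This is managed by the polylogarithmic choice of $\omega(t)$, which is slow enough that the union bound remains $o(1)$, yet large enough that the $\omega(t)\sqrt t$ additive slack comfortably fits inside the $t^{0.5+c}$ window.
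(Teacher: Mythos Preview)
Your high-level plan coincides with the paper's: reduce via Lemma~\ref{lem:prob_origin_second_moment} and show that if two nodes share a decoupled run at time $t$ they must sit at nearly equal distance from the root, so the shell-volume hypothesis makes such pairs rare. Where you diverge is in trying to certify this through a single global good event $E$ that controls every branch and every run simultaneously. The paper instead works pair-by-pair: for a fixed $(u,v)$ with $|d_G(r,u)-d_G(r,v)|>t^{0.5+c}$ it passes to their least common ancestor and applies Lemma~\ref{lem:progress_of_k_frontiers} only along the two relevant branches, then averages over pairs.

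This difference is not cosmetic; your global union bound does not close. The theorem's hypothesis does not cap $V_G(r,t)$ at a polynomial---growth such as $\exp(t^{0.4})$ satisfies the shell condition for small $c$---so the count ``$O(t^{2})$ relevant $(B,k)$ pairs'' is off (there can be order $V_G(r,t)$ branches and as many global run indices), and even after correcting the count your per-event failure probability $e^{-\Theta(\log^{4}t)}$ is swamped by the number of branches. There is also a global/local index slip: in a tree many runs can be born in the same round, so the globally consecutive $R_{\ell-1}$ need not touch $B_u$ at all, and the quantity $\tau_\ell-\tau_{\ell-1}$ in your final display is not what governs $\sigma_{\ell-1}^{B_u}(t)$. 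What matters is the \emph{branch-local} predecessor of $R_\ell$ on each of $B_u,B_v$, whose creation-time gap to $\tau_\ell$ is a separate $\Geom(\alpha)$ variable per branch---again forcing a union bound over all branches if you insist on a global event. Both issues disappear in the pair-by-pair route: for fixed $(u,v)$ only two branches and $O(t)$ local frontiers are in play, and the key structural fact that the creator of the shared run $R_\ell$ lies on both $B_u$ and $B_v$ (hence is an ancestor of their LCA) pins $\sigma_\ell^{B_u}(t)$ and $\sigma_\ell^{B_v}(t)$ to the common value $bt+(1-b)\tau_\ell\pm O(t^{0.5+c/2})$.
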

The interpretation for finite graph is analogous to that discussed after Theorem \ref{thm:concentration_super_poly}.
\begin{proof}
From Lemma \ref{lem:prob_origin_second_moment} it suffices to prove that the probability of a uniform pair of nodes $u$ and $v$ to be in the same run at time $t$ is $o(1)$. From the statement of the lemma, the probability that $|d_G(r,u) - d_G(r,v)| \leq t^{0.5 +c}$ is $o(1)$. We now show that if this does not hold, then the probability of $u$ and $v$ to be in the same run is $o(1)$. 
Let $w$ be the least common ancestor of $u,v$ in the tree $T$ on which the process is run, and apply Lemma \ref{lem:progress_of_k_frontiers} with parameters $s=0$, $t$, and 
\end{proof}

A central example of a graph with polynomial volume growth is the $d$-dimensional (hyper-)grid. Here, a simple calculation shows that $V_G(r,t) \approx 2^d \binom{t+d}{d}$ for an infinite grid (or a finite grid when the distance of $r$ from any of its faces is at least $t$). It is straightforward to verify that the grid satisfies the conditions of Theorem \ref{thm:concentration_poly}. In combination with Theorem \ref{thm:unreliability_expectation}, we conclude that the opinion bias is $b^d + o(1)$ with probability $1-o(1)$, where both $o(1)$ terms tend to zero as $t \to \infty$.

\section{A slower variant of our model} \label{sec:slower} 
We briefly discuss here what happens in our model when instead of a choosing a BFS tree and correcting with respect to a fixed parent, each node, other than the root, error corrects with respect to a randomly chosen opinionated neighbor at each round. It is clear that this model will also converge to all nodes holding the correct opinion as this is the only absorbing state of the Markov chain defined by this process. 
However, we note that this model is exponentially slower to converge compared to our model. We first formally define the model by defining the update rule for all nodes different from the root. 

\begin{itemize}
\item Given the values of $f_{t-1}(v)$ for all $v \in V$, the update rule defining $f_{t}(\cdot)$ is as follows. 
\begin{itemize}
    \item \textbf{Before opinion formation.} If $f_{t-1}(v) = \none$ and the set $N_{t-1}(v) := \{u \in V: (u, v) \in E, f_{t-1}(u) \neq \none \}$ is empty, then $f_t(v) = \none$.
    \item \textbf{Initial opinion.} If $f_{t-1}(v) = \none$ and $N_{t-1}(v)$ is non-empty, we pick $p(v)$ uniformly at random from $N_{t-1}(v)$, and set
     \[
        f_t(v) = \begin{cases}
        f_{t-1}(p(v))  &\text{w.p. } 1-a,
        \\
        -f_{t-1}(p(v)) &\text{w.p. } a.
        \end{cases}
     \]
    \item \textbf{Opinion updates.} If $f_{t-1}(v) \neq \none$ then 
    we pick $p(v)$ uniformly at random from $N_{t-1}(v)$, and set
        \[
        f_t(v) = \begin{cases}
        f_{t-1}(p(v))  &\text{w.p. } b,
        \\
        f_{t-1}(v)) &\text{w.p. } 1-b.
        \end{cases}
        \]
\end{itemize}
\end{itemize} 

The model defined above is a simple variant of the voter model~\cite{HolleyLiggett:75}. Indeed using similar techniques to the one used there, one can prove that the convergence time of the process is determined by the volume of the graph instead of its diameter. 

\begin{lemma} \label{lem:voter_complete}
Consider the process above on the complete graph on $n$ vertices.
Then with probability $\Omega(a)$ the process converges in time $\Omega(n)$.
\end{lemma}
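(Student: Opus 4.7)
The plan is to track $X_t$, the number of non-root vertices holding opinion $-1$ at round $t$, and show that $X_t \geq 1$ persists for $\Omega(n)$ rounds with probability $\Omega(a)$. In round $1$ every non-root vertex has only $r$ in its opinionated-neighbor set, so picks opinion $-1$ independently with probability $a$; hence $X_1 \sim \Bin(n-1,a)$. Since $a$ is a fixed constant, a Chernoff bound yields $\Pr[X_1 \geq an/4] \geq 1/2$ for all sufficiently large $n$, and I condition on this event henceforth, reducing matters to lower-bounding by $\Omega(a)$ the conditional probability of non-convergence by some time $\Omega(n)$.

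Next I compute the drift and variance of the Markov chain $(X_t)_{t \geq 1}$. A direct count using the variant's update rule on $K_n$ shows that, independently across non-root vertices, each wrong vertex switches to $+1$ with probability $b(n-X_t)/(n-1)$ and each correct non-root vertex switches to $-1$ with probability $bX_t/(n-1)$. Setting $\mu := 1 - b/(n-1)$ gives
\begin{equation*}
\E[X_{t+1} \mid X_t] = \mu X_t, \qquad \text{Var}[X_{t+1} \mid X_t] \leq 2 b X_t,
\end{equation*}
so $\E[X_t \mid X_1] = X_1 \mu^{t-1}$. Iterating the induced bound $\E[X_{t+1}^2 \mid X_t] \leq \mu^2 X_t^2 + 2bX_t$ then yields
\begin{equation*}
\E[X_t^2 \mid X_1] \;\leq\; X_1^2 \mu^{2(t-1)} \;+\; 2 n X_1 \mu^{t-2}.
\end{equation*}

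I then apply the Paley--Zygmund inequality to the nonnegative integer variable $X_t$ to obtain
\begin{equation*}
\Pr[X_t \geq 1 \mid X_1] \;\geq\; \frac{\E[X_t \mid X_1]^2}{\E[X_t^2 \mid X_1]} \;\geq\; \frac{1}{1 + 2n\mu^{-t}/X_1}.
\end{equation*}
Taking $t := \lfloor n/(10b) \rfloor$, one checks $\mu^{-t} \leq \exp(2bt/(n-1)) \leq 2$ for $n$ large, so on the event $X_1 \geq an/4$ the denominator above is at most $1 + 16/a$, and the conditional probability that $X_t \geq 1$ is therefore at least $a/(a+16) = \Omega(a)$. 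Combining with $\Pr[X_1 \geq an/4] \geq 1/2$ gives the desired $\Omega(a)$ lower bound on the probability that the process has not yet converged by round $\lfloor n/(10b) \rfloor = \Omega(n)$.

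The main obstacle will be this anti-concentration issue: the per-step conditional variance $\Theta(bX_t)$ is the same order as $X_t$ itself, so $X_t$ does not concentrate relatively around its exponentially decaying mean and naive Azuma--Hoeffding type arguments do not suffice. Paley--Zygmund sidesteps this by demanding only that $\E[X_t^2]$ be within a constant factor of $\E[X_t]^2$; the moment recursion above shows this holds precisely on the $t = O(n/b)$ timescale that the lemma requires.
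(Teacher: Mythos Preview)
Your argument is correct. The paper's proof, however, is considerably shorter and more elementary. It tracks $S_t = \sum_v f_t(v)$ (so $S_t = n - 2X_t$ in your notation) and uses only a crude drift bound: since each non-root node, upon updating, has conditional mean equal to the current average opinion, one gets $\E[S_{t+1}\mid S_t] \le S_t + 1$, the ``$+1$'' coming entirely from the root. Iterating gives $\E[S_{an/4}] \le (1-a/2)n$, and then the trivial bound $n - S_t \le 2n$ converts this into $\Pr[S_{an/4} \ne n] \ge \E[n - S_{an/4}]/(2n) = \Omega(a)$ via Markov's inequality. No second-moment computation or Paley--Zygmund is needed.

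What your approach buys is a sharper time scale: you obtain non-convergence up to time $\Theta(n/b)$, whereas the paper's drift-plus-Markov argument gives $\Theta(an)$. Both are $\Omega(n)$ for fixed $a,b$, but yours isolates the genuine relaxation time $\sim n/b$ of the mean, and your moment recursion would also extend to statements about $\Pr[X_t \ge cX_1 \mu^{t-1}]$ that the paper's argument cannot reach. The cost is the extra bookkeeping for the variance recursion. Your closing remark about why Azuma fails and Paley--Zygmund succeeds is accurate and a nice diagnosis, though as the paper's proof shows, one can avoid the anti-concentration issue altogether by working with the complementary quantity $n - S_t$ and using only its first moment and boundedness.
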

Of course, in our model on the complete graph the process converges in $O(\log n)$ steps. 

\begin{proof}
For simplicity we will assume that when a node corrects it may choose at random to correct by its own value. This corresponds to a $O(1/n)$ change in the value of $b$.

Let $X_t = \sum f_t(v) $. We are interested in the first time $T$ 
where $X_T = n$. 
Clearly $X_1 \sim 1 + Bin(n-1,1-a)$ which is concentrated around $(1-a)n$. 

Moverover, since when a non-root node updates its expected value is $X_t$ we get that: 
\[
E[X_{t+1} | X_t] \leq 1 + X_t.
\]
Therefore
\[
E[X_{a n / 4}] \leq a n / 4 + E[X_1] \leq (1-a/2)n
\]
and thus
\[
P[X_{a n / 4} \neq 1] \geq \frac{1}{n}E[n - X_{a n / 4}] \geq 0.4 a
\]
which completes the proof.
\end{proof}

We can prove a similar proof for other graphs such as the full binary tree.

\begin{lemma} \label{lem:voter_binary} 
Consider the process above on the full binary tree with $n$ vertices.
Then with probability $\Omega(a)$ the process converges in time $\Omega(n)$.
\end{lemma}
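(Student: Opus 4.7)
The plan is to mirror the argument of Lemma~\ref{lem:voter_complete}, replacing the uniform sum of opinions by a \emph{degree-weighted} sum that is better suited to the non-regular binary tree. Set $X_t := \sum_{v \in V} \deg(v)\, f_t(v)$, treating $\none$ as $0$, and let $M := \sum_v \deg(v) = 2(n-1)$, so that $X_t = M$ exactly when the process has converged. On the full binary tree the root has degree $2$, internal non-root vertices have degree $3$, and leaves have degree $1$. A direct computation from the update rule shows that, \emph{once every non-root vertex has an opinion},
\[
\mathbb{E}[X_{t+1} \mid f_t] \;=\; X_t + b\!\left(\deg(r) - \sum_{u \sim r} f_t(u)\right) \;\leq\; X_t + 2b\deg(r) \;=\; X_t + 4b,
\]
so in the post-spread regime $X_t$ is a submartingale whose expected increments are bounded by a constant; this plays the role of the drift bound $\mathbb{E}[X_{t+1}\mid X_t] \leq 1 + X_t$ in the complete-graph proof.

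To seed the drift argument I must first show the process is far from convergence at the end of the spread phase. Let $T_0 := \lceil \log_2 n \rceil$, the first time at which every vertex has an opinion, and condition on the event $A = \{f_1(u_L) = -1\}$ that the left child $u_L$ of the root forms the wrong opinion at time $1$; thus $\Pr(A) = a$. Conditioned on $A$, the initial opinions in the subtree rooted at $u_L$ form a tree-indexed Markov chain in which each edge copies the parent's value with probability $1-a$, so any descendant $w$ of $u_L$ at tree-distance $d$ acquires the initial opinion $-1$ with probability $\tfrac{1}{2}(1 + (1-2a)^{d}) \geq \tfrac{1}{2}$. Crucially, leaves receive their opinions exactly at $T_0$ and have had no opportunity to error-correct, so their values at $T_0$ coincide with their initial opinions. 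The subtree of $u_L$ contains $\Theta(n)$ leaves; letting $W$ count the wrong ones, linearity of expectation gives $\mathbb{E}[W \mid A] = \Omega(n)$, and the reverse Markov inequality (using the trivial bound $W \leq n$) yields $\Pr(W \geq c_0 n \mid A) = \Omega(1)$ for some absolute constant $c_0 > 0$. In particular $\Pr(W \geq c_0 n) = \Omega(a)$.

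On the event $A \cap \{W \geq c_0 n\}$, each wrong leaf (degree $1$) contributes $-1$ rather than $+1$ to $X_{T_0}$, so $M - X_{T_0} \geq 2W \geq 2c_0 n$. Iterating the post-spread drift bound for $T := T_0 + c' n$ rounds with $c' := c_0/(4b)$ gives $\mathbb{E}[M - X_T \mid A,\, W \geq c_0 n] \geq 2c_0 n - 4bc' n = c_0 n$. Since $M - X_t$ is a non-negative even integer bounded by $2M \leq 4n$, applying reverse Markov once more shows that $M - X_T \geq 2$---equivalently, at least one vertex still holds the wrong opinion---with probability $\Omega(1)$ conditional on $A \cap \{W \geq c_0 n\}$. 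Combining the three probabilities, the process fails to converge by time $T = \Omega(n)$ with total probability $\Omega(a)$, as required.

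The main obstacle is the spread-phase estimate in the second paragraph. During the $\log n$ rounds of spreading, opinion formation and error correction act concurrently, and the wrong/correct statuses of different leaves are correlated through shared ancestors, which would complicate any straightforward second-moment approach. Two observations keep things clean: the leaves get their initial opinions at the \emph{last} spread step, so error correction has not yet acted on them; and the reverse Markov inequality applied to the non-negative variables $W$ and $M - X_T$ removes the need for any variance calculation.
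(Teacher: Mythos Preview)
Your argument is correct and follows the same overall strategy as the paper: both use the degree-weighted potential $X_t=\sum_v \deg(v)f_t(v)$, derive a constant per-step drift bound (you get $\mathbb{E}[X_{t+1}\mid f_t]\le X_t+2b\deg(r)$, the paper writes the slightly looser $X_t+(1+b)\deg(r)$), and then finish with a reverse-Markov step exactly as in Lemma~\ref{lem:voter_complete}.

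The one substantive difference is in how the initial deficit $M-X_{T_0}=\Omega(n)$ is obtained. The paper invokes the Kesten--Stigum results on the broadcast process to assert that, with high probability, at most $3/4$ of the leaves are correct at time $h$. Your route---condition on the $\Pr=a$ event that one child of the root is wrong, observe that each leaf in that subtree then has initial opinion $-1$ with probability $\tfrac12(1+(1-2a)^d)\ge\tfrac12$, and apply reverse Markov to the leaf count---is more elementary and fully self-contained, and it lands directly on the $\Omega(a)$ probability claimed in the lemma. The paper's citation, taken at face value, would in fact give a deficit $\Omega(n)$ with probability $1-o(1)$ (hence a stronger conclusion than stated), but it relies on an external result; your approach trades that potential strengthening for a cleaner, dependency-free argument that exactly matches the stated bound.
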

Again in our model on the complete graph the process converges in $O(\log n)$ steps.

\begin{proof}
The proof is similar. Let $h = \log_2(n)$ denote the height of the tree.
At time $h$ all nodes have value $0$ or $1$ and the nodes at the last level performed no error correction. 
The results of~\cite{KestenStigum:66,KestenStigum:67} imply that with high probability at time $h$ at most $3/4$ of the leaves take the 
correct value. 

Let $X(t) = \sum_{v} d(v) f_v(t)$. 
Then denoting the root by $\rho$
\begin{eqnarray*}
E[X(t+1) | f(t)] &=& \sum_v d(v) E[f_v(t+1) | f(t)] = 
d(\rho) + \sum_{v \neq \rho} d(v) (1-b) f_v(t) + b \sum_{w \sim v} f_w(t)
\\ &\leq& (1+b) d(\rho) +  \sum_v d(v) f_v(t) = X(t) + (1+b) d (\rho) 
\end{eqnarray*}
This implies that $X(t) - 2 d(\rho) t$ is a super-martingale (goes down in expectation) and the proof concludes as in Lemma~\ref{lem:voter_complete}. 
\end{proof}

We note that lemmas~\ref{lem:voter_complete} and~\ref{lem:voter_binary} 
can be strengthened to hold with probability $1-o(1)$ using second moment or a stronger form of concentration. 

\section{Information Spread with Online Tree Selection} 

\label{sec:c_less_1}
In this section we briefly discuss an alternate model where the tree $T$ is not specified in advance, but is a byproduct of the information spread process itself. A natural way of growing such a tree is to let
an unopinionated node $v$ look at all its neighbors in $G$, and among the subset of these that have an opinion at time $t$, pick one at random to be its parent. If none of the neighbors of $v$ has an opinion at time $t$, it does not pick a parent at time $t$. Once a node picks a parent, the choice is not changed in future steps. This leads to a natural way to pick $T$ randomly from $G$ and we refer to this as the {\em online} version.

As described above the tree chosen is a (random) BFS tree rooted at $r$. All results mentioned in previous sections continue to hold, as long as $T$ remains a BFS tree. However this equivalence fails to hold if we do not require that unopinionated nodes must form an opinion if some neighbor has an opinion. In this section we consider this model, where we have an 
an additional parameter $c$, so that any such node $v$ only forms an initial opinion with probability $c$, and otherwise remains unopinionated.
When $c=1$ we get the model from the previous paragraph which is close in spirit to our model from Definition \ref{def:model_general_Graphs}, while for $c < 1$ we get a more general model. (We note thus that to get new effects we really need {\em both} random growth of $T$ and $c<1$.)

We describe the model below and discuss it in the sequel.


\begin{definition}[Information spread with error correction: Online tree selection]
\label{def:model_c_less_1}
Fix three parameters $a,b,c \in [0,1]$.
Let $G = (V,E)$ be an undirected graph and let $r \in V$ denote some root vertex. Consider the following process with parameters $(G, r, a, b, c)$, proceeding in rounds. 

\begin{itemize}
\item For every round $t \geq 0$, each vertex $v \in V$ holds a label $f_t(v) \in \{+1, -1, \none\}$. As before, $f_0(r) = +1$ and $f_0(v) = \none$ for $v \neq r$.
\item For every $t \geq 0$ define a \emph{parent} function $p_t \colon V \to V \cup \{\none\}$, where $p_0(r) = r$ and $p_0(v) = \none$ for $v \neq r$. Parents never change once assigned: if $p_{t-1}(v) \neq \none$ then $p_{t}(v) = p_{t-1}(v)$.
\item Given the values of $f_{t-1}(v)$ and $p_{t-1}(v)$ for all nodes $v \in V$, the update rule for $f_t(\cdot)$ and $p_t(\cdot)$ is as follows. (The first and third bullets are essentially as in Definition \ref{def:model_general_Graphs}.)
\begin{itemize}
    \item \textbf{Before opinion formation.} If $f_{t-1}(v) = \none$ and the set $N_{t-1}(v) := \{u \in V: (u, v) \in E, f_{t-1}(u) \neq \none \}$ is empty, then $f_t(v) = \none$ and $p_t(v) = \none$. 

    \item \textbf{Picking parent and initial opinion.} If $f_{t-1}(v) = \none$ and $N_{t-1}(v)$ is non-empty, 
    we act as follows. Pick $w$ arbitrarily from $N_{t-1}(v)$ and set
     \[
     f_t(v) = \begin{cases}
        f_{t-1}(w)  &\text{w.p. } c\cdot (1-a),
        \\
        -f_{t-1}(w) &\text{w.p. } c \cdot a, \\
        \none & \text{w.p. } 1-c. \\
        \end{cases}
     \]
    In the first two cases (i.e., when $f_t(v) \ne \none$), we set $p_t(v) = w$ and in the final case we set $p_t(v) = \none$.
    \item \textbf{Opinion updates.} If $f_{t-1}(v) \neq \none$ then
        \[
        f_t(v) = \begin{cases}
        f_{t-1}(p_t(v))  &\text{w.p. } b,
        \\
        f_{t-1}(v) &\text{w.p. } 1-b.
        \end{cases}
        \]
\end{itemize}
\end{itemize}


Formally, the generalized spread model is given by the random variables $\{p_t(v)\}_{t \in \N, v \in V} \cup \{f_t(v)\}_{t \in \N, v \in V}$, with $p_t(v) \in V \cup \{\none\}$ and $f_t(v) \in \{-1,\none,+1\}$ for every $t,v$ generated as above.
\end{definition}

Which results carry over from our analysis of the offline case (which, as mentioned, also applies for the case $c=1$)? The speed of the rumor frontier now is $c$, while the speed of the truth frontier is $b$ as before, so the regime where one can hope to obtain similar results is when $b < c$.

\paragraph{A decoupled generalized process} As in the original process, the fact that runs can merge and therefore change their size rapidly seems difficult to analyze and it is better to work with a decoupled model. It is not hard to prove that a decoupled model similar to Definition \ref{def:rooted_tree_decoupled} is analogous to the generalized model in Definition \ref{def:model_c_less_1}. For brevity, here we only quickly mention the main modifications required to Definition \ref{def:rooted_tree_decoupled} in this case. 
\begin{itemize}
    \item The model receives three parameters $\alpha, \beta, \gamma \in [0,1]$. It also receives $G$ and the root $r$ as parameters (but no tree $T$).
    \item Picking a parent is done exactly as in Definition \ref{def:model_c_less_1}, with parameter $c$ replaced by $\gamma$. Specifically, all nodes except for $r$ start with no parent, and with probability $\gamma$ choose and start listening to a parent (which is fixed for the rest of the process) in any round where they are still unopinionated but have an opinionated neighbor.
    \item The update rule is the same as in Definition \ref{def:rooted_tree_decoupled}, except that the second bullet is carried only for unopinionated nodes that have just been linked to a parent in the current round.
\end{itemize}
As in Lemma \ref{lem:equivalence_models}, it is not hard to prove that this model is equivalent to that of Definition \ref{def:model_c_less_1} with the parameter setting $\alpha = 2a, \beta = b, \gamma = c$. As before, we are interested in the setting where $\alpha, \beta$ are constants bounded away from zero and one (which is equivalent to $a \in (0,1/2)$ and $b \in (0,1)$).

\paragraph{Over the path} Suppose first that the graph $G$ is the rooted path $\cP_n$ (where $n$ is finite or infinite). At any given round, the probability of the decoupled rumor frontier to progress by one is $\gamma$, while the probability of any intermediate frontier to progress is $\beta$. It is standard to show, then, that the rumor frontier at round $t$ is with probability $1-e^{-t^{\Theta(1)}}$ at location $\gamma t \pm O(t^{0.6})$. On the other hand, as soon as a certain frontier in the decoupled process becomes intermediate (i.e., separated from the rumor frontier), its behavior becomes as in the original process. In particular, the tight bounds on the truth frontier (Lemma \ref{lem:lb_truth_frontier}, Theorem \ref{thm:ub_truth_frontier}, and Corollary \ref{coro:truth_frontier}) still apply.

\paragraph{Convergence of individual opinion}
We shift our focus to the setting of a general graph $G$, and first mention which results carry over easily from the case $\gamma = 1$. starting with those results concerning when a node ``knows'' that it has the correct opinion (Theorem \ref{thm:changes_of_opinion_individual} and Corollary \ref{coro:when_to_stop}). Since these results only depend on what happens \emph{after} a node $v$ forms its initial opinion (at which time the path from $r$ to $v$ is fixed, and the progress of frontiers towards $v$ is always at rate $\beta$), they hold word for word in the general setting.

\paragraph{Tree structure in generalized model} Matters become more complicated when trying to prove results that crucially rely on the structure of the tree $T$ formed along the process. For example, write $d = d_G(r,v)$. In the path $\cP_n$, the node $v$ typically forms an initial opinion at round $\approx d / \gamma$ and the truth frontier arrives in $v$ at time $\approx d / \beta$. Do the same results still hold for general $G$? 

The answer turns out negative: While the above reasoning gives an upper bound for any $G$, it is not tight in general. Suppose for example that there are $C^d$ disjoint paths of length exactly $d$ from $r$ to $v$ in $G$, where $C = C(\alpha, \beta, \gamma)$ is a large enough constant. Consider the rumor frontier along each of them. For any such path $\cP$ in which the predecessor of $v$ is denoted $u_\cP$, with probability $c^{d-1}$ both the rumor and truth frontiers will reach $u_\cP$ in exactly $d-1$ rounds. For $C$ large enough, with probability $0.99$ this event will hold for at least one of the paths $\cP$. Conditioning on this happening, $v$ will typically receive an initial opinion at round $d+O(1)$.

The above example demonstrates that progress of frontiers in the general case depends not only on the distances of nodes from $r$ in $G$, but also on the profile of paths leading up to each of the nodes in $G$. We thus leave the following as an open question.
\begin{question}
What can we say about the typical structure of the communication tree $T$ formed in the generalized process, as a function of $G$? What parameters of the graph $G$ may lead the opinion bias to improve (increase) in the generalized process as compared to the case $c=1$?
\end{question}

\bibliographystyle{plain}
\bibliography{references,all,my}

\begin{thebibliography}{10}

\bibitem{Banerjee:92}
Abhijit~V. Banerjee.
\newblock A simple model of herd behavior.
\newblock {\em The Quarterly Journal of Economics}, 107(3):797--817, 1992.

\bibitem{BlRuZa:95}
P.~M. Bleher, J.~Ruiz, and V.~A. Zagrebnov.
\newblock On the purity of the limiting {G}ibbs state for the {I}sing model on
  the {B}ethe lattice.
\newblock {\em J. Statist. Phys.}, 79(1-2):473--482, 1995.

\bibitem{BCMR:06}
C.~Borgs, J.~Chayes, E.~Mossel, and S.~Roch.
\newblock The {K}esten-{S}tigum reconstruction bound is tight for roughly
  symmetric binary channels.
\newblock In {\em Proceedings of IEEE FOCS 2006}, pages 518--530, 2006.

\bibitem{EvKePeSc:00}
W.~S. Evans, C.~Kenyon, Yuval Y.~Peres, and L.~J. Schulman.
\newblock Broadcasting on trees and the {I}sing model.
\newblock {\em Ann. Appl. Probab.}, 10(2):410--433, 2000.

\bibitem{EvansSchulman:99}
W.~S. Evans and L.~J. Schulman.
\newblock Signal propagation and noisy circuits.
\newblock {\em IEEE Trans. Inform. Theory}, 45(7):2367--2373, 1999.

\bibitem{HJMR:19b}
Jan H\k{a}z\l{}a, Ali Jadbabaie, Elchanan Mossel, and M.~Amin Rahimian.
\newblock Reasoning in {B}ayesian opinion exchange networks is
  {$\mathsf{PSPACE}$}-hard.
\newblock In {\em Proceedings of the Thirty-Second Conference on Learning
  Theory (COLT)}, pages 1614--1648, 2019.

\bibitem{HolleyLiggett:75}
A.~Holley and Thomas~M. Liggett.
\newblock Ergodic theorems for weakly interacting infinite systems and the
  voter model.
\newblock {\em Ann. Probab.}, 3:643--663, 1975.

\bibitem{Ioffe:96b}
D.~Ioffe.
\newblock Extremality of the disordered state for the {I}sing model on general
  trees.
\newblock In {\em Trees (Versailles, 1995)}, volume~40 of {\em Progr. Probab.},
  pages 3--14. Birkh\"auser, Basel, 1996.

\bibitem{Ioffe:96a}
D.~Ioffe.
\newblock On the extremality of the disordered state for the {I}sing model on
  the {B}ethe lattice.
\newblock {\em Lett. Math. Phys.}, 37(2):137--143, 1996.

\bibitem{Jackson:08}
M.~O. Jackson.
\newblock {\em Social and Economic Networks}.
\newblock Princeton University Press, 2008.

\bibitem{KestenStigum:66}
H.~Kesten and B.~P. Stigum.
\newblock Additional limit theorems for indecomposable multidimensional
  {G}alton-{W}atson processes.
\newblock {\em Ann. Math. Statist.}, 37:1463--1481, 1966.

\bibitem{KestenStigum:67}
H.~Kesten and B.~P. Stigum.
\newblock Limit theorems for decomposable multi-dimensional {G}alton-{W}atson
  processes.
\newblock {\em J. Math. Anal. Appl.}, 17:309--338, 1967.

\bibitem{Liggett:85}
Thomas~M. Liggett.
\newblock {\em Interacting particle systems}, volume 276 of {\em Grundlehren
  der Mathematischen Wissenschaften [Fundamental Principles of Mathematical
  Sciences]}.
\newblock Springer-Verlag, New York, 1985.

\bibitem{MaMoPo:20}
Anuran Makur, Elchanan Mossel, and Yury Polyanskiy.
\newblock Broadcasting on random directed acyclic graphs.
\newblock {\em IEEE Information Theory}, 66(2):780--812, 2020.

\bibitem{MontanariSaberi:10}
Andrea Montanari and Amin Saberi.
\newblock The spread of innovations in social networks.
\newblock {\em Proceedings of the National Academy of Sciences},
  107(47):20196--20201, 2010.

\bibitem{MosselTamuz:17}
E.~Mossel and O.~Tamuz.
\newblock Opinion exchange dynamics.
\newblock {\em Probability Surveys}, 14:155--204, 2017.

\bibitem{Ritter:81}
Grant~A Ritter et~al.
\newblock Growth of random walks conditioned to stay positive.
\newblock {\em Annals of Probability}, 9(4):699--704, 1981.

\bibitem{ScheufeleKrause:19}
Dietram~A Scheufele and Nicole~M Krause.
\newblock Science audiences, misinformation, and fake news.
\newblock {\em Proceedings of the National Academy of Sciences},
  116(16):7662--7669, 2019.

\bibitem{SmithSorensen:00}
L.~Smith and P.~Sorensen.
\newblock Pathological outcomes of observational learning.
\newblock {\em Econometrica}, 68(2):371--398, 2000.

\bibitem{BiHiWe:98}
David~Hirshleifer Sushil~Bikhchandani and Ivo Welch.
\newblock Learning from the behavior of others: Conformity, fads, and
  informational cascades.
\newblock {\em The Journal of Economic Perspectives}, 12(3):151--170, 1998.

\bibitem{vonNeumann:56}
J.~von Neumann.
\newblock Probabilistic logics and the synthesis of reliable organisms from
  unreliable components.
\newblock In {\em Automata studies}, Annals of mathematics studies, no. 34,
  pages 43--98. Princeton University Press, Princeton, N. J., 1956.

\bibitem{Wylie:19}
Christopher Wylie.
\newblock {\em Mindf$*$ck: Cambridge Analytica and the Plot to Break America}.
\newblock Random House, New York, 2019.

\end{thebibliography}

\end{document}